\theoremstyle{plain}
\newtheorem{theorem}{Theorem}[section] % use [section] (or [chapter] in book class) to follow the section (chapter) number
\newtheorem{condition}[theorem]{Condition}
\newtheorem{lemma}[theorem]{Lemma}
\newtheorem{remark}[theorem]{Remark}
\def\BState{\State\hskip-\ALG@thistlm}
\pgfplotsset{compat=1.18}
\def\iid{\overset{\textnormal{iid}}{\sim}} % i.i.d. symbol
\let\dolarger\relsize} 
\def\dolarger#1{\larger[#1]}} 
\newcommand*\@@bigtimes[2]{\vphantom{\prod} 
  \vcenter{\hbox{\dolarger{4}$\m@th#1\mkern-2mu\times\mkern-2mu$}}} 
\newcommand*\bigtimes{\mathop{\mathpalette\@@bigtimes\relax}\displaylimits} 
\def\iid{\overset{\textnormal{iid}}{\sim}} % i.i.d. symbol
\def\N{\mathbb{N}}\def\R{\mathbb{R}}\def\1{\mathbbm{1}}
\def\Bcal{\mathcal{B}}\def\Ccal{\mathcal{C}}\def\Hcal{\mathcal{H}}\def\Ncal{\mathcal{N}}\def\Pcal{\mathcal{P}}\def\Rcal{\mathcal{R}}\def\Scal{\mathcal{S}}\def\Tcal{\mathcal{T}}\def\Wcal{\mathcal{W}}\def\Zcal{\mathcal{Z}}
\def\Enorm{\textnormal{E}}\def\Varnorm{\textnormal{Var}}
\def\Beta{\textnormal{Beta}}\def\Po{\textnormal{Po}}\def\vol{\textnormal{vol}}
\title{\bf A nonparametric Bayesian analysis of
independent and identically distributed
observations of covariate-driven Poisson
processes}
\author{Patric Dolmeta\thanks{
    M.G. has been partially supported by MUR, PRIN project 2022CLTYP4. The authors gratefully acknowledge the support from the ``de Castro" Statistics Initiative.}\hspace{.2cm}\\
    ESOMAS Department, University of Turin\\
    and \\
    Matteo Giordano \\
    ESOMAS Department, University of Turin}
\date{} % set the desired data within {}, leave empty to avoid displaying the date
\begin{document}

\maketitle

\abstract{
An important task in the statistical analysis of inhomogeneous point processes is to investigate the influence of covariates on the point-generating mechanism. In this article, we consider the nonparametric Bayesian approach, assuming that $n$ independent and identically distributed realizations of the point pattern and the covariate random field are available. We employ hierarchical prior distributions based on multi-bandwidth Gaussian processes, and prove that the resulting posterior distributions concentrate around the ground truth at optimal rate as $n\to\infty$, achieving automatic adaptation to the possibly anisotropic smoothness. Posterior inference is concretely implemented via a Metropolis-within-Gibbs Markov chain Monte Carlo algorithm that incorporates an ad-hoc sampling scheme to handle the functional component of the proposed nonparametric Bayesian model. Our theoretical results are supported by extensive numerical simulation studies. Further, we present an application to the analysis of a Canadian wildfire dataset.
}

\vspace{24pt}

\noindent\textbf{Keywords.} Anisotropic function; Cox process; Inhomogeneous Poisson process; Metropolis within Gibbs; Multi-bandwidth Gaussian process; Posterior contraction rate

%
%\clearpage

\tableofcontents

\section{Introduction}\label{Sec:Intro}

Inhomogeneous point processes are primary mathematical models to describe the distribution of events that take place randomly over space and time. In many applications, the occurrence of the events is determined, or heavily influenced, by covariates. It is then of interest to investigate the relationship between the points and the covariates. This can be mathematically formalized as an intensity estimation problem in the following way: Let $N$ be a point process over some Euclidean domain $\Wcal$, and for any $A\subseteq \Wcal$, denote by $N(A)$ the number of events within $A$. The (first-order) intensity function of $N$ is a map $\lambda:\Wcal\to [0,\infty)$ with the property that $\Enorm[N(A)] = \int_A\lambda(x)dx$, and $N$ is said to be inhomogeneous if $\lambda$ is non-constant. Additionally, let $Z=\{Z(x), \ x\in\Wcal\}$ be a multivariate field, with values in some subset $\Zcal\subseteq \R^d$, $d\in\N$, representing the measurements of the covariates at each location in the domain. The connection between $N$ and $Z$ is then customarily modeled by assuming that the intensity be driven by the covariates, namely that
\begin{equation}
	\label{Eq:BasicIntensity}
	\lambda(x) = \rho(Z(x)), 
	\qquad x\in\Wcal,
\end{equation}
for some unknown function $\rho:\Zcal\to[0,\infty)$. Note that this formulation allows for `purely spatial' effects by listing the `coordinates' of $x$ within the covariates $Z(x)$. The goal is to estimate $\rho$ from observations of $N$ and $Z$. Such covariate-based intensity estimation problems arise in a variety of scientific fields, including environmental statistics (e.g.~\cite{BGMMM20}), geology (e.g.~\cite{BCST12}) and ecology (e.g.~\cite{G08}), to mention a few. See Section \ref{Sec:RealData} for an application to the task of predicting the location of wildfires via meteorological covariates.

When the covariates are (all or partly) random, the resulting point process is termed `doubly stochastic'. In particular, we will focus on the case where the overarching point-generating mechanism is of Poisson type, whereby $N$ defines a Cox process, \cite{C55}. In this framework, covariate-based intensity estimation has been widely studied under parametric models for the function $\rho$ in \eqref{Eq:BasicIntensity}, both in the frequentist (e.g.~\cite{B78,D90,W07}) and Bayesian literature (e.g.~\cite{RMC09, YL11, ISR12}). Further see the monograph \cite{D14}, where many more references can be found. For instance, the celebrated log-Gaussian Cox model, \cite{MSW98}, postulates that $Z$ be a multivariate Gaussian process, and that $\lambda(x) = e^{\beta^TZ(x)}$ for some vector $\beta\in\R^d$.

In contrast, the nonparametric literature on the subject is considerably less developed. The first frequentist investigation in this context was by \cite{G08}, who constructed covariate-based kernel-type procedures. They derived asymptotic point-wise consistency results under the assumption that $Z$ is a stationary and ergodic random field, in an `increasing domain' asymptotic regime in which the volume of the observation window $\Wcal$ diverges and a single realization of $N$ and $Z$ over $\Wcal$ is observed. Similar estimators were later defined by \cite{BCST12} and by \cite{BGMMM20}, and were studied under related sampling schemes.

Nonparametric Bayesian methods for intensity estimation have so far been almost exclusively confined to non-covariate-driven point processes. Seminal methodological advances were provided by, among the others, \cite{L82,KG97,dMGK01,KS07,AMM09}, covering a variety of prior distributions, ranging from gamma processes-based ones, to beta process, kernel mixture, spline and Gaussian process priors, respectively. Building on the landmark developments in the theory of Bayesian nonparametrics from the early 2000s, \cite{GGvdV00}, several articles have investigated the asymptotic convergence properties of nonparametric Bayesian procedures in models without covariates. \cite{BSvZ15} developed the Hellinger testing approach for independent and identically distributed (i.i.d.)~observations of an inhomogeneous Poisson process over a fixed domain, and employed it to derive minimax optimal posterior contraction rates towards H\"older-smooth intensities for spline priors with uniform coefficients. These results were extended to Gaussian process priors by \cite{KvZ15} under similar i.i.d.~sampling schemes; see also \cite{GS13}. 
%,NM19,GL19}.  
Procedures with piecewise-constant priors were investigated by \cite{GvdMSS20}. Lastly, \cite{DRRS17} obtained optimal performance guarantees for general Aalen point processes under various types of smoothness and shape constraints.

To our knowledge, the only existing study of covariate-based nonparametric Bayesian intensity estimation is in the recent article by \cite{GKR25}, who derived optimal global and local rates for several classes of prior distributions, in an increasing domain asymptotic regime similar to the aforementioned one considered by \cite{G08}. See also the related contribution by \cite{DG25}. While large observation windows are common in spatial statistics, many natural applications with point processes are confined to fixed domains, and rather entail the availability of multiple observations of the event pattern and the covariates, each carrying individual information that needs to be effectively combined in order to achieve consistent estimates. See Section \ref{Sec:RealData} for a concrete example with yearly data. For this important scenario, the results and proof techniques of \cite{GKR25}, based on concentration inequalities for stationary and ergodic spatial random fields, do not apply, raising the question as to whether nonparametric Bayesian procedures can perform well also in i.i.d.~sampling schemes for covariate-driven point processes. In fact, this case appears to be mostly unexplored also in the frequentist literature, which has thus far primarily focused on settings with a single observation of $N$ and $Z$, cf.~\cite{G08,BCST12,BGMMM20}, despite interest in the joint analysis of multiple realizations having been raised since at least \cite{DLB91}. This gap represents the main motivation for our work, where we will provide methodological and theoretical advances for the nonparametric Bayesian approach to the problem.

%
%
%

%%%%%%%%%%%%%%%%%%%%%%%%%%%%%%%%%%%%%%%%%%
\subsection{Our contributions}

In this article, we develop the first nonparametric Bayesian analysis of i.i.d.~observations of covariate-driven Poisson processes. Our approach consists in modeling $\rho$ in \eqref{Eq:BasicIntensity} via a suitable prior distribution and then forming, via Bayes' theorem, the corresponding posterior, which encodes our updated belief about $\rho$, providing point estimates and uncertainty quantification. See \cite[Chapter 1]{GvdV17} for an overview on the nonparametric Bayesian paradigm.

For the specification of the prior, we employ `multi-bandwidth' Gaussian processes, obtained by scaling stationary covariance functions at different levels along distinct directions; see Section \ref{Subsec:Prior}. This construction is popular in machine learning, including e.g.~the widely used Automatic Relevance Determination (ARD) kernel, cf.~\cite[Chapter 5.1]{RW06}, offering desirable modeling flexibility for `anisotropic' functions whose variations in response to changes in different inputs may occur at distinct characteristic length-scales, or according to diverse smoothness levels; see Section \ref{Subsec:MainNotation} for precise definitions. Multi-bandwidth Gaussian processes were shown by \cite{BPD14}, in simpler statistical models, to be able to achieve optimal reconstructions over anisotropic function spaces.

In our main theoretical result, Theorem \ref{Theo:GPRates}, we derive optimal posterior contraction rates towards the (possibly anisotropic) true intensity function generating the data, in the asymptotic regime where the number of observed realizations of $N$ and $Z$ increases. The proofs are based on the Hellinger testing approach for i.i.d.~sampling schemes, which we specialize to the case of covariate-driven Poisson processes adapting ideas from \cite{BSvZ15} and \cite{KvZ15}, and which we then pursue for the proposed multi-bandwidth Gaussian process methods, over anisotropic function spaces. To achieve automatic adaptation to the smoothness of the intensity, which is typically unknown in practice, we employ a hierarchical procedure where we randomize the various hyper-parameters in the prior. In particular, we assign independent hyper-priors, modeling the 
length-scales in the covariance kernel of the underlying Gaussian process as i.i.d.~stochastic powers of gamma random variables. We note that this differs from the construction of \cite{BPD14}, which prescribes a-priori correlated length-scales.

A second contribution of this work is the exploration of the implementation aspects of the nonparametric Bayesian approach to covariate-based intensity estimation. In Section \ref{Subsec:Algorithm}, we devise a Markov chain Monte Carlo (MCMC) algorithm to approximately sample from the posterior distribution. This is of Metropolis-within-Gibbs type, alternating draws from the full conditional distributions of the various parameters. It incorporates recent developments from the literature for dimension-robust sampling in nonparametric Bayesian procedures based on Gaussian priors to handle the functional component of the considered statistical model.

To assess our methods in practice, we conducted extensive numerical simulations, presented in Section \ref{Sec:Simulations}. The empirical results are in close agreement with the theory, illustrating the ability of the proposed procedure to reconstruct the true intensity function in a variety of experimental setups. Moreover, the obtained performances were found to be competitive against a kernel-based alternative estimator. Lastly, in Section \ref{Sec:RealData}, we develop an application to a Canadian wildfire dataset containing yearly recordings of hotspots and meteorological conditions, where we observe that our approach leads to a desirable combination of the information across the observation period, while also managing to capture year-specific trends in the spatial distributions of the wildfires.

The rest of the paper is organized as follows: In Section \ref{Sec:GPMethods}, we describe the statistical problem and our approach in details, present our main theoretical results, and outline the MCMC sampler employed for concrete implementation. The numerical simulations and data analysis are presented in Sections \ref{Sec:Simulations} and \ref{Sec:RealData}, respectively. Section \ref{Sec:Discussion} contains a summary and a discussion of some related open problems. The proofs of all the results are deferred to the Supplementary Materials, where additional simulations and more details on the data analysis can also be found.

%
%
%

%%%%%%%%%%%%%%%%%%%%%%%%%%%%%%%%%%%%%%%%%%
\subsection{Main notation}\label{Subsec:MainNotation}

For positive integers $m\in\N$, we denote $m$-dimensional vectors of real numbers by $t=(t_1,\dots,t_m)\in\R^m$, intended as column vectors unless otherwise stated. We write $a\land b$ and $a\vee b$ for the minimum and maximum between $a,b\in\R$, respectively. We use the symbols $\lesssim$, $\gtrsim$ and $\simeq$ for one- and two-sided inequalities holding up to universal multiplicative constants, and $\propto$ to denote the proportionality of a function with respect to its arguments.

Given a measure space $(\Tcal,\mathfrak{T},\tau)$ and any $1\le p\le\infty$, let $L^p(\Tcal,\tau)$ be the Lebesgue space of real-valued $p$-integrable functions defined on $\Tcal$, equipped with norm $\|\cdot\|_{L^p(\Tcal,\tau)}$. When $\Tcal\subseteq\R^m$ and $\tau$ equals the Lebesgue measure $dt$, write shorthand $L^p(\Tcal,\tau) = L^p(\Tcal)$.

For $\Tcal\subseteq\R^m$, denote by $C(\Tcal)$ the space of continuous functions defined on $\Tcal$, equipped with the sup-norm. For $q\in(0,\infty)$, write $C^q(\Tcal)$ for the usual H\"older space of $\lfloor q\rfloor$-times differentiable functions whose $\lfloor q\rfloor^{\text{th}}$ derivative is $(q - \lfloor q\rfloor)$-H\"older continuous, and let $\|\cdot\|_{C^q(\Tcal)}$ be the norm of $C^q(\Tcal)$. Next, we define the family of anisotropic H\"older spaces, containing functions whose degree of smoothness may be different along distinct directions, cf.~\cite[Section 4.1.3]{BBM99}. For all $t\in\Tcal$ and $k=1,\dots,m$, let $\Scal_{t,k}:=\{s\in\R:(t_1,\dots,t_{k-1},s,t_{k+1},\dots,t_m)\in\Tcal\}$, and for any $f\in C(\Tcal)$, construct the univariate functions $f_{t,k}: s\mapsto f(t_1,\dots,t_{k-1},s,t_{k+1},\dots,t_m)$ defined on $\Scal_{t,k}$. For a vector $\alpha = (\alpha_1,\dots,\alpha_m)\in(0,\infty)^m$, let $C^\alpha(\Tcal)$ be the subset of all functions such that
$$
\max_{k=1,\dots,m}\sup_{t\in\Tcal}\|f_{t,k}\|_{C^{\alpha_k}(\Scal_{t,k})}<\infty.
$$
Note that the above definition recovers the traditional (isotropic) H\"older spaces if $\alpha_k = \alpha_h$ for all $h,k =1,\dots,m$. When there is no risk of confusion, we at times omit the dependence of the function spaces on the underlying domain, writing for example $L^p$ for $L^p(\Tcal)$.

%
%
%
%
%

%%%%%%%%%%%%%%%%%%%%%%%%%%%%%%%%%%%%%%%%%%
\section{Multi-bandwidth Gaussian process methods for covariate-based intensities}\label{Sec:GPMethods}

On a compact `observation window' $\Wcal\subset\R^D$, $D\in\N$, consider a $d$-dimensional `covariate' random field $Z:=\{Z(x), \ x\in\Wcal\}$ with values in some `covariate space' $\Zcal\subseteq\R^d$, $d\in\N$, and a stochastic point pattern $N:=\{X_1,\dots,X_K\}$ arising, conditionally given $Z$, as an inhomogeneous Poisson process with first-order intensity $\lambda_\rho (x) := \rho(Z(x))$, $x\in\Wcal$, for some unknown (measurable and bounded) function $\rho:\Zcal\to[0,\infty)$. In other words, $N$ is a Cox process, \cite{C55}, directed by the random measure $\lambda_\rho(x)dx$, and we have
\[
K|Z\sim \Po\Big(\int_{\Wcal} \lambda_\rho(x)dx  \Big),
\qquad X_1,\dots,X_K|Z,K\iid \frac{\lambda_\rho(x)dx}{\int_{\Wcal} \lambda_\rho(x)dx}.
\]

For some $n\in\N$, we assume that we observe $n$ i.i.d.~copies of the pair $(N,Z)$, denoted by $D^{(n)}:=\{(N^{(i)},Z^{(i)})\}_{i=1}^n$, where $N^{(i)}:=\{X^{(i)}_1,\dots,X^{(i)}_{K^{(i)}}\}$. We then seek to estimate $\rho$ from data $D^{(n)}$. Throughout, we denote by $P^{(n)}_\rho$ the law of $D^{(n)}$, by $E^{(n)}_\rho$ the expectation with respect to it, and write $P_\rho := P^{(1)}_\rho$, $E_\rho := E^{(1)}_\rho$.
The law $P^{(n)}_\rho$ is absolutely continuous with respect to the distribution $P_1$ of the standard Poisson case (where $\rho\equiv 1$), with likelihood
\begin{equation}\label{Eq:Likelihood}
L^{(n)}(\rho) = \prod_{i=1}^n p_\rho(N^{(i)},Z^{(i)}),
\qquad p_\rho(N,Z) = e^{\sum_{k=1}^K\log \rho(Z(X_k)) 
	- \int_{\Wcal} (\rho(Z(x))-1) dx},
	\end{equation}
	see e.g.~\cite[Theorem 1.3]{K98}.

	\begin{remark}[Repeated observations of covariates and points]\label{Rem:ConCov} Throughout, we are interested in the setting where multiple realizations of $Z$ and $N$ are available. For example, this is the case in our data analysis in Section \ref{Sec:RealData}, where we have access to yearly observations. Depending on the application, different measurement schemes may be relevant, such as ones where multiple point patterns are driven by shared (possibly deterministic) covariates. In other cases, a single realization of a covariate-driven point process with large intensity may be available, giving rise to a so-called `in-fill' asymptotics. It is not difficult to show that these settings are statistically equivalent to the increasing domain regime recently studied by \cite{GKR25}, where optimal posterior contraction rates for various nonparametric Bayesian procedures were obtained. See \cite{G08} for an earlier related reference. Here, we focus on i.i.d.~sampling schemes for covariate-driven Poisson processes, whose theoretical analysis requires different tools and techniques. 
	\end{remark}%%%%%%%%%%%%%%%%%%%%%%%%%%%%%%
	
	%
	
	\begin{comment}
\begin{remark}[\textcolor{red}{Continuous covariate observations}]\label{Rem:ContCov}
	The assumption that we have access to the covariate values $Z^{(i)}(x)$ at each location $x\in\Wcal$ is often realistic. See our data analysis in Section \ref{Sec:RealData}, and \cite{G08,BCST12,BGMMM20} for further examples. In other scenarios, the covariates may be only observed at some discrete locations within $\Wcal$. In this case, the spatial fields $Z^{(i)}$ are typically reconstructed via interpolation and, provided that the underlying grid is sufficiently fine, the numerical approximation error is disregarded in the subsequent statistical analysis.
\end{remark}%%%%%%%%%%%%%%%%%%%%%%%%%%%%%%
\end{comment}

%
%
%

%%%%%%%%%%%%%%%%%%%%%%%%%%%%%%%%%%%%%%%%%%
\subsection{The prior model}\label{Subsec:Prior}

We adopt the nonparametric Bayesian approach, modeling $\rho$ with a prior $\Pi$ based on Gaussian processes. To do so, we maintain the (harmless, cf.~Remark \ref{Rem:BoundCov}) assumption that the covariate space $\Zcal$ be bounded and convex and introduce the one-to-one parametrization
\begin{equation}
\label{Eq:Param}
\rho(z) = \rho^* \sigma( w(z)), \qquad z\in\Zcal,
\end{equation}
where $\rho^*>0$ is an upper bound for the values of the intensity, $w:\Zcal\to\R$ is some unknown function and $\sigma:\R\to[0,1]$ is a fixed, smooth and strictly increasing link function. Throughout, we employ the sigmoid link $\sigma(t) = (1 + e^{-t})^{-1}, \ t\in\R$.

Under $\eqref{Eq:Param}$, we specify $\Pi$ by assigning independent priors $\Pi_{\rho^*}$ and $\Pi_W$ to $\rho^*$ and $w$, respectively. Specifically, for some fixed $a_{\rho^*},b_{\rho^*},c_{\rho^*}>0$, we take $\Pi_{\rho^*}$ to be a $\Gamma(a_{\rho^*},b_{\rho^*})$ distribution truncated to the interval $[0,c_{\rho^*}+\log n]$. Its probability density function (p.d.f., also denoted by $\Pi_{\rho^*}$) equals
$$
    \Pi_{\rho^*}(r) = \frac{b_{\rho^*}^{a_{\rho^*}}}{\gamma(a_{\rho^*},b_{\rho^*}c_{\rho^*}+b_{\rho^*}\log n)} r^{a_{\rho^*} - 1}e^{-b_{\rho^*}r},
    \qquad r\in[0,c_{\rho^*}+\log n],
$$
where $\gamma(a_{\rho^*},b_{\rho^*}c_{\rho^*}+b_{\rho^*}\log n) := \int_0^{b_{\rho^*}c_{\rho^*}+b_{\rho^*}\log n}r^{a_{\rho^*} - 1}e^{-r}dr$ is positive and bounded above by $\Gamma(a_{\rho^*})$ for all $n\in\N$. This leads to a conjugate full conditional distribution on $\rho^*$, cf.~Section \ref{Subsec:Algorithm}, and also implies a bound on the sup-norm of $\rho$, used in the theoretical analysis, cf.~the discussion after Theorem \ref{Theo:GenContrRates} in the Supplement. Next, we model $w$ via a family of centered Gaussian processes $W_\ell:=\{W_\ell(z), \ z\in\Zcal\}$ with Automatic Relevance Determination (ARD) kernel,
\begin{equation}\label{Eq:ARD}
\Enorm [W_\ell(z)W_\ell(z')] = e^{-\sum_{j=1}^d  \ell_j (z_j - z_j')^2},
\qquad z=(z_1,\dots,z_d), \ z' = (z_1',\dots,z_d'),
\end{equation}
where $ \ell_1,\dots, \ell_d>0$ are length-scale hyper-parameters and $\ell = ( \ell_1,\dots, \ell_d)$. This covariance function represents the anisotropic generalization of the standard square-exponential kernel, which prescribes $\ell_j =\ell_h$ for all $h,j=1,\dots,d$. It offers desirable modeling flexibility in the present setting, where distinct covariates may have diverse physical nature and vary over vastly different ranges, possibly resulting in intensities with distinct smoothness levels along different directions. The ARD kernel is widely used in machine learning in such situations, e.g.~\cite[Chapter 5.1]{RW06}, and was shown by \cite{BPD14}, in simpler statistical models, to lead to optimal reconstruction of anisotropic functions.

We conclude the specification of $\Pi_W$ (and of $\Pi$) by randomizing the length-scales in \eqref{Eq:ARD} as follows: We first draw $\theta_1,\dots,\theta_d\iid \Beta(a_\theta,b_\theta)$ for some $a_\theta,b_\theta>0$. Then, for each $j=1,\dots,d$, given $\theta_j$, we set $ \ell_j = \gamma_j^{\theta_j/d}$, where $\gamma_1,\dots,\gamma_d\iid \Gamma(a_\gamma,b_\gamma)$ for some $a_\gamma,b_\gamma>0$. In other words, each $ \ell_j$ is independently modeled as a stochastic power of a gamma random variable. This construction is inspired by the hyper-prior from \cite[Section 3.1]{BPD14}, and is crucially used in the proof of our main result, Theorem \ref{Theo:GPRates} below, where the employed random exponentiation lends some additional flexibility to the hyper-prior, while also leading to a tight control over its complexity, similar to the findings from \cite{BPD14}. We note that $ \ell_1,\dots, \ell_d$ are independent under our hyper-prior, resulting in a slight simplification of and an arguably more natural model than the construction in the latter reference, where the stochastic exponents are jointly drawn from a Dirichlet distribution.

In the specification of $\Pi$, the parameters $a_{\rho^*}, b_{\rho^*},c_{\rho^*},a_\theta,b_\theta,a_\gamma,b_\gamma$ are arbitrary positive quantities. In fact, they play no role in our proofs (only possibly affecting the constants pre-multiplying the rates), and we have also found them to be largely uninfluential in our empirical results, where they have been set to generically uninformative values. For example, for the simulation studies of Section \ref{Sec:Simulations}, we have assigned $\theta_1,\dots,\theta_d\iid\text{Beta}(2,2)$.

Following the Bayesian paradigm, given data $D^{(n)}$ arising as described at the beginning of Section \ref{Sec:GPMethods}, and $\Pi$ as above, the posterior distribution $\Pi(\cdot|D^{(n)})$ is given by the conditional distribution of $\rho|D^{(n)}$. By Bayes' theorem (e.g.~\cite[p.~7]{GvdV17}), 
\begin{equation}
\label{Eq:Posterior}
\Pi(A|D^{(n)}) 
=\frac{\int_A L^{(n)}(\rho)d\Pi(\rho)}
{\int_{\Rcal} L^{(n)}(\rho)d\Pi(\rho)},
\qquad A\subseteq \Rcal\ \text{measurable},
\end{equation}
where $L^{(n)}$ is the likelihood from \eqref{Eq:Likelihood}, and where $\Rcal$ is the collection of all measurable, bounded and nonnegative-valued functions defined on the covariate space $\Zcal$.

\begin{remark}[Bounded covariate spaces]
\label{Rem:BoundCov}
The assumption that $\Zcal$ be bounded is a convenient working assumption that entails no loss of generality since, if it were unbounded, we might `pre-process' the covariates via a smooth and bijective map $\Phi :\R^d\to\R^d$ with bounded range, setting $\tilde Z^{(i)}(x) := \Phi(Z^{(i)}(x))$, $x\in\Wcal$, $i=1,\dots,n$. We would then proceed in the statistical analysis using the transformed covariates in place of the original ones, and then translate the obtained estimates back onto $\Zcal$ via the inverse map $\Phi^{-1}$. A standard choice is given by
\begin{equation}
	\label{Eq:PreProc}
	\Phi(z) = (\phi(z_1),\dots,\phi(z_d)), 
	\qquad z = (z_1,\dots,z_d)\in\Zcal,
\end{equation}
where $\phi: \R \to [0,1]$ is a smooth cumulative distribution function (c.d.f.), in which case $\tilde Z^{(i)}(x)$ takes values in $[0,1]^d$. Analogous standardization steps are common practice in spatial statistics applications, for example being embedded within the popular \textnormal{\texttt{R}} package   \textnormal{\texttt{spatstat}} (\cite{BRT16}) for kernel-based intensity estimation; see also \cite[Section 3.2.1]{G08}. 
\end{remark}
%%%%%%%%%%%%%%%%%%%%%%%%%%%%%%

%
%
%

%%%%%%%%%%%%%%%%%%%%%%%%%%%%%%%%%%%%%%%%%%
\subsection{Adaptive anisotropic posterior contraction rates}\label{Subsec:Theory}

We present our main theoretical results concerning the asymptotic behavior of the posterior distribution \eqref{Eq:Posterior} as $n\to\infty$, under the paradigm of the `frequentist analysis of Bayesian procedures' (e.g.~\cite{GvdV17}). We assume observations $D^{(n)}\sim P^{(n)}_{\rho_0}$ generated by some fixed (possibly anisotropic) ground truth $\rho_0$, and study the convergence of $\Pi(\cdot|D^{(n)})$ towards $\rho_0$. In the following result, we quantify the speed of such concentration with respect to the distance
\begin{equation}
\label{Eq:dZ}
d_Z (\rho_1,\rho_2)
:= \sqrt{\Enorm\left\|\sqrt{\lambda_{\rho_1}} - \sqrt{\lambda_{\rho_2}}\right\|_{L^2(\Wcal)}^2}
=\sqrt{\Enorm \int_{\Wcal} \left(\sqrt{\rho_1(Z((x))} - \sqrt{\rho_2(Z(x))}\right)^2 dx},
\end{equation}
where the expectation is with respect to the law of $Z$. This is a natural metric for the problem at hand, as it turns out to be closely related to the Hellinger distance between the observational densities \eqref{Eq:Likelihood}, cf.~Section \ref{Subsec:HellDist} in the Supplement. In the important case where $Z$ is assumed to be stationary, $d_Z$ can be shown to be equivalent to a standard $L^2$-type metric, under which the obtained rates coincide with the optimal ones, up to a logarithmic factor. See Section \ref{Subsec:StatCov} below.

\begin{theorem}\label{Theo:GPRates}%%%%%%%
For $\alpha = (\alpha_1,\dots,\alpha_d)\in (0,\infty)^d$, 
let $\rho_0\in C^\alpha(\Zcal)$ satisfy $\inf_{z\in\Zcal}\rho_0(z)>0$,  and consider data $D^{(n)}\sim P_{\rho_0}^{(n)}$ arising as described at the beginning of Section \ref{Sec:GPMethods}. Let $\Pi$ be a prior for $\rho$ constructed as in Section \ref{Subsec:Prior}, and let $\Pi(\cdot|D^{(n)})$ be the resulting posterior distribution.
Then,
$$
\Pi\left(\rho : d_Z (\rho,\rho_0 ) > L n^{-\alpha_0/(2\alpha_0+1)}\log^C n
\Big|D^{(n)}\right) \to 0,
\qquad \alpha_0 = 1/\sum_{j=1}^d\alpha_j^{-1},
$$
in $P^{(\infty)}_{\rho_0}$-probability as $n\to\infty$, for all sufficiently large $L>0$ and some large enough $C>0$.
\end{theorem}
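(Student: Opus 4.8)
The plan is to derive Theorem \ref{Theo:GPRates} from the general theory of posterior contraction for i.i.d.\ observations (\cite{GGvdV00}; see also \cite[Chapter 8]{GvdV17}). Writing $\epsilon_n := n^{-\alpha_0/(2\alpha_0+1)}\log^C n$ for the target rate and letting $h$ denote the Hellinger distance between the single-observation laws $P_\rho$ with densities \eqref{Eq:Likelihood}, the task reduces to verifying three conditions. First, a \emph{prior mass} lower bound $\Pi(B_n)\gtrsim e^{-c n\epsilon_n^2}$ on a Kullback--Leibler-type neighbourhood $B_n$ of $\rho_0$, consisting of those $\rho$ with $\mathrm{KL}(P_{\rho_0},P_\rho)\lesssim\epsilon_n^2$ together with matching second-moment control. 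Second, the existence of a \emph{sieve} $\Rcal_n\subseteq\Rcal$ with metric entropy $\log N(\epsilon_n,\Rcal_n,h)\lesssim n\epsilon_n^2$, which supplies the exponentially powerful Hellinger tests via the testing approach for Poisson models of \cite{BSvZ15,KvZ15}. Third, a \emph{remaining mass} bound $\Pi(\Rcal\setminus\Rcal_n)\le e^{-(c+4)n\epsilon_n^2}$. These conditions together deliver contraction in $h$, which I would then transfer to the claimed statement in the metric $d_Z$ of \eqref{Eq:dZ}.

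The bridge between the two metrics, and the role of the truncation at $M$, I would handle as follows. From the analysis in Section \ref{Subsec:HellDist} of the Supplement, $h(P_{\rho_1},P_{\rho_2})$ is comparable to $d_Z(\rho_1,\rho_2)$ up to constants depending on the uniform bounds of $\rho_1,\rho_2$, and both $\mathrm{KL}(P_{\rho_0},P_\rho)$ and its second moment are likewise controlled by $d_Z(\rho_0,\rho)^2$ when $\rho$ is bounded and bounded away from zero. Under the parametrisation \eqref{Eq:Param}, the map $w\mapsto\rho=\rho^*\sigma(w)$ is Lipschitz from $C(\Zcal)$ into $L^\infty(\Zcal)$ on any region where $w$ and $\rho^*$ are bounded, so a sup-norm ball around a smooth surrogate for $\sigma^{-1}(\rho_0/\rho^*)$, intersected with $\rho^*$ near its true value, yields a $d_Z$-, hence $\mathrm{KL}$-, neighbourhood; this is the mechanism through which the Gaussian small-ball estimate feeds into the prior-mass condition. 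The truncation at the arbitrary level $M$ is the device that lets me avoid controlling the (a priori unbounded) upper tail of $\rho^*\sim\Gamma(a_{\rho^*},b_{\rho^*})$: since $\rho_0$ is bounded on the compact $\Zcal$ one has $\rho_0\wedge M=\rho_0$ for $M$ large, and contraction in $h$ translates into contraction of $d_Z(\rho\wedge M,\rho_0\wedge M)$ because the truncated intensities are uniformly bounded by $M$, so that the comparison constant stays finite.

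The core of the argument, and the step I expect to be the main obstacle, is the Gaussian process analysis under anisotropic smoothness together with adaptation. For a fixed length-scale vector $\ell$, the ARD process \eqref{Eq:ARD} is an anisotropic rescaling of the square-exponential process, and I would study its concentration function around a fixed smooth $w_0$, namely $\varphi^\ell_{w_0}(\epsilon)=\inf\{\|h\|_{\mathbb{H}_\ell}^2 : \|h-w_0\|_\infty\le\epsilon\}-\log\Pr(\|W_\ell\|_\infty\le\epsilon)$, where $\mathbb{H}_\ell$ is the RKHS of $W_\ell$, adapting the anisotropic estimates of \cite{BPD14}. Matching the coordinatewise approximation error, of order $\ell_j^{-\alpha_j/2}$ in direction $j$, to the rate forces $\ell_j\asymp n^{2\alpha_0/((2\alpha_0+1)\alpha_j)}$ up to logarithmic factors, whence the RKHS complexity scales like $\prod_j\ell_j^{1/2}\asymp n^{1/(2\alpha_0+1)}$; it is precisely this coordinatewise balancing that makes the harmonic-mean exponent $\alpha_0=1/\sum_j\alpha_j^{-1}$ emerge in the rate. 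The adaptation step is then to show that the independent hyper-prior $\ell_j=\gamma_j^{\theta_j/d}$, with $\theta_j\sim\Beta(a_\theta,b_\theta)$ and $\gamma_j\sim\Gamma(a_\gamma,b_\gamma)$, places enough mass near this oracle choice: integrating out $\theta_j$ yields a density for $\log\ell_j$ that is bounded below on the relevant polynomial-in-$n$ range, which delivers the prior-mass bound with only a logarithmic loss. Because the length-scales are independent here---rather than Dirichlet-coupled as in \cite{BPD14}---the small-ball estimate factorises across coordinates, simplifying the bookkeeping but requiring the requisite tail and complexity bounds to be re-derived for the present hyper-prior.

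For the sieve I would take $\Rcal_n$ to comprise intensities $\rho=\rho^*\sigma(w)$ with $\rho^*$ below a slowly growing bound, each length-scale constrained by a polynomial threshold $\ell_j\le\ell_j^{\max}$, and $w$ lying in a set of the form $M_n\mathbb{B}_\ell+\epsilon_n\mathbb{B}_\infty$ uniformly over admissible $\ell$, where $\mathbb{B}_\ell$ and $\mathbb{B}_\infty$ denote the unit balls of $\mathbb{H}_\ell$ and of $C(\Zcal)$, in the spirit of \cite{KvZ15}. Its $h$-entropy is controlled, through the Lipschitz map $w\mapsto\rho$, by the sup-norm entropy of rescaled square-exponential RKHS balls, giving the entropy bound; the remaining-mass bound follows by combining the Gaussian tail estimate for $\|W_\ell\|_\infty$, uniformly over the admissible length-scales, with the right-tail bound for $\gamma_j\sim\Gamma(a_\gamma,b_\gamma)$ that renders $\Pi(\ell_j>\ell_j^{\max})$ exponentially small, and the gamma tail for $\rho^*$. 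With the three conditions established, the general theorem yields $\Pi(h(P_\rho,P_{\rho_0})>L\epsilon_n\mid D^{(n)})\to0$ in $P^{(\infty)}_{\rho_0}$-probability, and the comparison of the second paragraph---restricted via the truncation at $M$---converts this into the asserted contraction in $d_Z$, completing the proof.
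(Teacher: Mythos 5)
Your overall architecture is the same as the paper's: Theorem \ref{Theo:GPRates} is deduced from a general Hellinger-testing theorem for i.i.d.\ data (the paper's Theorem \ref{Theo:GenContrRates}, built on Theorem 8.9 of \cite{GvdV17}), combined with metric-comparison lemmas (Lemma \ref{Lem:KLBounds} for the Kullback--Leibler neighbourhoods, and Lemma \ref{Lem:HellDistBounds}, whose lower bound $h^2(p_{\rho_1},p_{\rho_2})\gtrsim_M d_Z^2(\rho_1\land M,\rho_2\land M)$ is precisely where the truncation at $M$ enters), a concentration-function analysis conditional on the length-scales in the style of \cite{BPD14}, and a sieve of the form ``RKHS balls plus a sup-norm enlargement'' whose entropy is controlled in sup-norm (cf.\ Remark \ref{Rem:SupNormEntropy}) and whose complement is handled via Borell's inequality and the gamma tails. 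The genuine gap is in your adaptation step, which is the heart of the matter. You assert that integrating out $\theta_j$ yields a density for $\log\ell_j$ that is ``bounded below on the relevant polynomial-in-$n$ range''. This is false: since $\ell_j=\gamma_j^{\theta_j/d}$ with $\theta_j\le 1$, the event $\{\ell_j\asymp n^{a}\}$ forces $\gamma_j=\ell_j^{d/\theta_j}\ge \ell_j^{d}\asymp n^{ad}$, so by \eqref{Eq:PiGammaTail} the marginal density of $\log\ell_j$ at such values is, up to polynomial factors, exponentially small in $n^{ad}$ --- not bounded below. (Were your claim true, the stochastic exponents $\theta_j$ would be superfluous and adaptation would be trivial.) What actually has to be shown --- and what the paper's key Lemma \ref{Lem:GPPriorMass} is devoted to --- is that this exponentially small hyper-prior mass is of the \emph{correct} exponential order $e^{-(1/\varepsilon)^{1/\alpha_0}\log^{K_1}(1/\varepsilon)}\approx e^{-Cn\varepsilon_n^2}$: one restricts each $\theta_j$ to a window of width $\asymp 1/\log(1/\varepsilon)$ around $d\alpha_0/\alpha_j$, so that the gamma cost $e^{-b_\gamma(1/\varepsilon)^{d/(\theta_j\alpha_j)}}$ of reaching $\ell_j\asymp\varepsilon^{-1/\alpha_j}$ stays within a constant factor of $e^{-(1/\varepsilon)^{1/\alpha_0}}$, while the Beta-probability of the window costs only a factor $1/\log(1/\varepsilon)$. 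Your proposal contains no substitute for this computation, and the conditional small-ball estimate you describe cannot yield the harmonic-mean rate without it.

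Two secondary points. First, in your sieve you cap $\rho^*$ by a ``slowly growing bound''; with the $\Gamma(a_{\rho^*},b_{\rho^*})$ prior this makes the remaining mass only polynomially small, violating the requirement $\Pi(\Rcal_n^c)\le e^{-C_2 n\varepsilon_n^2}$. The paper instead takes a threshold $r_n$ that is polynomial in $n$ (of order $n^{1/(2\alpha_0+1)}$ up to logarithms for a gamma prior), which is harmless because $r_n$ enters the entropy bound only logarithmically. Second, the truncation at $M$ is not a device for avoiding control of the upper tail of $\rho^*$ --- that tail must be, and is, handled by the sieve --- but is forced by the lower bound of the Hellinger distance in terms of $d_Z$ in the presence of random covariates (Lemma \ref{Lem:HellDistBounds}); the second half of your second paragraph does state this mechanism correctly.
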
%%%%%%%%%%%%%%%%%%%%%%%%%%%%%%

The proof of Theorem \ref{Theo:GPRates} is given in the supplementary Section \ref{Sec:ProofGPRates}. The result holds for a slightly more general prior class, fully described in Condition \ref{Cond:GP} therein.

Theorem \ref{Theo:GPRates} asserts that if $\rho_0\in C^\alpha(\Zcal)$, then, with probability tending to one, any posterior sample $\rho\sim\Pi(\cdot|D^{(n)})$ is within a small $d_Z$-neighborhood of $\rho_0$ with radius shrinking (nearly) at the order $n^{-\alpha_0/(2\alpha_0+1)}$. The quantity $\alpha_0$ is called `effective smoothness' \cite[p.~326]{HL02}, and is known to characterize the minimax optimal rates of estimation over anisotropic function spaces; see e.g.~\cite{N87} for results in nonparametric regression. We note that $\alpha_0$ is increasing with respect to each component of the vector of regularities $\alpha$; in particular, the sequence $n^{-\alpha_0/(2\alpha_0+1)}$ can be made arbitrarily close to the parametric rate $n^{-1/2}$ if $\rho_0$ is infinitely differentiable along each direction. Since the considered prior $\Pi$ requires no information about $\alpha$ (or $\alpha_0$) for its construction, we conclude that it is able to automatically `adapt' to the (possibly) anisotropic smoothness. This is in line with the findings from \cite[Section 3.1]{BPD14}, which we build on to investigate the present covariate-based intensity estimation problem.

In the isotropic case where $\alpha_j = a$ for some $a>0$ and all $j=1,\dots,d$, we have $\alpha_0 = a/d$, and Theorem \ref{Theo:GPRates} recovers the usual nonparametric rate $n^{-a/(2a+d)}$, up to a logarithmic factor. On the other hand, in the presence of a genuine anisotropy, since $\alpha_0\ge\min_{j=1,\dots,d}\alpha_j/d$, treating $\rho_0$ as having isotropic smoothness generally results in slower rates, with greater loss of efficiency in higher dimensions. Thus, multi-bandwidth Gaussian process priors are suited to both scenarios, while it was shown by \cite[Section 3.5]{BPD14} that single-bandwidth procedures lead to a sub-optimal performance if the ground truth is anisotropic.

\begin{remark}[Bounded away from zero intensities]\label{Rem:PosCov}
The proof of Theorem \ref{Theo:GPRates} requires that $\rho_0$ be bounded away from zero. This condition similarly underpins previous results for nonparametric Bayesian intensity estimation (in non-covariate-based models), e.g.~\cite{GS13,BSvZ15,KvZ15}. However, this imposes little restriction in practice since, reasoning similarly to the discussion after Theorem 1 in \cite{BSvZ15}, if $\rho_0$ were not (or not known to be) bounded away from zero, we might modify the observed point patterns $\{N^{(i)}\}_{i=1}^n$ by adding independently sampled standard Poisson processes. The law of the resulting data would then be characterized by a covariate-based intensity equal to $1+\rho_0$, which is bounded below by one and has the same smoothness properties as $\rho_0$. Using this, the above multi-bandwidth Gaussian process methods could be used to make inference on the function $1 + \rho_0$ (and therefore also on $\rho_0$), with strict theoretical guarantees provided by Theorem \ref{Theo:GPRates}.
\end{remark}%%%%%%%%%%%%%%%%%%%%%%%%%%%%%%%

\begin{remark}[Deterministic covariates]\label{Rem:DetCov}
Our approach readily allows for the case where both random and deterministic covariates are of interest, say $Z_{\text{rand}}:=\{Z_{\text{rand}}(x),\ x\in\Wcal\}$ and $Z_{\text{det}}:=\{Z_{\text{det}}(x), \ x\in\Wcal\}$, respectively. Letting $Z(x) :=(Z_{\text{rand}}(x),Z_{\text{det}}(x))$ and considering observations $Z^{(i)}(x):=(Z_{\text{rand}}^{(i)}(x),Z_{\text{det}}(x))$, $i=1,\dots,n$, where $Z_{\text{rand}}^{(1)}, \dots, Z_{\text{rand}}^{(n)}$ are i.i.d.~copies of $Z_{\text{rand}}$, the posterior distribution is again given by \eqref{Eq:Posterior}, and can be approximately sampled from via the MCMC algorithm from Section \ref{Subsec:Algorithm} below. Furthermore, inspection of the proof of Theorem \ref{Theo:GPRates} shows that its conclusion remains valid in this setting, with the distance $d_Z$ from \eqref{Eq:dZ} now equaling
$$
    d_Z^2 (\rho_1,\rho_2)
    =\Enorm \int_{\Wcal} \left(\sqrt{\rho_1(Z_{\text{rand}}(x),Z_{\text{det}}(x))} - \sqrt{\rho_2(Z_{\text{rand}}(x),Z_{\text{det}}(x))}\right)^2 dx,
$$
with the expectation being with respect to the law of $Z_{\text{rand}}$. Among the others, this allows to study purely spatial effects on the intensity by taking $Z_{\text{det}}(x) = x$. See Section \ref{Subsec:ExpDetCovariates} in the Supplement for an illustration of this with synthetic data.
\end{remark}%%%%%%%%%%%%%%%%%%%%%%%%%%%%%%

\begin{remark}[Discrete covariates]\label{Rem:DiscrCov}
As our primary focus is on Gaussian process methods for  covariate-based intensities, we do not consider in details the case of discrete covariates, as these would require completely different priors. However, we note that our general concentration result, Theorem \ref{Theo:GenContrRates} in the Supplement, imposes no restrictions on $\Zcal$, and thus can be used to study the performance of Bayesian procedures in this setting as well. In particular, arguing similarly to the proof of Proposition 3.20 in \cite{GKR25} would lead to near-parametric posterior contraction rates under mild conditions on the prior distribution. Combining this with the results derived in the present article, mixed scenarios with both continuous and discrete covariates could be further investigated. We do not pursue such extensions here for the sake of conciseness.
\end{remark}%%%%%%%%%%%%%%%%%%%%%%%%%%%%%%

%
%
%

%%%%%%%%%%%%%%%%%%%%%%%%%%%%%%%%%%%%%%%%%%
\subsection{Posterior contraction rates in the case of stationary covariates}\label{Subsec:StatCov}

Stationarity is a common assumption for the analysis of spatially correlated data, e.g.~\cite{C15}. In the present setting, stationarity of the covariates entails the often realistic scenario, which can be tested (e.g.~\cite{BS17}), where the marginal distribution of the random field $Z$ is homogeneous across the observation window, namely that $Z(x)\sim \nu_Z$ for each $x\in\Wcal$, for some probability measure $\nu_Z$ supported on $\Zcal$.

For stationary covariates, the metric $d_Z$ appearing in Theorem \ref{Theo:GPRates} can be made more explicit. Indeed, an application of Fubini's theorem yields, for all $\rho_1,\rho_2\in\Rcal$,
\begin{align*}
d_Z^2 (\rho_1,\rho_2)
&=\int_{\Wcal} \Enorm  \left(\sqrt{\rho_1(Z((x))} - \sqrt{\rho_2(Z(x))}\right)^2 dx\\
&=\int_{\Wcal}
\int_{\Zcal}\left(\sqrt{\rho_1(z)} - \sqrt{\rho_2(z)}\right)^2d\nu_Z(z) dx
=\vol(\Wcal)\|\rho_1 - \rho_2\|^2_{L^2(\Zcal,\nu_Z)}.
\end{align*}
Further, if $\nu_Z$ is absolutely continuous with bounded and bounded away from zero p.d.f., we have $\|\rho_1 - \rho_2\|_{L^2(\Zcal,\nu_Z)}\simeq \|\rho_1 - \rho_2\|_{L^2(\Zcal)}$, implying that $d_Z$ is equivalent to the standard $L^2(\Zcal)$-metric. For example, this is the case if the stationary distribution $\nu_Z$ is known, and if we pre-process the observed covariates $\{Z^{(i)}\}_{i=1}^n$ as described in Remark \ref{Rem:BoundCov} via the c.d.f.~associated to $\nu_Z$, yielding a uniform stationary distribution. When $\nu_Z$ is not known, a pre-processing step involving the empirical c.d.f.~of the covariates is often used in practice, e.g.~\cite{BCST12}.

Under the latter assumptions on $Z$, the conclusion of Theorem \ref{Theo:GPRates} can be written as
$$
\Pi\left(\rho : \|\rho - \rho_0\|_{L^2(\Zcal)} > L n^{-\alpha_0/(2\alpha_0+1)}\log^C n
\big|D^{(n)}\right) \to 0,
$$
in $P^{(\infty)}_{\rho_0}$-probability as $n\to\infty$, holding for all sufficiently large $L, C>0$. The rate $n^{-\alpha_0/(2\alpha_0+1)}$ is known to be minimax optimal, in various statistical models, for estimating in $L^2$-risk functions with anisotropic H\"older regularity equal to $\alpha$, including in nonparametric regression (e.g.~\cite{N87}) and density estimation (e.g.~\cite{BBM99}). Following the strategy for deriving lower bounds in intensity estimation problems laid out in \cite[Chapter 6.2]{K98}, this conclusion can be extended to the present setting as well, showing that the proposed methods adaptively achieve optimal posterior contraction rates in the case of stationary covariates.

%
%
%

%%%%%%%%%%%%%%%%%%%%%%%%%%%%%%%%%%%%%%%%%%
\subsection{Posterior sampling via a Metropolis-within-Gibbs algorithm}\label{Subsec:Algorithm}

Noting that the posterior distribution from \eqref{Eq:Posterior} is not available in closed form, we construct a suitable MCMC algorithm of Metropolis-within-Gibbs type to approximately draw from $\Pi(\cdot|D^{(n)})$. Following the usual MCMC methodology, we then employ the generated samples to concretely compute Bayesian point estimates and credible sets.

A delicate aspect for likelihood-based nonparametric procedures for inhomogeneous point processes is the analytical intractability of the likelihood, since the latter involves an integral of the intensity over the observation window, cf.~\eqref{Eq:Likelihood}, which cannot generally be computed in closed form. In our implementation, we tackle this difficulty resorting to numerical integration, specifically via piece-wise constant quadrature. In the context of nonparametric Bayesian intensity estimation for models without covariates, more sophisticated methods based on data augmentation have been proposed to handle the resulting `doubly-intractable' posteriors; see~\cite{AMM09}. These techniques could conceivably be adapted to the present covariate-based setting; however, we did not pursue such extensions here, as we found our approach to yield satisfactory results both in the simulation studies of Section \ref{Sec:Simulations} and in the data analysis of Section \ref{Sec:RealData}. Devising `exact' MCMC samplers for the problem at hand, and comparing their performance to our approach based on numerical likelihood approximations, is an interesting direction for future work.

The employed MCMC algorithm alternates samples from the full conditional distributions, given below, of the quantities $\rho^*,\theta_1,\dots,\theta_d,\ell_1,\dots,\ell_d$ and $w$, cf.~Section \ref{Subsec:Prior}. For the functional parameter $w:\Zcal\to\R$, we introduce the `high-dimensional' discretization 
\begin{equation}
\label{Eq:Discret}
w(z) = \sum_{v=1}^V w_v\psi_v(z),
\qquad V\in\N,
\qquad w_1,\dots,w_V\in\R,
\qquad z\in\Zcal,
\end{equation}
where $\psi_1,\dots,\psi_V$ are linear interpolation functions associated to a pre-determined grid $z_1,\dots,z_V\in\Zcal$, sufficiently refined as to guarantee that the numerical interpolation error is negligible compared to the statistical one. Under the discretization \eqref{Eq:Discret}, we have $w(z_v) = w_v$ for all $v=1,\dots,V$, and for any $z\in\Zcal$, the value $w(z)$ is found by linearly interpolating the pairs $\{(z_v,w_v)\}_{v=1}^V$. Accordingly, we identify $w$ with the vector $(w_1,\dots,w_V)$, which under $\Pi$, conditionally on $\ell = (\ell_1,\dots,\ell_d)$, is assigned the centered multivariate Gaussian prior
\begin{equation}
\label{Eq:DiscretPrior} 
w\sim N_V(0,C_\ell),
\qquad (C_\ell)_{v,v'}= e^{-\sum_{j=1}^d\ell_j(z_{v,j} - z_{v',j})^2},
\qquad v,v'=1,\dots,V.
\end{equation}

Starting from some initialization (which we set to a cold start), and given the current draws for all the parameters, each step of the Metropolis-within-Gibbs algorithm alternates samples from:
\begin{enumerate}
\item The full conditional distribution of the upper bound $\rho^*$ of the intensity,
\begin{align*}
	\Pi(\tilde\rho^{*} | D^{(n)}, \theta_1,\dots,&\theta_d,\ell_1,\dots,\ell_d,w)\\
	& \propto \Pi_{\rho^{*}}(\tilde\rho^*)
	\prod_{i=1}^n e^{\sum_{k=1}^{K^{(i)}}\log(\tilde\rho^*\sigma(w(Z^{(i)}(X^{(i)}_k))))-\int_{\Wcal} \tilde\rho^*\sigma(w(Z^{(i)}(x)))dx}
	\\
	%        &\propto \Pi_{\rho^{*}}(\rho^*)
	%         \prod_{i=1}^n
	%         (\rho^*)^{K^{(i)}}
	%         e^{-\rho^*\int_{\Wcal} \rho^*\sigma(w(Z^{(i)}(x)))dx}
	%         \\
	& \propto \Pi_{\rho^{*}}(\tilde\rho^{*}) (\tilde\rho^{*})^{\sum_{i = 1}^n K^{(i)}} e^{ - \tilde\rho^*  \int_{\Wcal} \sum_{i = 1}^n\sigma(w(Z^{(i)}(x)))dx},
\end{align*}
which, recalling that $\Pi_{\rho^*}$ is a truncated $\Gamma(a_{\rho^*}, b_{\rho^*})$ distribution over $[0,c_{\rho^*}+\log n]$, is again truncated gamma with updated parameters $a_{\rho^*} + \sum_{i = 1}^n K^{(i)}$ and $b_{\rho^*} +  \int_{\Wcal}\sum_{i = 1}^n \sigma(w(Z^{(i)}(x)))dx$. The latter is efficiently computed in practice via quadrature.

\item The full conditional distributions of each length-scale exponent $\theta_j$, $j=1,\dots,d$,
\begin{equation}
	\label{Eq:FullCondTheta}
	\begin{split}
		\Pi(\tilde\theta_j | D^{(n)},\rho^*,\theta_1,\dots,\theta_{j-1},\theta_{j+1}&,\dots,\theta_d,\ell_1,\dots,\ell_d,w)
		%\propto \Pi(\ell_j | \tilde\theta_j) \Pi_\theta(\tilde\theta_j) 
		\propto 
		\ell_j^{a_\gamma \frac{d}{\tilde\theta_j}}e^{- b_\gamma  \ell_j^{d/\tilde\theta_j}}
		\tilde\theta_j^{a_\theta - 2} (1 - \tilde\theta_j)^{b_\theta - 1},
	\end{split}
\end{equation}
having used that, a priori, 
$\theta_j\iid \Beta(a_\theta,b_\theta)$ and that, given $\theta_j$, $ \ell_j = \gamma_j^{\theta_j/d}$ with $\gamma_j\iid \Gamma(a_\gamma,b_\gamma)$. Sampling from the above is achieved via a Metropolis-Hastings MCMC algorithm with proposal distribution equal to the beta hyper-prior, whose acceptance probabilities are analytically computed from \eqref{Eq:FullCondTheta}. Note that, since the full conditional distributions are independent, the updates of  $\theta_1,\dots,\theta_d$ can be performed in parallel.

\item The full conditional distribution of each length-scale $\ell_j$, $j=1,\dots,d$,
\begin{equation}
	\label{Eq:FullCondEll}    
	\begin{split}
		\Pi(\tilde\ell_j | D^{(n)},\rho^*,\theta_1,\dots,&\theta_d,\ell_1,\dots,\ell_{j-1},\ell_{j+1},\dots,\ell_d,w)\\ 
		&\propto 
		%\Pi(w|\tilde\ell) \Pi(\tilde\ell_j | \theta_j) 
		%\propto 
		\textnormal{det}^{-1/2}(C_{\tilde\ell})e^{-\frac{1}{2}w^T(C_{\tilde\ell})^{-1}w}
		\tilde\ell_j^{a_\gamma d/\theta_j-1}e^{- b_\gamma  \tilde\ell_j^{d/\theta_j}},   
	\end{split}
\end{equation}
where $\tilde\ell:=(\ell_1,\dots,\ell_{j-1},\tilde\ell_j,\ell_{j+1},\dots,\ell_d)$. Above, we have again exploited the product structure of the hyper-prior. Further,  we have used the fact that, under the discretization \eqref{Eq:Discret}, conditionally on $\tilde\ell$, $w\sim N(0,C_{\tilde\ell})$ with $C_{\tilde\ell}$ as in \eqref{Eq:DiscretPrior}. Sampling from \eqref{Eq:FullCondEll} is achieved via the adaptive random walk Metropolis-Hasting algorithm (\cite{HST01}). This can also be parallelized.

\item The full conditional distribution of the high-dimensional parameter $w$,
\begin{equation}
	\label{Eq:FullCondW}
	\Pi(\tilde w|D^{(n)},\rho^*,\theta_1,\dots,\theta_d,\ell_1,\dots,\ell_d)
	\propto L^{(n)}(\rho^*\sigma\circ \tilde w)
	\textnormal{det}^{-1/2}(C_\ell)e^{-\frac{1}{2}\tilde w^T(C_\ell)^{-1}\tilde w},
\end{equation}
where $L^{(n)}$ is the likelihood from \eqref{Eq:Likelihood}, and where we have used the notation
$$
(\rho^*\sigma\circ \tilde w) (z)= \rho^*\sigma\Big(\sum_{v=1}^V \tilde w_v\psi_v(z) \Big),
\qquad \tilde w_1,\dots,\tilde w_V\in\R,
\qquad z\in\Zcal.
$$
We extract approximate samples from \eqref{Eq:FullCondW} via the `pre-conditioned Crank-Nicholson' (pCN) algorithm, which is a dimension-robust Metropolis-Hastings MCMC sampling method, specifically designed for procedures based on Gaussian priors, commonly used in inverse problems and data assimilation; see \cite{CSRW13}. This generates an $\R^V$-valued Markov chain $\{\omega_u, \ u\in\N\}$ through the repetition of the following two operations:
\begin{itemize}
	\item Draw a sample from the prior $\xi\sim N_V(0,C_{\tilde\ell})$ and construct the proposal $\omega:=\sqrt{1-2\zeta}\omega_{u-1} + \sqrt{2\zeta}\xi$, where $\zeta\in(0,1/2)$ is a fixed step-size.
	\item Define the new element in the pCN chain by
	$$
	\omega_{u}:=
	\begin{cases}
		\omega, & \textnormal{with probability}\ 1\land \frac{L^{(n)}(\rho^*\sigma\circ \omega)}
		{L^{(n)}(\rho^*\sigma\circ \omega_{u-1})},\\
		\omega_{u-1}, & \text{otherwise}.
	\end{cases}
	$$
\end{itemize}
The first step is straightforward, as well as relatively inexpensive even for moderately high discretization dimensions $V$. The second necessitates the evaluation of the proposal likelihood, which we again tackle by quadrature. The resulting Markov chain can be shown to be reversible and to have stationary measure equal to the full conditional distribution \eqref{Eq:FullCondW}, e.g.~\cite[Proposition 1.2.2]{N23}. Further, the pCN acceptance probabilities are known to be stable with respect to the discretization dimension, \cite{CSRW13}, implying desirable mixing properties for statistical applications with functional unknowns, \cite{HSV14}.
\end{enumerate}

%
%
%
%
%

%%%%%%%%%%%%%%%%%%%%%%%%%%%%%%%%%%%%%%%%%%
\section{Simulation studies}\label{Sec:Simulations}

We assess our approach in extensive numerical simulations. We take the centered unit square $\Wcal = [-1/2,1/2]^2$ as the observation window, fix the ground truth $\rho_0$ and, for $i=1,\dots,n$, draw an independent realization $Z^{(i)}$ of a $d$-variate random field $Z$, conditionally on which we sample the point pattern $N^{(i)}$. We then implement posterior inference via the MCMC algorithm described in Section \ref{Subsec:Algorithm}. All experiments were carried out in \texttt{R} on an Intel(R) Core(TM) i7-10875H 2.30GHz processor with 32 GB of RAM. Numerical integration over the window $[-1/2,1/2]^2$ is performed via piece-wise constant quadrature using a uniform square grid with 2500 nodes.

We compare the obtained results to the performance of an alternative kernel-type method, which is the standard approach in spatial statistics; see e.g.~the monograph \cite{BRT16}. To our knowledge, the existing frequentist literature on covariate-based intensity estimation is largely focused on the setting where a single observation of the covariates and points are available (possibly over a large domain or under an increasing intensity assumption), e.g.~\cite{G08,BCST12,BGMMM20}. There appears to be no definite consensus on how to tackle the joint investigation of repeated observations, despite interest in this case having been raised since at least \cite{DLB91}. An overview of possible aggregation strategies was presented by \cite[Chapter 4]{IPSS08}. Following the latter, we consider a simple average of individual covariate-based kernel intensity estimators, 
\begin{equation}
\label{Eq:AverageKernel}
\hat\rho_\kappa(z) = \frac{1}{n}\sum_{i=1}^n
\hat \rho_\kappa^{(i)}(z),
\qquad z\in\Zcal,
\end{equation}
where each $\rho_\kappa^{(i)}$ is defined according to the `ratio-form' from \cite{BCST12},
\begin{equation}
\label{Eq:IndividualKernels}
\hat \rho_\kappa^{(i)}(z) 
= \frac{1}{g^{(i)}(z)} 
\sum_{k = 1}^{K^{(i)}} \kappa (Z^{(i)}(X^{(i)}_k) - z),
\qquad z\in\Zcal.
\end{equation}
Above, $\kappa$ is a $d$-dimensional smoothing kernel and $g^{(i)}$ is the (non-normalized) density of the empirical spatial c.d.f.~of $Z^{(i)}$. See \cite[Section 3]{BCST12} for details, and also \cite{G08} and \cite{BGMMM20} for similar procedures. In the experiments, we concretely compute $\hat\rho_\kappa$ using the built-in implementation included in the popular $\texttt{R}$ package $\texttt{spatstat}$ (\cite{BRT16}), opting for the default settings under which $\kappa$ is Gaussian and the bandwidth is selected according to Silverman’s rule-of-thumb, \cite{S86}.

%
%
%

%%%%%%%%%%%%%%%%%%%%%%%%%%%%%%%%%%%%%%%%%%
\subsection{Results for univariate covariates}
\label{Subsec:1DExp}

We start with a one-dimensional scenario, taking $Z$ as a (centered) square-exponential process with length-scale equal to 0.005,  transformed via the standard normal c.d.f.~as described in Remark \ref{Rem:BoundCov}. With this choice, $Z$ is supported on $\Zcal = [0,1]$, is stationary, and has invariant measure equal to the uniform distribution on $[0,1]$, falling within the framework of Section \ref{Subsec:StatCov}. The ground truth is set to be proportional to the restriction on $[0,1]$ of a univariate skew normal p.d.f.,
\begin{equation}
\label{Eq:1DTruth}
\rho_0(z) = 5f_{SN}(z; 0.8, 0.3 ,-5),
\qquad z\in[0,1],
\end{equation}
cf.~Figure \ref{Fig:1DResults} below. This results in point patterns  concentrated around the regions with covariate value near $0.65$; see Figure \ref{Fig:1DPattern}. The expected number of points per observation is (slightly smaller than) 5. Independent samples from $Z$ are drawn via a discretization scheme similar to \eqref{Eq:Discret}. The realizations of the point pattern are obtained via the `thinning' procedure described in \cite[Section 2.3]{AMM09}, which is included in the $\texttt{R}$ package $\texttt{spatstat}$ (\cite{BRT16}).

\begin{figure}[H]
\centering
\includegraphics[width=\linewidth]{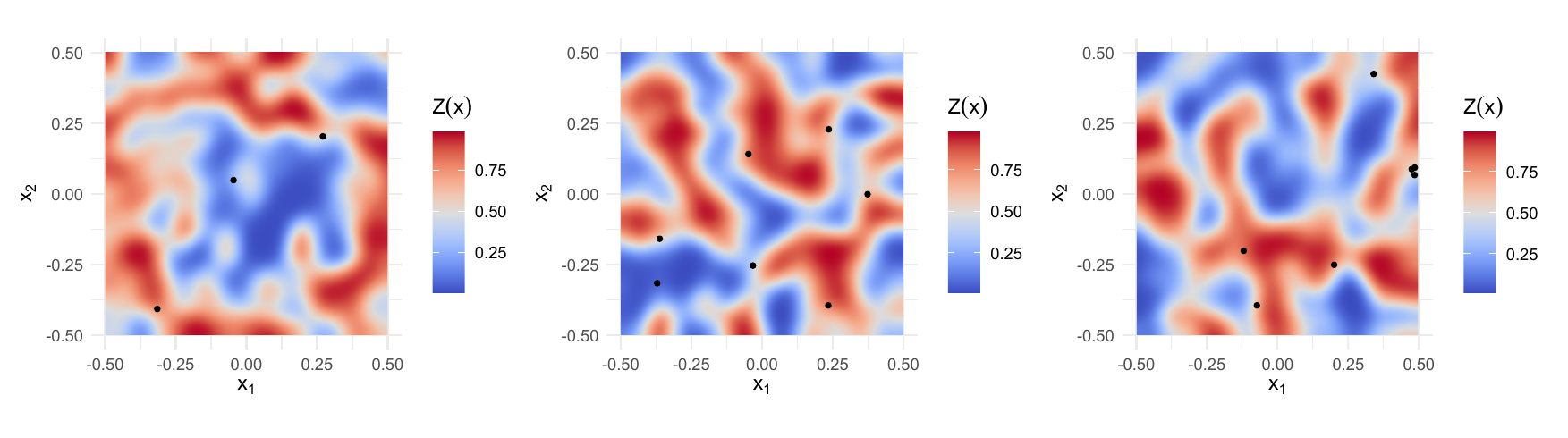}
\caption{Independent realizations of the covariates and the point pattern with intensity \eqref{Eq:1DTruth}.}
\label{Fig:1DPattern}
\end{figure}

Figure \ref{Fig:1DResults} displays the posterior mean $\hat\rho_\Pi^{(n)} :=E^\Pi[\rho|D^{(n)}]$ for $n=250, 500, 1000$, alongside associated point-wise $95\%$-credible intervals. As expected from Theorem \ref{Theo:GPRates}, the posterior appears to concentrate  around $\rho_0$ as the number of observations increases. For visual comparison, we also include the averaged kernel estimate $\hat\rho_\kappa$ from \eqref{Eq:AverageKernel}. Table \ref{Tab:1DResults} reports the (numerically approximated) $L^2$-estimation errors, averaged across 100 replications of each experiment. The corresponding standard deviations and average relative estimation errors are also included. Except for the lowest sample size, at which $\hat\rho_\Pi^{(n)}$ and $\hat\rho_\kappa$ achieve similar results, the posterior mean is seen to achieve lower estimation errors than the kernel alternative, whose performance displays a plateau. This hints at a superior capability of the former to combine information across multiple realizations.

The posterior means and credible intervals were computed via the MCMC algorithm described in Section \ref{Subsec:Algorithm}, choosing the pCN step-size $\zeta$ within the range $[0.01, 0.5]$, depending on the sample size, so to achieve stable acceptance probabilities of around $30\%$. The discretization scheme \eqref{Eq:Discret} for the functional parameter was based on $V = 200$ equally spaced nodes in $[0,1]$. We initialized each run at a cold start, terminating it after $20000$ iterations, with $5000$ burn-in samples. Execution times ranged between $6$ and $125$ minutes. The `hyper-hyper-parameters' of the prior were set to $\alpha_{\rho^*}=1$, $b_{\rho^*}=2$, $c_{\rho^*}=25$, $a_\theta = b_\theta =2$ and $\alpha_\gamma = b_\gamma=1$.

\begin{figure}[H]
\centering
\includegraphics[width=\linewidth]{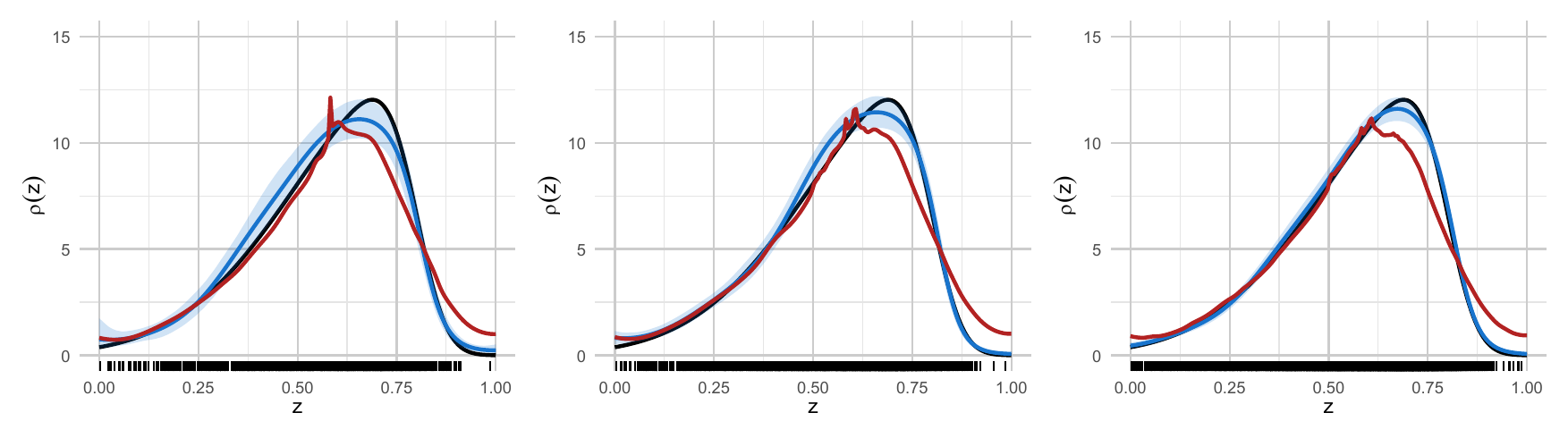}
\caption{Left to right: Posterior means (solid blue), pointwise $95\%$-credible intervals (shaded blue) and averaged kernel estimates (solid red) for $n = 250, 500, 1000$. The ground truth $\rho_0$ from \eqref{Eq:1DTruth} is shown in solid black in each plot for comparison.}
\label{Fig:1DResults}
\end{figure}

\begin{table}[H]
\centering
\begin{tabular}{r|rrrr}
	$n$ & 50  & 250 & 500 & 1000 \\  
	\hline
	$\| \hat{\rho}^{(n)}_\Pi - \rho_0 \|_{L^2}$
	& 1.25 (0.32) & 0.55 (0.09) & 0.43 (0.07) &  0.35 (0.05) \\
	$\| \hat{\rho}^{(n)}_\Pi - \rho_0 \|_{L^2}/\|\rho_0 \|_{L^2}$ 
	& 0.19 (0.05) & 0.09 (0.01) & 0.07 (0.01) & 0.05 (0.008) \\    
	\hline
	$\| \hat{\rho}_\kappa - \rho_0 \|_{L^2}$ 
	& 1.18 (0.30) & 0.96 (0.12) & 0.93 (0.08) & 0.94 (0.06)  \\
	$\| \hat{\rho}_\kappa - \rho_0 \|_{L^2}/\|\rho_0 \|_{L^2}$
	& 0.18 (0.05) &  0.15 (0.02) & 0.14 (0.01) & 0.14 (0.005) \\
	\hline
\end{tabular}
\caption{Average $L^2$-estimation errors (and their standard deviations) over 100 repeated experiments for the posterior mean $\hat{\rho}^{(n)}_\Pi$ and the averaged kernel estimate $\hat{\rho}_\kappa$.}
\label{Tab:1DResults}
\end{table}

%

%%%%%%%%%%%%%%%%%%%%%%%%%%%%%%%%%%%%%%%%%%
\subsection{Results for bivariate covariates}\label{Subsec:2DExp}

To simulate bi-dimensional covariates $Z(x) = (Z_1(x),Z_2(x))$, we take $Z_1:=\{Z_1(x),\ x\in\Wcal\}$ as in the above univariate experiment, and set $Z_2:=\{Z_2(x),\ x\in\Wcal\}$ equal to an independent square-exponential process with larger length-scale 0.05, again under the standard normal c.d.f.~transformation. See Figure \ref{Fig:2DPattern} in the Supplement below for a visual comparison of $Z_1$ and $Z_2$. We construct the ground truth via a linear combination of two bi-dimensional normal p.d.f.'s, 
\begin{align}
\label{Eq:2DTruth}
\rho_0(z_1,z_2) = 
\max \left\{ 0 , 10 
- 10 f_{N}\left(z_1,z_2;(0.8,0.3),\Sigma\right) 
+ 10 f_{N}\left(z_1,z_2;(0.3,0.8),\Sigma\right)
\right\},
\end{align} 
for $(z_1,z_2)\in[0,1]^2$, where $\Sigma = \text{diag}(0.08^2,0.5^2)$. The resulting true intensity is anisotropic, with noticeably smaller characteristic length-scales in the first argument, cf.~Figure \ref{Fig:2DResults}.

The results for the bi-dimensional scenario are summarized in Figure \ref{Fig:2DResults} and Table \ref{Tab:2DTab}. The first three panels  show the posterior mean for increasing sample sizes $n = 50, 250, 1000$, displaying again a clear improvement in the visual agreement with the ground truth (depicted in the last panel). Table \ref{Tab:2DTab} reports the absolute and relative $L^2$-estimation errors for the posterior mean and the kernel procedure, averaged over 100 replicated experiments. In line with the previous results, we observe a steady decay in the estimation errors associated to the posterior mean, whose performance is overall superior to the one of the averaged kernel estimator. For the computation of the posterior mean, we employed a discretization of the parameter space with $V = 600$ linear interpolation functions, based on a triangular tessellation of the covariate space $[0,1]^2$ with maximal element area equal to $0.0014$. All the other parameters in the prior specification and the implementation of the MCMC algorithm were left unchanged from the one-dimensional experiments. Running times ranged between $15$ and $260$ minutes.

\begin{figure}[ht]
\centering
\includegraphics[width=\linewidth]{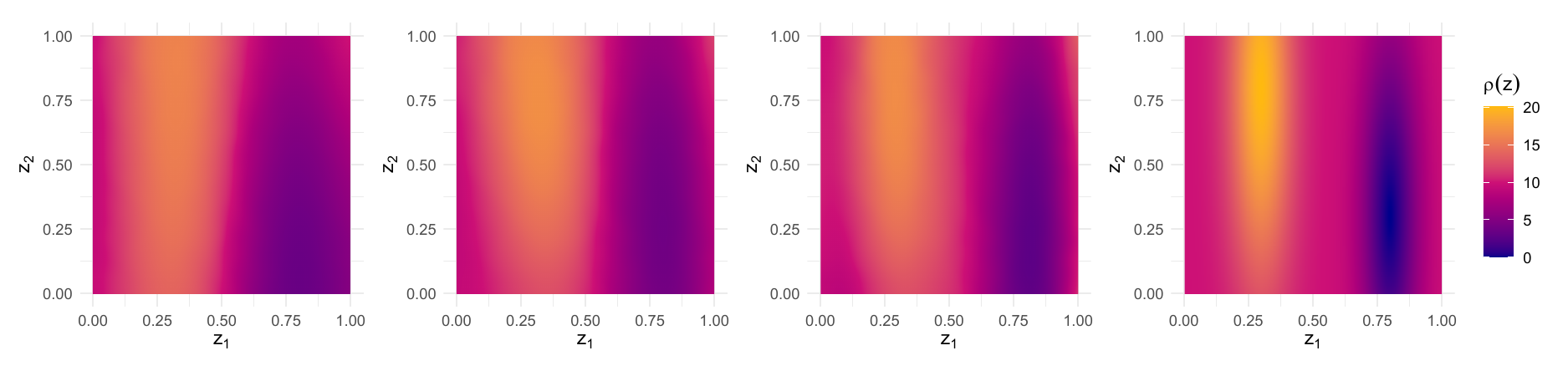}
\caption{Left to right: Posterior means for $n = 50, 250, 1000$ and the anisotropic ground truth \eqref{Eq:2DTruth}.}
\label{Fig:2DResults}
\end{figure}

\begin{table}[H]
\centering
\begin{tabular}{r|rrrrr}
	$n$ & 10 & 50 & 250 & 1000  \\  
	\hline
	$\| \hat{\rho}^{(n)}_\Pi - \rho_0 \|_{L^2}$ 
	& 3.58 (0.48) & 2.54 (0.49) & 1.95 (0.5) & 1.41 (0.21) \\
	$\| \hat{\rho}^{(n)}_\Pi - \rho_0 \|_{L^2}/
	\|\rho_0 \|_{L^2}$ 
	& 0.33 (0.04) & 0.24 (0.05) & 0.18 (0.05) & 0.13 (0.02)  \\ 
	\hline
	$\| \hat{\rho}_\kappa - \rho_0 \|_{L^2}$ 
	& 3.37 (0.42) & 2.31 (0.23) & 2.05 (0.06) & 1.99 (0.03) \\
	$\| \hat{\rho}_\kappa - \rho_0 \|_{L^2}/\|\rho_0 \|_{L^2}$ 
	& 0.31 (0.04) & 0.21 (0.02) & 0.19 (0.01) & 0.18 (0.004) \\
	\hline
\end{tabular}
\caption{Average $L^2$-estimation errors and their standard deviations over 100 repeated experiments for the posterior mean $\hat{\rho}^{(n)}_\Pi$ and the averaged kernel estimate $\hat{\rho}_\kappa$.}
\label{Tab:2DTab}
\end{table}

Further details on the simulations studies, including diagnostic plots for the MCMC algorithm can be found in Section \ref{Sec:MoreSym} of the Supplement. There, additional experiments with different ground truths, purely spatial effects and over-parametrized models are also provided.

%
%
%
%
%

%%%%%%%%%%%%%%%%%%%%%%%%%%%%%%%%%%%%%%%%%%
\section{Applications to a Canadian wildfire dataset}\label{Sec:RealData}

%Wildfires are among the most frequent natural disasters, causing severe socio-economic damages across the globe. 
The study of the distribution of wildfires and of their relationship with geographical and environmental factors is well established in the spatial statistics community. Recent contributions were provided by \cite{JMS12,BGMMM20,KPDO23}, among the others. The existing literature highlights that the spread of wildfires is heavily influenced by meteorological conditions such as high temperatures, prolonged dry periods, and moderate-to-strong winds.

In this section, we present an application to a Canadian wildfire dataset. Canada maintains an advanced wildfire monitoring system, and detailed daily data spanning the last two decades is publicly available at the Canadian Wildland Fire Information System website (\url{http://cwfis.cfs.nrcan.gc.ca/home}), comprising the geographical coordinates of the hotspots and complete environmental information. For our analysis, we extracted from this large dataset annual recordings from 2004 to 2022 of the locations of the wildfires over the month of June (which corresponds to the peak activity in Canada, cf.~\cite{BGMMM20}), alongside coordinate-wise monthly average temperatures, precipitation levels and wind speeds. We focused on a few selected provinces, specifically Ontario, in the eastern part of Canada, Saskatchewan, in the central region, and British Columbia, on the Western coast. Here, we present the results for Ontario, deferring the rest of the analysis to Section \ref{Sec:MoreRealData} of the Supplement.

The Ontario dataset comprises $n=19$ spatial point patterns $\{N^{(i)}\}_{i=2004}^{2022}$ representing the wildfire locations, cf.~Figure \ref{Fig:Temp}, and the same number of tri-dimensional spatial covariate fields $\{Z^{(i)}\}_{i=2004}^{2022}$, where $Z^{(i)}=(Z^{(i)}_{\text{temp}},Z^{(i)}_{\text{prec}},Z^{(i)}_{\text{wind}})$. 
The data displays some strong variability, with the number of wildfires ranging from 2 (in June 2004) to 130 (in 2021), and with a wide spectrum of covariate values. Another distinctive characteristic is that the covariates exhibit fairly different behaviors: While the temperature fields mostly change smoothly over space, precipitations and winds tend to display more abrupt variations, possibly as a result of currents, orographic features and other environmental factors. The necessity to handle this heterogeneity provides the main motivation for the use of a multi-bandwidth method.

\begin{figure}[H]
\centering
\begin{subfigure}[b]{\textwidth}
	\centering
	\includegraphics[width=\linewidth]{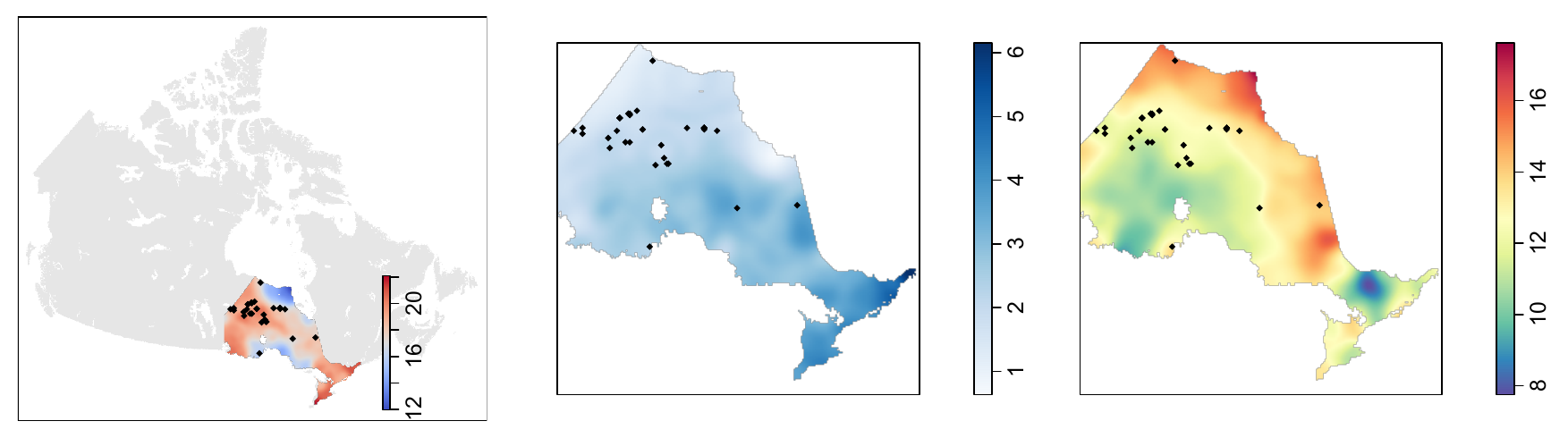}
\end{subfigure}
\begin{subfigure}[b]{\textwidth}
	\centering
	\includegraphics[width=\linewidth]{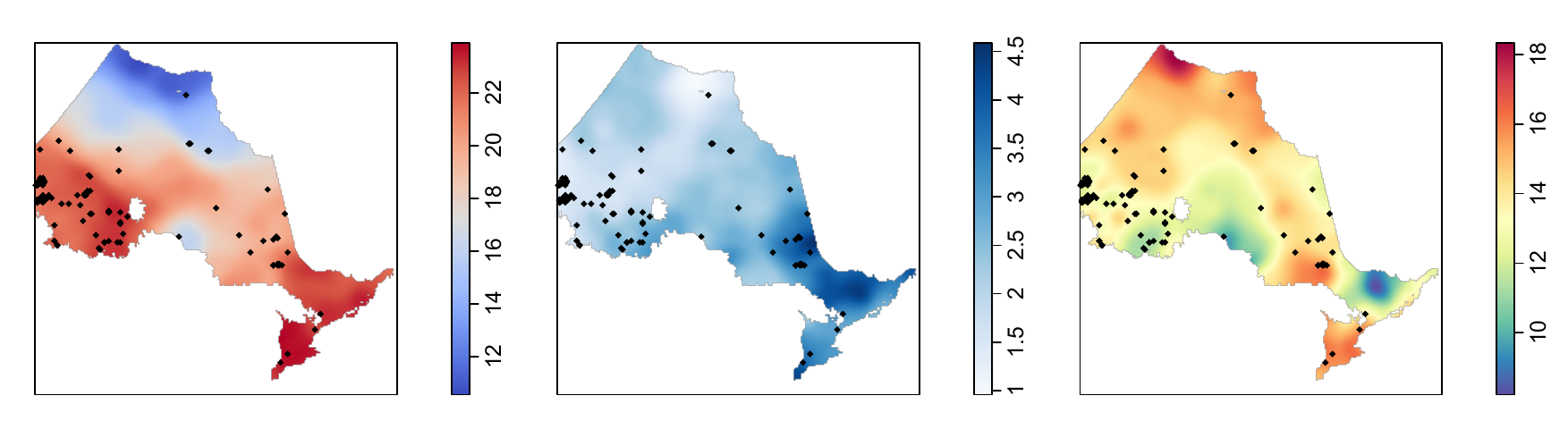}
\end{subfigure}
%    \begin{subfigure}[b]{\textwidth}
	%    \centering
	%    \includegraphics[width=\linewidth]{figures/wind_ontario_orig.pdf}
	%    \end{subfigure}
\caption{Top row, left to right: Average temperatures (in Celsius), precipitations (in $\text{mm/m}^2$) and wind speeds (in km/h) in Ontario during June 2013. Bottom row: Observations for 2021. The wildfires are represented by black dots (respectively, $34$ and $130$ in total).}     
\label{Fig:Temp}
\end{figure}

%
%
%

%%%%%%%%%%%%%%%%%%%%%%%%%%%%%%%%%%%%%%%%%%
\subsection{Exploratory univariate analysis}\label{Subsec:1DOntario}

For a preliminary analysis, the three panels of Figure \ref{Fig:1DOntario} show the posterior means $\hat\rho_{\Pi,\text{temp}}, \hat\rho_{\Pi,\text{prec}}, \hat\rho_{\Pi,\text{wind}}$, respectively obtained using each covariate individually. The results capture, in line with the literature, a positive association between higher temperatures and increased risks of wildfires, with a sharp raise between $16^\circ$C and $25^\circ$C. A strong negative impact is inferred for the precipitation level, particularly above 1 $\text{mm/m}^2$, while windy conditions appear to increase the intensity only for some distinctive median speeds around $13$ km/h. In Figure \ref{Fig:1DOntario}, we also include averaged kernel estimates constructed similarly to \eqref{Eq:AverageKernel}, with some structural modifications to better handle the variability exhibited by the number of observed wildfires and by the covariates across the years. Specifically, we restricted the individual `yearly' kernel estimators (defined as in \eqref{Eq:IndividualKernels}) to their empirical support, and then considered a weighted average, with weights proportional to the number of events. The kernel-based estimates are in general agreement with the trends identified by the posterior means; however, despite the aforementioned corrections, they tend to exhibit a slightly more erratic behavior, being more heavily influenced by outlying contributions, and displaying some boundary effects.

\begin{figure}[H]
\centering
%    \begin{subfigure}[b]{\textwidth}
	%    \centering
	%	\includegraphics[width=\linewidth]{figures/ontario_1d.pdf}
	%   \end{subfigure}
%    \begin{subfigure}[b]{\textwidth}
	\centering
	\includegraphics[width=\linewidth]{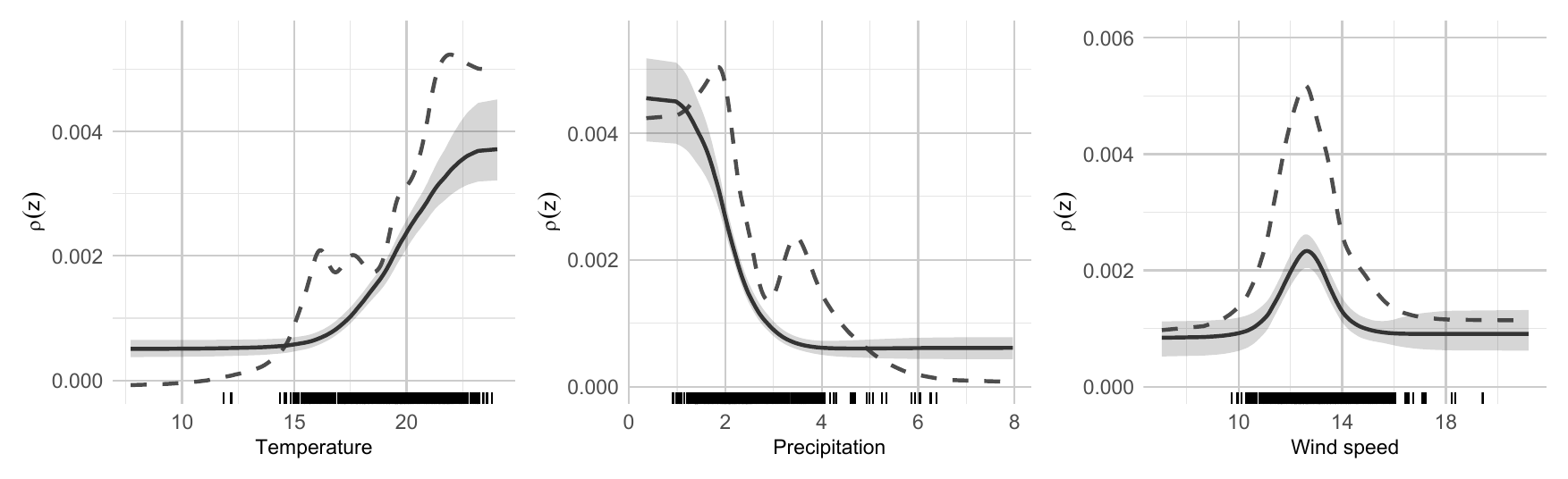}
	%    \end{subfigure}
\caption{
	Left to right: Posterior means (solid line) and point-wise $95\%$-credible intervals (shaded region) for the wildfire intensity as a function of the average temperature, precipitation level and wind speed, respectively. The dashed lines show the kernel-based estimates.}
\label{Fig:1DOntario}
\end{figure}

In the analysis, the individual covariates were mapped onto the unit interval $[0,1]$ as described in Remark \ref{Rem:BoundCov} via the c.d.f.~of the $N(0,10)$ distribution, and then transformed back to the original scale for the display of the estimates in Figure \ref{Fig:1DOntario} via an application of the inverse transformation. The parameters in the prior were chosen as  $\alpha_{\rho^*}=1$, $b_{\rho^*}=2$, $c_{\rho^*}=1$, $a_\theta = b_\theta =2$ and $\alpha_\gamma = b_\gamma=1$. $V = 200$ equally spaced nodes in $[0,1]$ were used for the discretization \eqref{Eq:Discret} of the functional parameter. The runs of the sampler were iterated for $20000$ steps, with burn-in times equal to $5000$. Across the three scenarios, the same step-size $\zeta = 0.1$ for the pCN algorithm was used, yielding a stabilization of the acceptance probabilities between 20\% and 30\%.

Figure \ref{Fig:1DOntarioSpatial} displays the plug-in  posterior means $\hat\lambda_{\rho,\text{temp}}^{(i)} := \hat\rho_{\Pi,\text{temp}}\circ Z^{(i)}_{\text{temp}}$ of the spatial intensity based on the location-specific average temperature, for some selected years $i=2013, 2015, 2021$. We note that, while the estimate $\hat\rho_{\Pi,\text{temp}}$ is based on the combined information from 2004 to 2022, the yearly variability of the covariates results in different spatial intensity estimates which manage to capture year-specific trends, even in years with a relatively low number of events.

\begin{figure}[H]
\centering
\includegraphics[width=\linewidth]{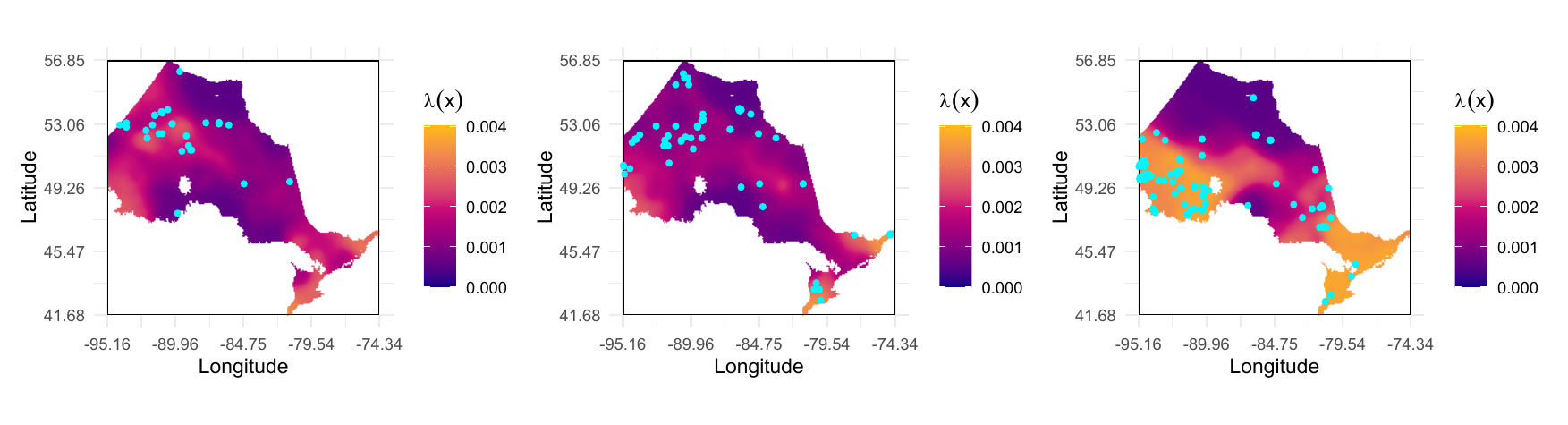}
\caption{
	Left to right: Plug-in posterior means of the spatial intensity based on the average temperature, for the years $2013, 2015$ and $2021$, respectively. }
\label{Fig:1DOntarioSpatial}
\end{figure}

%

%
%
%

%%%%%%%%%%%%%%%%%%%%%%%%%%%%%%%%%%%%%%%%%%
\subsection{Full analysis}\label{Subsec:3DOntario}

Next, we present the full analysis based on the joint information on temperatures, precipitations and winds. For ease of visualization, in Figure  \ref{Fig:3DOntarioMarginals}, we report two-dimensional `marginal plots' of the obtained posterior means, resulting from fixing the value of the average wind speeds at the $0.05$- and $0.95$-quantiles (10.72 km/h and 16.21 km/h, respectively), and at the median (13.50 km/h). These reinforce the findings from the exploratory step, with the greatest intensities being associated to higher temperatures (above $19^\circ$C) and drier conditions (with average precipitations below 2 $\text{mm/m}^2$). Relatively high residual risks are also detected at extreme temperatures, despite heavy precipitations, or in correspondence of particularly dry weather. Concerning the influence of the wind, an interesting shift is captured at the median, where the overall risk is higher, in agreement with the effect shown in Figure \ref{Fig:1DOntario} (right). We further note that the estimated intensity is generally lower at the $0.95$-quantile, indicating a negative impact of very strong winds. Here, kernel-based estimates were not pursued, since the implementation in the \texttt{R} package \texttt{spatstat} (\cite{BRT16}) that we used throughout the experiments does not readily handle more than two covariates. Figure \ref{Fig:3DOntationSpatial} shows the corresponding spatial plug-in posterior means for the years 2013, 2015 and 2021. Compared to Figure \ref{Fig:1DOntarioSpatial}, the three-dimensional model appears to be able to better reconstruct the structure of the point patterns across the years. This highlights the usefulness of employing joint meteorological information on temperatures, precipitations and winds in order to understand the distribution of wildfires.

\begin{figure}[H]
\centering
\includegraphics[width=\linewidth]{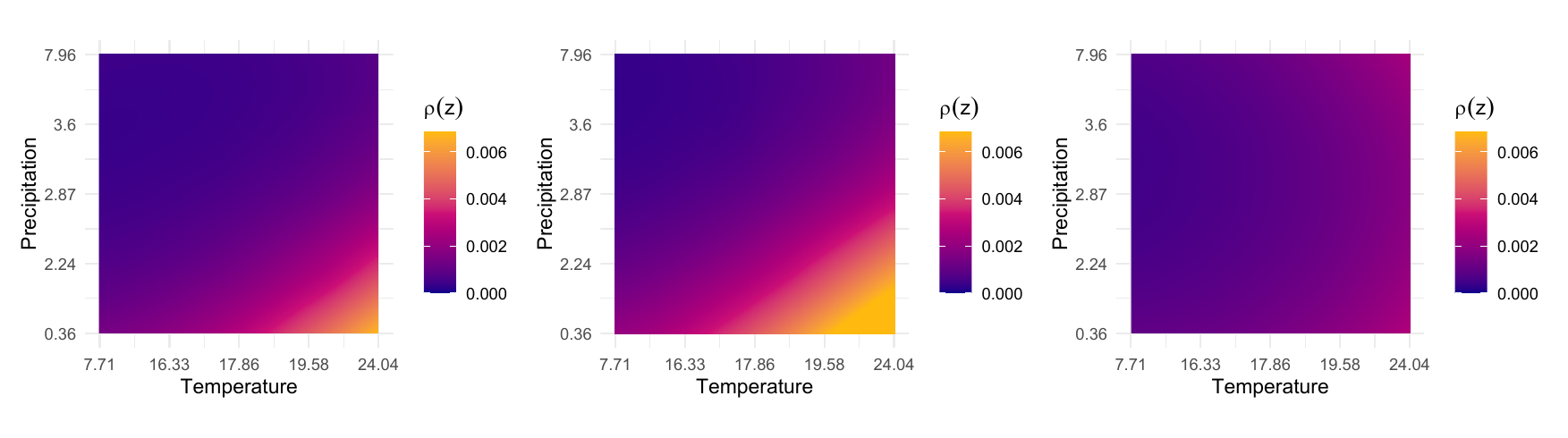}
\caption{Left to right: `Marginal' posterior means of the wildfire intensity as a function of the average temperature and precipitation level, at the $.05$ quantile (10.72 km/h), median (13.50 km/h) and $.95$ quantile (16.21 km/h) of the average wind speeds, respectively.}
\label{Fig:3DOntarioMarginals}
\end{figure}

\begin{figure}[H]
\centering
\includegraphics[width=\linewidth]{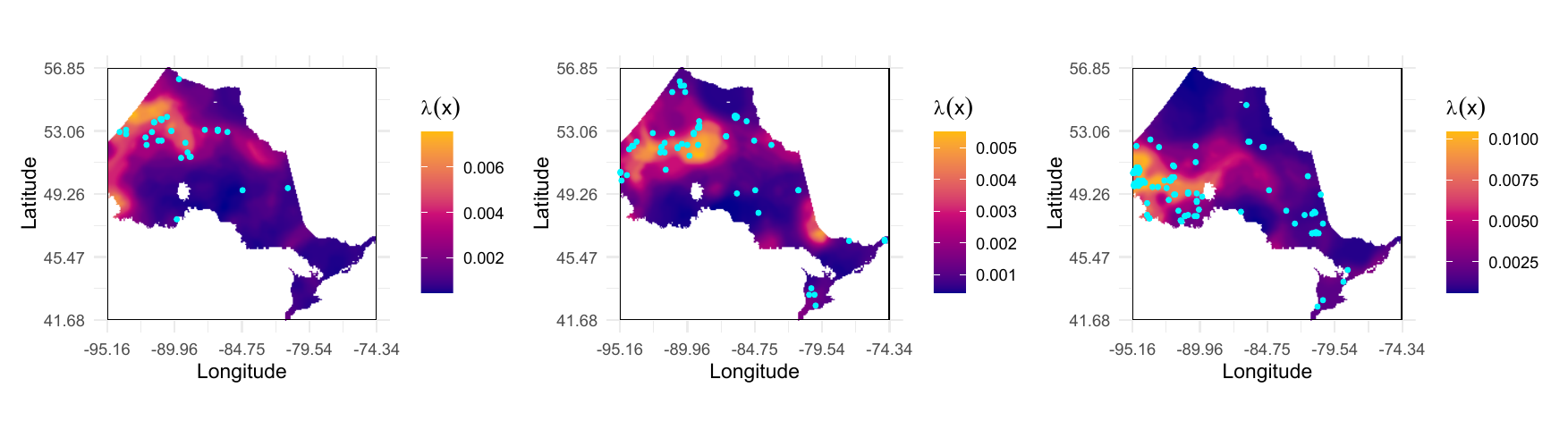}
\caption{Left to right: Plug-in posterior means of the spatial intensity based on average temperature, precipitation level and wind speed, for the years $2013, 2015$ and $2021$.}
\label{Fig:3DOntationSpatial}
\end{figure}

Here, we employed the same pre-processing of the covariates and the same values for the parameters of the prior as in the univariate analysis. The discretization of the (transformed) covariate space $[0,1]^3$ was obtained via a tetrahedral tessellation with $V=970$ nodes (and maximum element volume equal to $0.001$). The MCMC algorithm was iterated 20000 times (with 5000 burn-in samples), with pCN step-size equal to $\zeta = 0.1$.

%
%
%
%
%

%%%%%%%%%%%%%%%%%%%%%%%%%%%%%%%%%%%%%%%%%%
\section{Summary and discussion}\label{Sec:Discussion}

In this article, we have considered the problem of estimating the intensity function of a covariate-driven point process from i.i.d.~observations. We have devised novel multi-bandwidth Gaussian process methods, and shown that these achieve optimal adaptive posterior contraction rates towards (possibly) anisotropic ground truths (cf.~Theorem \ref{Theo:GPRates}). For implementation, we have constructed a Metropolis-within-Gibbs MCMC algorithm (cf.~Section \ref{Subsec:Algorithm}), relying on numerical likelihood evaluations and a dimension-robust sampling scheme. Our methods have been empirically assessed through numerical simulations (cf.~Section \ref{Sec:Simulations}), and applied to the analysis of a Canadian wildfire dataset (cf.~Section \ref{Sec:RealData}). Overall, our investigation highlights the usefulness of the proposed strategy, which offers optimal reconstruction guarantees, a feasible implementation, and good practical performances.

%
%
%

%%%%%%%%%%%%%%%%%%%%%%%%%%%%%%%%%%%%%%%%%%
\subsection{Theoretical open problems}

An important unexplored aspect of the problem are the statistical properties of the associated uncertainty quantification, since it is generally known that, in nonparametric statistical models, credible sets may have asymptotically vanishing frequentist coverage even if the posterior distribution is consistent, e.g.~\cite{DF86}. Potential directions to tackle this issue are the radius inflation strategy developed by \cite{SvdVvZ15}, or the derivation of suitable `nonparametric Bernstein-von Mises' theorems; see e.g.~\cite{CN13}, and also \cite{R17} in the context of adaptive procedures. For both of these, the posterior contraction rates derived here could furnish a key ‘localization' starting step.

Further, it would be of interest to extend our results to other nonparametric Bayesian procedures. It was recently shown by \cite{GKR25}, in a different increasing domain regime, that covariate-based P\'olya tree-type priors can achieve adaptive optimal point-wise posterior contraction rates. Since the local performance of Gaussian process methods is notoriously delicate to analyze, the latter could offer a flexible alternative also in the present scenario with i.i.d.~observations. Related to this, let us also mention the important issue of developing rigorous statistical guarantees for alternative non-Bayesian strategies, including for kernel-type strategies constructed as in \cite{G08,BCST12,BGMMM20}, for which our theoretical and empirical results may serve as a useful benchmark.

%%%%%%%%%%%%%%%%%%%%%%%%%%%%%%%%%%%%%%%%%%
\subsection{Extensions of the data analysis}

While our approach appears to satisfactorily capture the relationship between the occurrence of wildfires and the considered covariates, several refinements are possible. Firstly, we acknowledge that yearly data on wildfires and meteorological conditions is likely to have intrinsic temporal correlations. These could be incorporated within the underlying probabilistic framework via auto-regressive components for both the point patterns and the covariates, or also by spatio-temporal models such as the one recently investigated by \cite{MS26} (where, however, no covariates are considered). Additional covariates available in the Canadian Wildland Fire Information System website, as well as residual spatial effects along the lines of Remark \ref{Rem:DetCov}, could be incorporated to improve predictive power, albeit at the risk of possibly over-parametrizing the model. Lastly, additional latent random effects could be included, assuming that 
$$
    \lambda(x) = \rho(Z(x),Y(x)), 
    \qquad x\in\Wcal,
$$
where $Y:=\{Y(x), \ x\in\Wcal\}$ is an unobserved random field, modeled e.g.~via a Gaussian process as in the Log-Gaussian Cox process of \cite{MSW98}. This could provide important  robustness against latent spatial variability and dependencies, but it would require substantial modifications to present methodological and theoretical developments, that we leave for future research.

%
%
%
%
%

%%%%%%%%%%%%%%%%%%%%%%%%%%%%%%%%%%%%%%%%%%
%\section{Disclosure statement}\label{disclosure-statement}

%The authors declare no conflicts of interest.

%
%
%
%
%

%%%%%%%%%%%%%%%%%%%%%%%%%%%%%%%%%%%%%%%%%%
\section*{Data Availability Statement}\label{data-availability-statement}

The full data and $\texttt{R}$ code are available at the URL: \url{https://github.com/PatricDolmeta/Covariate-based-nonparametric-Bayesian-intensity-estimation}.

\bibliography{Bibliography.bib}

@inproceedings{AMM09,
	address = {New York, NY, USA},
	author = {Adams, Ryan Prescott and Murray, Iain and MacKay, David J. C.},
	booktitle = {Proceedings of the 26th Annual International Conference on Machine Learning},
	doi = {10.1145/1553374.1553376},
	isbn = {9781605585161},
	location = {Montreal, Quebec, Canada},
	numpages = {8},
	pages = {9--16},
	publisher = {Association for Computing Machinery},
	series = {ICML '09},
	title = {Tractable nonparametric {B}ayesian inference in {P}oisson processes with {G}aussian process intensities},
	year = {2009},
	bdsk-url-1 = {https://doi.org/10.1145/1553374.1553376}}

@article{ADH21,
	title={Rates of contraction of posterior distributions based on p-exponential priors},
	author={Agapiou, Sergios and Dashti, Masoumeh and Helin, Tapio},
	journal={Bernoulli},
	volume={27},
	number={3},
	pages={1616--1642},
	year={2021}
}

@article{BCST12,
	author = {Baddeley, Adrian and Chang, Ya-Mei and Song, Yong and Turner, Rolf},
	year = {2012},
	month = {01},
	pages = {221-236},
	title = {Nonparametric estimation of the dependence of a spatial point process on spatial covariates},
	volume = {5},
	journal = {Stat. Interface}
}

@book{BRT16,
	title={Spatial point patterns: methodology and applications with R},
	author={Baddeley, Adrian and Rubak, Ege and Turner, Rolf},
	volume={1},
	year={2016},
	publisher={Chapman and Hall, CRC},
	abbress = {Boca Raton, FL},}

@article {BS17,
	AUTHOR = {Bandyopadhyay, Soutir and Subba Rao, Suhasini},
	TITLE = {A test for stationarity for irregularly spaced spatial data},
	JOURNAL = {J. R. Stat. Soc. Ser. B. Stat. Methodol.},
	VOLUME = {79},
	YEAR = {2017},
	NUMBER = {1},
	PAGES = {95--123},
	ISSN = {1369-7412,1467-9868},
	MRCLASS = {62M30 (60G10 62E20 62M07 62M15)},
	MRNUMBER = {3597966}
}

@article{BBM99,
	title={Risk bounds for model selection via penalization},
	author={Barron, Andrew and Birg{\'e}, Lucien and Massart, Pascal},
	journal={Probab. Theory Relat. Fields.},
	volume={113},
	number={3},
	pages={301--413},
	year={1999},
	publisher={Springer}
}

@article {BSvZ15,
	AUTHOR = {Belitser, Eduard and Serra, Paulo and van Zanten, Harry},
	TITLE = {Rate-optimal {B}ayesian intensity smoothing for inhomogeneous {P}oisson processes},
	JOURNAL = {J. Statist. Plann. Inference},
	FJOURNAL = {Journal of Statistical Planning and Inference},
	VOLUME = {166},
	YEAR = {2015},
	PAGES = {24--35},
	ISSN = {0378-3758,1873-1171},
	MRCLASS = {62F15 (60G55 62G05)},
	MRNUMBER = {3390131},
	DOI = {10.1016/j.jspi.2014.03.009},
}

@article{BPD14,
	title={Anisotropic function estimation using multi-bandwidth {G}aussian processes},
	author={Bhattacharya, Anirban and Pati, Debdeep and Dunson, David},
	journal={Ann. Stat.},
	volume={42},
	number={1},
	pages={352},
	year={2014}
}

@article{B04,
	author = {Lucien Birg{\'e}},
	title = {{Model selection for Gaussian regression with random design}},
	volume = {10},
	journal = {Bernoulli},
	number = {6},
	publisher = {Bernoulli Society for Mathematical Statistics and Probability},
	pages = {1039 -- 1051},
	keywords = {Besov spaces, Hellinger distance, minimax risk, Model selection, random design regression},
	year = {2004},
	doi = {10.3150/bj/1106314849}
}

@article{BGMMM20,
	author = {Borrajo, M. I. and Gonz\'{a}lez-Manteiga, W. and Mart\'{\i}nez-Miranda, M. D.},
	doi = {10.1016/j.csda.2019.106875},
	fjournal = {Computational Statistics \& Data Analysis},
	issn = {0167-9473},
	journal = {Comput. Statist. Data Anal.},
	mrclass = {60G55 (62G07 62G09)},
	mrnumber = {4023908},
	mrreviewer = {Paulo E. Oliveira},
	pages = {106875, 21},
	title = {Bootstrapping kernel intensity estimation for inhomogeneous point processes with spatial covariates},
	volume = {144},
	year = {2020},
	bdsk-url-1 = {https://ezproxy-prd.bodleian.ox.ac.uk:2102/10.1016/j.csda.2019.106875},
	bdsk-url-2 = {https://doi.org/10.1016/j.csda.2019.106875}}

@incollection{B78,
	author = {Brillinger, David R.},
	booktitle = {Developments in statistics, {V}ol. 1},
	isbn = {0-12-426601-0},
	mrclass = {62M10 (60G55)},
	mrnumber = {501668},
	mrreviewer = {Alan\ G.\ Hawkes},
	pages = {33--133},
	publisher = {Academic Press},
	address = {New York-London},
	title = {Comparative aspects of the study of ordinary time series and of point processes},
	year = {1978}}

@article{CN13,
	author = {Isma{\"e}l Castillo and Richard Nickl},
	title = {{Nonparametric Bernstein–von Mises theorems in Gaussian white noise}},
	volume = {41},
	journal = {Ann. Statist.},
	number = {4},
	publisher = {Institute of Mathematical Statistics},
	pages = {1999--2028},
	keywords = {Bayesian inference, efficiency, plug-in property},
	year = {2013},
	doi = {10.1214/13-AOS1133}}

@article{CSRW13,
	title={{MCMC} Methods for Functions: Modifying Old Algorithms to Make Them Faster},
	volume={28},
	number={3},
	journal={Stat. Sci.},
	author={Cotter, S. L. and Roberts, G. O. and Stuart, A. M. and White, D.},
	year={2013}
}

@article {C55,
	AUTHOR = {Cox, D. R.},
	TITLE = {Some statistical methods connected with series of events},
	JOURNAL = {J. Roy. Statist. Soc. Ser. B},
	FJOURNAL = {Journal of the Royal Statistical Society. Series B. Methodological},
	VOLUME = {17},
	YEAR = {1955},
	PAGES = {129--157; discussion, 157--164},
	ISSN = {0035-9246},
	MRCLASS = {62.0X},
	MRNUMBER = {92301}
}

@book {C15,
	AUTHOR = {Cressie, Noel A. C.},
	TITLE = {Statistics for Spatial Data},
	SERIES = {Wiley Classics Library},
	PUBLISHER = {John Wiley \& Sons, Inc.},
	address = {New York},
	YEAR = {2015},
	PAGES = {xx+900},
	ISBN = {978-1-119-11461-1},
	MRCLASS = {62M30 (01A75 60D05 60G55 62H11 62M40)},
	MRNUMBER = {3559472},
}

@article {dMGK01,
	AUTHOR = {DiMatteo, Ilaria and Genovese, Christopher R. and Kass, Robert E.},
	TITLE = {Bayesian curve-fitting with free-knot splines},
	JOURNAL = {Biometrika},
	FJOURNAL = {Biometrika},
	VOLUME = {88},
	YEAR = {2001},
	NUMBER = {4},
	PAGES = {1055--1071},
	ISSN = {0006-3444,1464-3510},
	MRCLASS = {62F15 (62G08)},
	MRNUMBER = {1872219},
	DOI = {10.1093/biomet/88.4.1055},
}

@article{DF86,
	title={On the Consistency of {B}ayes Estimates},
	author={Diaconis, Persi and Freedman, David},
	journal={Ann. Statist.},
	volume={14},
	number={1},
	pages={1--26},
	year={1986},
	publisher={Institute of Mathematical Statistics}
}

@article{D90,
	author = {Peter J. Diggle},
	issn = {09641998, 1467985X},
	journal = {J. R. Stat. Soc. Ser. A},
	number = {3},
	pages = {349--362},
	publisher = {[Wiley, Royal Statistical Society]},
	title = {A point process modelling approach to raised incidence of a rare phenomenon in the vicinity of a prespecified point},
	urldate = {2023-10-10},
	volume = {153},
	year = {1990},
	bdsk-url-1 = {http://www.jstor.org/stable/2982977}}

@book{D14,
	author = {Diggle, Peter J.},
	edition = {Third},
	isbn = {978-1-4665-6023-9},
	mrclass = {62M30 (60G55 62P10 92B05)},
	mrnumber = {3113855},
	mrreviewer = {David\ Ginsbourger},
	pages = {xxxii+267},
	publisher = {CRC Press},
	abbress = {Boca Raton, FL},
	title = {Statistical Analysis of Spatial and Spatio-Temporal Point Patterns},
	volume = {128},
	year = {2014}}

@article{DLB91,
	author = {Peter J. Diggle and Nicholas Lange and Francine M. Beneš},
	journal = {J. Am. Stat. Assoc.},
	number = {415},
	pages = {618--625},
	title = {Analysis of Variance for Replicated Spatial Point Patterns in Clinical Neuroanatomy},
	urldate = {2025-06-05},
	volume = {86},
	year = {1991}
}

@inproceedings{DG25,
	title={Gaussian Process Methods for Covariate-Based Intensity Estimation},
	author={Dolmeta, Patric and Giordano, Matteo},
	booktitle={International Workshop on Functional and Operatorial Statistics},
	pages={185--192},
	year={2025},
	organization={Springer}
}

@article{DRRS17,
	author = {Donnet, Sophie and Rivoirard, Vincent and Rousseau, Judith and Scricciolo, Catia},
	doi = {10.1214/15-BA986},
	fjournal = {Bayesian Analysis},
	issn = {1936-0975},
	journal = {Bayesian Anal.},
	mrclass = {62N02 (60G55 62F15 62G05)},
	mrnumber = {3597567},
	mrreviewer = {Alicja Jokiel-Rokita},
	number = {1},
	pages = {53--87},
	title = {Posterior concentration rates for counting processes with {A}alen multiplicative intensities},
	volume = {12},
	year = {2017},
	bdsk-url-1 = {https://doi.org/10.1214/15-BA986}
}

@article{GGvdV00,
	author = {Ghosal, Subhashis and Ghosh, Jayanta K. and van der Vaart, Aad W.},
	date-added = {2019-09-10 10:06:23 +0100},
	date-modified = {2019-09-10 10:06:23 +0100},
	fjournal = {Ann. Statist.},
	journal = {Ann. Statist.},
	number = {2},
	pages = {500--531},
	title = {Convergence rates of posterior distributions},
	volume = {28},
	year = {2000}
}

@book{GvdV17,
	title={Fundamentals of nonparametric Bayesian inference},
	author={Ghosal, Subhashis and Van der Vaart, Aad W},
	year={2017},
	publisher={Cambridge University Press},
	address = {Cambridge}
}

@book{GN16,
	author = {Gin\'e, Evarist and Nickl, Richard},
	date-added = {2019-09-10 10:06:23 +0100},
	date-modified = {2019-09-10 10:06:23 +0100},
	doi = {10.1017/CBO9781107337862},
	isbn = {978-1-107-04316-9},
	mrclass = {62-02 (60B11 60F05 60F17 60G15 62E20 62Gxx)},
	mrnumber = {3588285},
	pages = {xiv+690},
	publisher = {Cambridge University Press},
	address = {New York},
	title = {Mathematical foundations of infinite-dimensional statistical models},
	year = {2016},
	bdsk-url-1 = {http://dx.doi.org/10.1017/CBO9781107337862}
}

@article{G23,
	title={Besov-Laplace priors in density estimation: optimal posterior contraction rates and adaptation},
	author={Giordano, Matteo},
	journal={Electron. J. Stat.},
	volume={17},
	number={2},
	pages={2210--2249},
	year={2023},
	publisher={The Institute of Mathematical Statistics and the Bernoulli Society}
}

@article{GKR25,
	title={Nonparametric {B}ayesian intensity estimation for covariate-driven inhomogeneous point processes},
	author={Giordano, Matteo and Kirichenko, Alisa and Rousseau, Judith},
	journal={Bernoulli (to appear)},
	year={2025}
}

@article{G08,
	author = {Guan, Yongtao},
	doi = {10.1198/016214508000000526},
	fjournal = {J. Am. Stat. Assoc.},
	issn = {0162-1459},
	journal = {J. Amer. Statist. Assoc.},
	mrclass = {Expansion},
	mrnumber = {2528839},
	number = {483},
	pages = {1238--1247},
	title = {On consistent nonparametric intensity estimation for inhomogeneous spatial point processes},
	volume = {103},
	year = {2008},
	bdsk-url-1 = {https://ezproxyprd.bodleian.ox.ac.uk:2102/10.1198/016214508000000526},
	bdsk-url-2 = {https://doi.org/10.1198/016214508000000526}}

@article{GS13,
	title={A note on non-parametric {B}ayesian estimation for {P}oisson point processes},
	author={Gugushvili, Shota and Spreij, Peter},
	journal={arXiv preprint arXiv:1304.7353},
	year={2013}
}

@article{GvdMSS20,
	title={Fast and scalable non-parametric Bayesian inference for Poisson point processes},
	author={Gugushvili, Shota and Van Der Meulen, Frank and Schauer, Moritz and Spreij, Peter},
	journal={arXiv preprint arXiv:1804.03616},
	year={2018}
}

@article{HST01,
	author = {Haario, Heikki and Saksman, Eero and Tamminen, Johanna},
	year = {2001},
	month = {04},
	pages = {223-242},
	title = {An Adaptive {M}etropolis Algorithm},
	volume = {7},
	journal = {Bernoulli}
}

@article{HSV14,
	title={Spectral gaps for a {M}etropolis--{H}astings algorithm in infinite dimensions},
	author={{H}airer, {M}artin and {S}tuart, {A}ndrew M. and {V}ollmer, {S}ebastian J.},
	journal={Ann. Appl. Probab.},
	volume={24},
	number={6},
	pages={2455--2490},
	year={2014}
}

@article{HL02,
	title={Random rates in anisotropic regression (with a discussion and a rejoinder by the authors)},
	author={Hoffman, Marc and Lepski, Oleg},
	journal={The Ann. Stat.},
	volume={30},
	number={2},
	pages={325--396},
	year={2002},
	publisher={Institute of Mathematical Statistics}
}

@book{IPSS08,
	author = {Illian, Janine and Penttinen, Antti and Stoyan, Helga and Stoyan, Dietrich},
	isbn = {978-0-470-01491-2},
	mrclass = {62-02 (60G55 62M07 62M09)},
	mrnumber = {2384630},
	pages = {xx+534},
	publisher = {John Wiley \& Sons, Ltd.},
	address = {Chichester},
	series = {Statistics in Practice},
	title = {Statistical analysis and modelling of spatial point patterns},
	year = {2008}}

@article{ISR12,
	author = {Illian, Janine B. and S{\o}rbye, Sigrunn H. and Rue, H\aa vard},
	doi = {10.1214/11-AOAS530},
	fjournal = {Ann. Appl. Stat.},
	issn = {1932-6157,1941-7330},
	journal = {Ann. Appl. Stat.},
	mrclass = {62M30 (60G55 62P12)},
	mrnumber = {3058673},
	number = {4},
	pages = {1499--1530},
	title = {A toolbox for fitting complex spatial point process models using integrated nested {L}aplace approximation ({INLA})},
	volume = {6},
	year = {2012},
	bdsk-url-1 = {https://doi.org/10.1214/11-AOAS530}}

@article{JMS12,
	title={Pinpointing spatio-temporal interactions in wildfire patterns},
	author={Juan, Pablo and Mateu, Jorge and Saez, M},
	journal={Stoch. Environ. Res. Risk Assess.},
	volume={26},
	pages={1131--1150},
	year={2012}
}

@article{KvZ15,
	author = {Kirichenko, Alisa and van Zanten, Harry},
	fjournal = {Journal of Machine Learning Research (JMLR)},
	issn = {1532-4435},
	journal = {J. Mach. Learn. Res.},
	mrclass = {62F15 (60G15 60G55 68T05)},
	mrnumber = {3450529},
	pages = {2909--2919},
	title = {Optimality of {P}oisson processes intensity learning with {G}aussian processes},
	volume = {16},
	year = {2015}
}

@article{KPDO23,
	author = {Jonathan Koh and Francois Pimont and Jean-Luc Dupuy and Thomas Opitz},
	title = {Spatiotemporal wildfire modeling through point processes with moderate and extreme marks},
	volume = {17},
	journal = {Ann. Appl. Stat.},
	number = {1},
	pages = {560 -- 582},
	year = {2023}
}

@article {KS07,
	AUTHOR = {Kottas, Athanasios and Sans\'{o}, Bruno},
	TITLE = {Bayesian mixture modeling for spatial {P}oisson process intensities, with applications to extreme value analysis},
	JOURNAL = {J. Statist. Plann. Inference},
	FJOURNAL = {Journal of Statistical Planning and Inference},
	VOLUME = {137},
	YEAR = {2007},
	NUMBER = {10},
	PAGES = {3151--3163},
	ISSN = {0378-3758,1873-1171},
	MRCLASS = {62M30 (62G07 62G32)},
	MRNUMBER = {2365118},
	DOI = {10.1016/j.jspi.2006.05.022},
}

@techreport{KG97,
	author = {L. Kuo and S. K. Ghosh},
	date-added = {2023-10-11 13:59:44 +0200},
	date-modified = {2023-10-11 14:01:20 +0200},
	institution = {University of Connecticut, Department of Statistics},
	title = {Bayesian nonparametric inference for nonhomogeneous Poisson processes},
	year = {1997}}

@book{K98,
	author = {Kutoyants, Yu. A.},
	doi = {10.1007/978-1-4612-1706-0},
	isbn = {0-387-98562-X},
	mrclass = {62M09 (62G05)},
	mrnumber = {1644620},
	mrreviewer = {Paul I. Nelson},
	pages = {viii+276},
	publisher = {Springer-Verlag},
	address = {New York},
	series = {Lecture Notes in Statistics},
	title = {Statistical inference for spatial {P}oisson processes},
	volume = {134},
	year = {1998},
	bdsk-url-1 = {https://doi-org.ezproxy-prd.bodleian.ox.ac.uk/10.1007/978-1-4612-1706-0},
	bdsk-url-2 = {https://doi.org/10.1007/978-1-4612-1706-0}}

@article{L82,
	author = {Lo, Albert Y.},
	doi = {10.1007/BF00575525},
	fjournal = {Zeitschrift f\"{u}r Wahrscheinlichkeitstheorie und Verwandte Gebiete},
	issn = {0044-3719},
	journal = {Z. Wahrsch. Verw. Gebiete},
	mrclass = {62M07 (60G55)},
	mrnumber = {643788},
	mrreviewer = {Albert\ M.\ Liebetrau},
	number = {1},
	pages = {55--66},
	title = {Bayesian nonparametric statistical inference for {P}oisson point processes},
	volume = {59},
	year = {1982},
	bdsk-url-1 = {https://doi.org/10.1007/BF00575525}}

@article{MS26,
	title={Posterior concentration in spatio-temporal Hawkes processes}, 
	author={Xenia Miscouridou and Deborah Sulem},
	year={2026},
	journal={arXiv preprint arXiv:2601.03719}
}

@article{MSW98,
	author = {M{\o}ller, Jesper and Syversveen, Anne Randi and Waagepetersen, Rasmus Plenge},
	doi = {10.1111/1467-9469.00115},
	fjournal = {Scand. J. Stat.. Theory and Applications},
	issn = {0303-6898,1467-9469},
	journal = {Scand. J. Statist.},
	mrclass = {62M09 (60G55 62C12 62F10 62M30)},
	mrnumber = {1650019},
	number = {3},
	pages = {451--482},
	title = {Log {G}aussian {C}ox processes},
	volume = {25},
	year = {1998}}

@book{N23,
	title={Bayesian Non-linear Statistical Inverse Problems},
	author={Nickl, R.},
	isbn={9783985470532},
	series={Zurich Lectures in Advanced Mathematics},
	year={2023},
	publisher={EMS Press}
}

@article{N87,
	author = {Nyssbaum, M.},
	title = {Nonparametric Estimation of a Regression Function that is Smooth in a Domain in $\mathbb{R}^k$},
	journal = {Theory Probab. Its Appl.},
	volume = {31},
	number = {1},
	pages = {108-115},
	year = {1987},
	doi = {10.1137/1131010},
	eprint = {https://doi.org/10.1137/1131010}}

@book{RW06,
	author = {Rasmussen, Carl Edward and Williams, Christopher K. I.},
	title = {Gaussian Processes for Machine Learning},
	publisher = {The MIT Press},
	year = {2005},
	month = {11},
	isbn = {9780262256834},
	doi = {10.7551/mitpress/3206.001.0001},
	eprint = {https://direct.mit.edu/book-pdf/2514321/book\_9780262256834.pdf},
}

@article{R17,
	title={Adaptive {B}ernstein--von {M}ises theorems in {G}aussian white noise},
	author={Ray, Kolyan},
	journal={The Ann. Stat.},
	volume={45},
	number={6},
	pages={2511--2536},
	year={2017}
}

@article{RMC09,
	author = {Rue, H{\aa}vard and Martino, Sara and Chopin, Nicolas},
	doi = {10.1111/j.1467-9868.2008.00700.x},
	fjournal = {Journal of the Royal Statistical Society. Ser. B},
	issn = {1369-7412,1467-9868},
	journal = {J. R. Stat. Soc. Ser. B},
	mrclass = {99-01},
	mrnumber = {2649602},
	number = {2},
	pages = {319--392},
	title = {Approximate {B}ayesian inference for latent {G}aussian models by using integrated nested {L}aplace approximations},
	volume = {71},
	year = {2009},
	bdsk-url-1 = {https://doi.org/10.1111/j.1467-9868.2008.00700.x}}

@book {S86,
	AUTHOR = {Silverman, B. W.},
	TITLE = {Density estimation for statistics and data analysis},
	SERIES = {Monographs on Statistics and Applied Probability},
	PUBLISHER = {Chapman \& Hall},
	ADDRESS = {London},
	YEAR = {1986},
	PAGES = {x+175},
	ISBN = {0-412-24620-1},
	MRCLASS = {62G05},
	MRNUMBER = {848134},
	MRREVIEWER = {David\ W.\ Scott},
	DOI = {10.1007/978-1-4899-3324-9},
}

@article{SvdVvZ15,
	title={Frequentist coverage of adaptive nonparametric Bayesian credible sets},
	author={Szabo, B. and van der Vaart, A. and van Zanten, J. H.},
	journal={Ann. Statist.},
	volume={43},
	pages={1391--1428},
	year={2015},
}

@article{vdVvZ08,
	author = {van der Vaart, A. W. and van Zanten, J. H.},
	date-added = {2019-09-10 10:06:23 +0100},
	date-modified = {2019-09-10 10:06:23 +0100},
	fjournal = {Ann. Statist.},
	journal = {Ann. Statist.},
	number = {3},
	pages = {1435--1463},
	title = {Rates of contraction of posterior distributions based on {G}aussian process priors},
	volume = {36},
	year = {2008}}

@article{vdVvZ09,
	author = {A. W. van der Vaart and J. H. van Zanten},
	title = {{Adaptive Bayesian estimation using a Gaussian random field with inverse Gamma bandwidth}},
	volume = {37},
	journal = {Ann. Statist.},
	number = {5B},
	publisher = {Institute of Mathematical Statistics},
	pages = {2655--2675},
	keywords = {Adaptation, Bayesian inference, ‎classification‎, Gaussian process priors, Nonparametric density estimation, Nonparametric regression, posterior distribution, rate of convergence},
	year = {2009},
	doi = {10.1214/08-AOS678}
}

@article{W07,
	author = {Waagepetersen, Rasmus Plenge},
	doi = {10.1111/j.1541-0420.2006.00667.x},
	fjournal = {Biometrics. Journal of the International Biometric Society},
	issn = {0006-341X,1541-0420},
	journal = {Biometrics},
	mrclass = {99-01},
	mrnumber = {2345595},
	number = {1},
	pages = {252--258, 315},
	title = {An estimating function approach to inference for inhomogeneous {N}eyman-{S}cott processes},
	volume = {63},
	year = {2007},}

@article{YL11,
	author = {Yue, Yu Ryan and Loh, Ji Meng},
	doi = {10.1111/j.1541-0420.2010.01531.x},
	fjournal = {Biometrics. Journal of the International Biometric Society},
	issn = {0006-341X,1541-0420},
	journal = {Biometrics},
	mrclass = {99-01},
	mrnumber = {2829268},
	number = {3},
	pages = {937--946},
	title = {Bayesian semiparametric intensity estimation for inhomogeneous spatial point processes},
	volume = {67},
	year = {2011},
	bdsk-url-1 = {https://doi.org/10.1111/j.1541-0420.2010.01531.x}}

\clearpage 
\begin{center}
\LARGE \textbf{Supplementary Material}
\end{center}
In this supplement, we present the proofs of all our results, additional simulations and further details on the data analysis.

\appendix

%%%%%%%%%%%%%%%%%%%%%%%%%%%%%%%%%%%%%%%%%%
\section{Proof of Theorem \ref{Theo:GPRates}}\label{Sec:ProofGPRates}

As mentioned in Section \ref{Subsec:Theory}, the main result
holds under a slightly more flexible prior class, summarized in the following condition. The prior constructed in Section \ref{Subsec:Prior}, which appears in the statement of Theorem \ref{Theo:GPRates}, represents a concrete instance to which the general theory applies.

\begin{condition}[Multi-bandwidth Gaussian process priors for covariate-based intensities] \label{Cond:GP}  
	Let $\Zcal\subset\R^d$, $d\in\N$, be a compact and convex set, and let $\Rcal$ be the set of measurable, bounded and nonnegative-valued functions defined on $\Zcal$. Let $\Pi$ be a prior supported on $\Rcal$ given by the law of the random function $\rho(z) = \rho^*\sigma(w(z)), \ z\in\Zcal$, where:
	\begin{enumerate}
		\item $\sigma:\R\to(0,1)$ is a smooth and strictly increasing function whose square-root is uniformly Lipschitz;
		
		\item $\rho^*\sim \Pi_{\rho^*}$, for any absolutely continuous prior $\Pi_{\rho^*}$ on $[0,c_{\rho^*}+\log n)$, for some fixed $c_{\rho^*}>0$, whose p.d.f.~(also denoted by $\Pi_{\rho^*}$) satisfies $\Pi_{\rho^*}(r)>0$ for all $r\in[0,c_{\rho^*}+\log n)$.
		
		\item Independently of $\rho^*$, $\ell=(\ell_1,\dots,\ell_d)$ with $ \ell_1,\dots, \ell_d\iid \Pi_\ell$, defined as follows: Let $\theta_1,\dots,\theta_d\iid \Pi_\theta$ for any absolutely continuous distribution $\Pi_\theta$ on $[0,1]$ whose p.d.f.~(also denoted by $\Pi_\theta$) satisfies $\Pi_{\theta}(t) >0$
		for all $t\in(0,1)$. For each $j=1,\dots,d$, given $\theta_j$, set $ \ell_j = \gamma_j^{\theta_j/d}$, where $\gamma_1,\dots,\gamma_d\iid \Pi_\gamma$, for any absolutely continuous distribution $\Pi_\gamma$ on $[0,\infty)$ whose p.d.f.~(also denoted by $\Pi_\gamma)$ satisfies
		\begin{equation}
			\label{Eq:PiGammaTail}
			c_\gamma g^{a_\gamma}
			e^{-b_\gamma g \log^{k_\gamma}g}
			\le \Pi_{\gamma}(g)
			\le 
			C_\gamma g^{a_\gamma}
			e^{-B_\gamma g \log^{k_\gamma}g}
		\end{equation}
		for all sufficiently large $g>0$ and for universal constants $ c_\gamma, C_\gamma, b_\gamma, B_\gamma >0 $ and $a_\gamma, k_\gamma\ge0$.
		
		\item Independently of $\rho^*$, $w\sim \Pi_W$, defined as follows: Conditionally on $ \ell_1,\dots, \ell_d\iid \Pi_\ell$, let $
		w|\ell\sim \Pi_{W_\ell}$, given by the law of the restriction $W_\ell =\{W_\ell(z), \ z\in\Zcal\}$ to $\Zcal$ of a centered and stationary Gaussian process on $\R^d$ with kernel having spectral expansion
		\begin{equation}
			\label{Eq:GenCovKernel}
			\Enorm[W_\ell(z)W_\ell(z')] = \int_{\R^d}e^{-i\sum_{j=1}^d\xi_j  \ell_j(z_j - z_j')}d\mu(\xi),
			\quad z=(z_1,\dots,z_d), \ z' = (z_1',\dots,z_d'),
		\end{equation}
		and whose spectral measure $\mu$ satisfies
		\begin{equation}
			\label{Eq:SpMeas}
			\int_{\R^d} e^{c_\mu|\xi|}d\mu(\xi)<\infty
		\end{equation}
		for some $c_\mu>0$.
	\end{enumerate}
	
\end{condition}%%%%%%%%%%%%%%%%%%%%%%%%%%

We refer to e.g.~\cite[Chapter 11]{GvdV17} for background information on stationary Gaussian processes. Below, for sets $\Theta$, semi-metrics $\delta$ on $\Theta$ and any $\varepsilon>0$, the covering numbers $\Ncal(\varepsilon;\Theta,\delta)$ are defined as the smallest number of balls of $\delta$-radius equal to $\varepsilon$ required to cover $\Theta$.

\begin{proof}[(of Theorem \ref{Theo:GPRates})]

We verify the conditions of the general concentration result below, Theorem \ref{Theo:GenContrRates}, with $\varepsilon_n = n^{-\alpha_0/(2\alpha_0+1)}\log^{c_1} n$ and $\bar\varepsilon_n = n^{-\alpha_0/(2\alpha_0+1)}\log^{(2c_1+2+d)/2} n$ for some sufficiently large $c_1>0$. Write shorthand $\|\cdot\|_{\infty} = \|\cdot\|_{L^\infty(\Zcal)}$. Since $\rho_0$ is continuous and $\Zcal$ is compact, we have $\|\rho\|_\infty<c_{\rho^*}+\log n$ provided that $n$ is large enough, whence, for all such $n$'s, $\|\rho_0\|_\infty$ is included in the interior of the support of $\Pi_{\rho^*}$. Following the argument in Section 4.1 of \cite{KvZ15}, we may write $\rho_0 = (\|\rho_0\|_\infty+1)\sigma \circ w_0$, where $w_0:\Zcal\to\R$ is given by $w_0:=\sigma^{-1}\circ (\rho_0/(\|\rho_0\|_\infty+1))$. Since $\rho_0$ is bounded away from zero by assumption, the function $\rho_0/(\|\rho_0\|_\infty+1)$ too is bounded away from zero. Noting that also $\rho_0/(\|\rho_0\|_\infty+1)<1$, we may then conclude that $w_0\in C^\alpha(\Zcal)$ in view of the fact that $\sigma^{-1}$ is smooth over (0,1).

We start with the verification of the prior mass condition \eqref{Eq:SmallBall}. We have
\begin{align*}
	\Pi&(\rho : \|\rho - \rho_0\|_\infty\le \varepsilon_n)\\
	&=\Pi\left((\rho^*,w) : \|(\rho^* - (\|\rho_0\|_\infty+1))\sigma\circ w + (\|\rho_0\|_\infty+1)(\sigma\circ w - \sigma\circ w_0)\|_\infty\le \varepsilon_n\right)\\
	%    &\ge\Pi\left((\rho^*,w) : |\rho^* - (\|\rho_0\|_\infty+1)|\le \varepsilon_n,\  \|\sigma\circ w - \sigma\circ w_0)\|_\infty\le 2\varepsilon_n/(\|\rho_0\|_\infty+1)\right)\\
	&\ge \Pi_{\rho^*}\left(r : |r - (\|\rho_0\|_\infty+1)|\le \varepsilon_n/2\right)
	\Pi_W\left(w:\|\sigma\circ w - \sigma\circ w_0\|_\infty
	\le \varepsilon_n/(2\|\rho_0\|_\infty+2)\right).
\end{align*}
Since $\Pi_{\rho^*}$ has a positive and continuous density by assumption and since $n\varepsilon_n^2\to\infty$, the first probability is bounded below by $c_3\varepsilon_n \ge e^{-n\varepsilon_n^2}$
as $n\to\infty$ for some $c_3>0$. Further, note that since $\sqrt{\sigma}$ is Lipschitz by assumption and since $\sigma\le 1$, the function $\sigma$ too is Lipschitz (with Lipschitz constant bounded by twice that of $\sqrt \sigma$), whence the second probability is greater than $\Pi_W\left(w:\| w -  w_0\|_\infty\le c_4\varepsilon_n\right)$ for some $c_4>0$. For $\varepsilon_n$ as above, provided that $c_1$ is large enough, the latter is bounded below by $e^{-n\varepsilon_n^2}$ by Lemma \ref{Lem:GPPriorMass}. This shows that condition \eqref{Eq:SmallBall} holds (with $C_1 = 2$).

Moving onto the sieve condition \eqref{Eq:Sieves}, for $\Bcal_n$ the set defined as in Lemma \ref{Lem:GPSieves} below, take
$$
\Rcal_n:=\bigcup_{r\le c_{\rho^*}+\log n} r \Scal_n,
\qquad \Scal_n :=\{\sigma\circ w, \ w\in\Bcal_n\}.
$$
Then, recalling the prior construction from Condition \ref{Cond:GP},
\begin{align*}
	\Pi( \Rcal_n^c)
	&=\int_0^{c_{\rho^*}+\log n}\Pi_W\left(w : r\sigma\circ w \notin\Rcal_n \right)
	\Pi_{\rho^*}(r)dr.
\end{align*}
For all $r\le c_{\rho^*}+\log n$, we have $\Pi_W(w:r\sigma\circ w\notin\Rcal_n) \le  \Pi_W(w:\sigma\circ w\notin\Scal_n)\le\Pi_W(\Bcal_n^c)$. By Lemma \ref{Lem:GPSieves}, for any $C_2>1$, we may choose the sequences $\eta_n,R_n,T_n$ in the definition of $\Bcal_n$ so that $\eta_n\le \bar\varepsilon_n/\sqrt{c_{\rho^*}+\log n}$ and $\Pi_W(\Bcal_n^c)\le e^{-C_2n\varepsilon_n^2}$. Combined with the previous display, this shows that $\Pi( \Rcal_n^c)\le e^{-C_2n\varepsilon_n^2}$. Further, for $\bar\varepsilon_n$ as above, the set $\Bcal_n$ also satisfies
\begin{align}
	\label{Eq:AlmostThere}
	\log \Ncal(\bar\varepsilon_n/\sqrt{c_{\rho^*}+\log n};\Bcal_n,\|\cdot\|_{\infty})\le 
	\log \Ncal(\eta_n;\Bcal_n,\|\cdot\|_{\infty})
	\lesssim
	n\bar\varepsilon_n^2
\end{align}
by Lemma \ref{Lem:GPSieves}. We proceed verifying the sup-norm metric entropy inequality \eqref{Eq:SupNormEntropy}, which, as observed in Remark \ref{Rem:SupNormEntropy}, is a sufficient condition for the complexity bound in \eqref{Eq:Sieves} to hold. Since $\sqrt\sigma$ is bounded and Lipschitz by assumption we have, for any $r_1,r_2\in[0,c_{\rho^*}+\log n)$ and any $w_1,w_2\in\Bcal_n$,
\begin{align*}
	\|\sqrt{r_1\sigma\circ w_1} - \sqrt{r_2\sigma\circ w_2}\|_\infty
	%    &\le \|\sqrt{r_1\sigma\circ w_1} - \sqrt{r_2\sigma\circ w_1}\|_\infty+ \sqrt{r_2\sigma\circ w_1}-\sqrt{r_2\sigma\circ w_2}\|_\infty\\
	&\le |\sqrt{r_1} - \sqrt{r_2}| + c_6\sqrt{c_{\rho^*}+\log n}\|w_1 - w_2\|_\infty\\
	&\le \sqrt{|r_1 - r_2|} + c_6\sqrt{c_{\rho^*}+\log n}\|w_1 - w_2\|_\infty
\end{align*}
for some $c_6>0$ only depending on $\sigma$. Therefore, in view of \eqref{Eq:AlmostThere},
%\begin{align*}
%    \Ncal(\bar\varepsilon_n;\sqrt{\Rcal_n},\|\cdot\|_\infty)
%    \le \Ncal(\bar\varepsilon_n/2;[0,c_{\rho^*}+\log n],\sqrt{|\cdot|})
%    \Ncal(\bar\varepsilon_n/\sqrt{c_{\rho^*}+\log n};\Bcal_n,\|\cdot\|_\infty),
%\end{align*}
%whence
\begin{align*}
	\log&\Ncal\left(\bar\varepsilon_n;\sqrt{\Rcal_n},\|\cdot\|_\infty\right)\\
	&\le \log\Ncal\left(\bar\varepsilon_n/2;[0,c_{\rho^*}+\log n],\sqrt{|\cdot|}\right)
	+\log\Ncal(\bar\varepsilon_n/(2c_6\sqrt{c_{\rho^*}+\log n});\Bcal_n,\|\cdot\|_\infty)\\
	&\lesssim \log((c_{\rho^*}+\log n)/\bar\varepsilon_n) + n\bar\varepsilon_n^2
	\lesssim n\bar\varepsilon_n^2.
\end{align*}
The claim of Theorem \ref{Theo:GPRates} now follows from an application of Theorem \ref{Theo:GenContrRates} with the choice $M=c_{\rho^*}+\log n$, upon setting $C = c_1+2+d/2$.
\end{proof}

%
%
%

%%%%%%%%%%%%%%%%%%%%%%%%%%%%%%%%%%%%%%%%%%
\subsection{Prior mass for multi-bandwidth Gaussian processes with independent length-scales}

The following lemma provides a lower bound, required in the proof of Theorem \ref{Theo:GPRates}, for the probability of small sup-norm neighborhoods charged by the randomly re-scaled Gaussian prior $\Pi_W$ defined in Condition \ref{Cond:GP}. The result extends the third claim of Theorem 3.1 in \cite{BPD14} to the present construction with independent length-scales.

\begin{lemma}\label{Lem:GPPriorMass}%%%%%%
	For $\alpha = (\alpha_1,\dots,\alpha_d)\in (0,\infty)^d$,
	let $w_0\in C^\alpha(\Zcal)$, and let $\Pi_W$ be a prior for $w$ constructed as in Condition \ref{Cond:GP}. Then, there exists a constant $K_1>0$ only depending on $w_0, d$ and the spectral measure $\mu$ from \eqref{Eq:SpMeas} such that, for all sufficiently small $\varepsilon>0$,
	\begin{align*}
		\Pi_W\left(w:\|w - w_0\|_{L^\infty(\Zcal)}\le \varepsilon \right)
		&\ge e^{-(1/\varepsilon)^{1/\alpha_0}\log^{K_1}(1/\varepsilon)},
		\qquad \alpha_0 = 1/\sum_{j=1}^d\alpha_j^{-1}.
	\end{align*}
	In particular, setting $\varepsilon_n = n^{-\alpha_0/(2\alpha_0+1)}\log^{K_2}n$ for any $K_2> K_1\alpha_0/(2\alpha_0+1)$, it holds for all sufficiently large $n$ that 
	\begin{equation}\label{Eq:GPPriorMass}
		\Pi_W\left(w:\|w - w_0\|_{L^\infty(\Zcal)}\le \varepsilon_n \right)
		\ge e^{- n\varepsilon_n^2}.
	\end{equation}
\end{lemma}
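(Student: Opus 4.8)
The plan is to establish the small-ball estimate by conditioning on the length-scales $\ell=(\ell_1,\dots,\ell_d)$, thereby separating the two sources of randomness in $\Pi_W$: the Gaussian fluctuations of $W_\ell$ given $\ell$, and the hyperprior $\Pi_\ell$ on the length-scales themselves. Writing
\[
\Pi_W\!\left(\|w-w_0\|_{L^\infty(\Zcal)}\le\varepsilon\right)=\int \Pi_{W_\ell}\!\left(\|w-w_0\|_\infty\le\varepsilon\right)d\Pi_\ell(\ell)\ge \Pi_\ell(A_\varepsilon)\,\inf_{\ell\in A_\varepsilon}\Pi_{W_\ell}\!\left(\|w-w_0\|_\infty\le\varepsilon\right),
\]
for a suitable set $A_\varepsilon$ of \emph{good} length-scales, the problem splits into (i) a uniform lower bound on the conditional Gaussian small-ball probability over $A_\varepsilon$, and (ii) a lower bound on the hyperprior mass $\Pi_\ell(A_\varepsilon)$. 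The set $A_\varepsilon$ should consist of length-scale vectors whose components are polynomially large in $1/\varepsilon$ and tuned to the directional regularities, with $\ell_j$ of the representative order $(1/\varepsilon)^{2/\alpha_j}$, so that the associated anisotropic process can both resolve $w_0$ to accuracy $\varepsilon$ and retain a controlled centered small-ball exponent of order $\prod_j\sqrt{\ell_j}\asymp(1/\varepsilon)^{1/\alpha_0}$.

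For step (i), I would invoke the Gaussian small-ball inequality (see e.g.\ \cite[Chapter 11]{GvdV17}), which bounds $\Pi_{W_\ell}(\|w-w_0\|_\infty\le\varepsilon)$ from below by $e^{-\phi^\ell_{w_0}(\varepsilon/2)}$ in terms of the concentration function $\phi^\ell_{w_0}$ of $W_\ell$. The latter I would control through its two constituents: the decentering term $\inf\{\tfrac12\|h\|^2_{\mathbb{H}_\ell}:\|h-w_0\|_\infty\le\varepsilon/2\}$, handled via the anisotropic RKHS-approximation theory underlying the third claim of Theorem 3.1 in \cite{BPD14} for $w_0\in C^\alpha(\Zcal)$; and the centered small-ball exponent $-\log\Pi_{W_\ell}(\|w\|_\infty\le\varepsilon/2)$, controlled through the metric entropy of the rescaled stationary process, where the exponential spectral-moment condition \eqref{Eq:SpMeas} is exactly what furnishes the requisite analytic-type bounds. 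Over $A_\varepsilon$, both contributions are of order $(1/\varepsilon)^{1/\alpha_0}$ up to logarithmic factors, giving the conditional bound $e^{-C(1/\varepsilon)^{1/\alpha_0}\log^{c}(1/\varepsilon)}$.

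Step (ii) is where the construction with \emph{independent} length-scales departs from \cite{BPD14}, and I expect it to be the main obstacle. Since $\ell_1,\dots,\ell_d$ are i.i.d.\ under $\Pi_\ell$ rather than coupled through a Dirichlet as in \cite{BPD14}, the event $A_\varepsilon$ can be taken of product form, so that $\Pi_\ell(A_\varepsilon)$ factorizes across directions and matters reduce to a one-dimensional estimate for each $\Pi_\ell(\ell_j\in I_j)$. Recalling $\ell_j=\gamma_j^{\theta_j/d}$, I would lower bound this by restricting $\theta_j$ to a fixed compact sub-interval of $(0,1)$, on which $\Pi_\theta$ has a positive density and thus contributes only a logarithmic cost, and $\gamma_j$ to the corresponding window, whose probability is bounded below by the lower tail in \eqref{Eq:PiGammaTail}. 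The delicate point is to calibrate the windows $I_j$ and the admissible exponents so that the resulting hyperprior cost stays of order $(1/\varepsilon)^{1/\alpha_0}\log^{c}(1/\varepsilon)$, matching the conditional small-ball exponent; this is the analogue, for the independent randomization, of the tight complexity control achieved in \cite{BPD14} through the Dirichlet coupling, and verifying that decoupling the length-scales does not inflate the prior mass is the crux of the argument.

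Combining (i) and (ii) yields the first displayed bound with some constant $K_1$ depending on $w_0$, $d$ and $\mu$. The final claim \eqref{Eq:GPPriorMass} then follows by substituting $\varepsilon_n=n^{-\alpha_0/(2\alpha_0+1)}\log^{K_2}n$ and comparing powers of $\log n$: since $(1/\varepsilon_n)^{1/\alpha_0}\asymp n^{1/(2\alpha_0+1)}\log^{-K_2/\alpha_0}n$ while $\log(1/\varepsilon_n)\asymp\log n$ and $n\varepsilon_n^2\asymp n^{1/(2\alpha_0+1)}\log^{2K_2}n$, one has $(1/\varepsilon_n)^{1/\alpha_0}\log^{K_1}(1/\varepsilon_n)\asymp n^{1/(2\alpha_0+1)}\log^{K_1-K_2/\alpha_0}n$, and the condition $K_2>K_1\alpha_0/(2\alpha_0+1)$ is precisely what makes this at most $n\varepsilon_n^2$ for all large $n$, so that $e^{-(1/\varepsilon_n)^{1/\alpha_0}\log^{K_1}(1/\varepsilon_n)}\ge e^{-n\varepsilon_n^2}$.
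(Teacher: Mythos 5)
Your overall architecture coincides with the paper's proof: condition on $\ell$, lower bound the conditional Gaussian small-ball probability uniformly over a good set of length-scales via the anisotropic concentration-function results of \cite{BPD14} (the paper invokes their Lemmas 4.2 and 4.3 directly), use independence to factorize the hyper-prior mass across directions, and finish \eqref{Eq:GPPriorMass} by the same comparison of $\log$-powers (your last paragraph is correct and matches the paper). The genuine gap is that you stop at exactly the step you yourself call the crux: you never exhibit admissible exponents and windows for which the hyper-prior cost is of order $e^{-(1/\varepsilon)^{1/\alpha_0}\log^{c}(1/\varepsilon)}$, and the recipe you do state --- restrict $\theta_j$ to a \emph{fixed} compact sub-interval of $(0,1)$ --- would fail. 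Quantitatively: to get $\ell_j=\gamma_j^{\theta_j/d}$ of the required size one needs $\gamma_j\asymp(1/\varepsilon)^{d/(\theta_j\alpha_j)}$, so by \eqref{Eq:PiGammaTail} direction $j$ costs $\exp\left(-c\,(1/\varepsilon)^{d/(\theta_j\alpha_j)}\log^{c'}(1/\varepsilon)\right)$. This matches the target exponent $1/\alpha_0$ only when $d/(\theta_j\alpha_j)\le 1/\alpha_0+O(1/\log(1/\varepsilon))$, i.e.\ only when $\theta_j$ lies in an $O(1/\log(1/\varepsilon))$-neighborhood of the specific value $d\alpha_0/\alpha_j$; for any $t$ with $d/(t\alpha_j)\ge 1/\alpha_0+\delta$, $\delta>0$ fixed, the factor $e^{-(1/\varepsilon)^{1/\alpha_0+\delta}}$ is smaller than any admissible bound. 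The paper's proof consists precisely of this calibration: it takes $\Theta_j=\{t\in[0,1]:\,c_4/\log(1/\varepsilon)<t-d\alpha_0/\alpha_j<2c_4/\log(1/\varepsilon)\}$, shows $(1/\varepsilon)^{d/(t\alpha_j)}\le c_7(1/\varepsilon)^{1/\alpha_0}$ on $\Theta_j$, and uses positivity of the density $\Pi_\theta$ to get $\Pi_\theta(\Theta_j)\gtrsim 1/\log(1/\varepsilon)$ --- this shrinking-window mass, not the restriction of $\theta_j$ to a fixed interval, is where the genuinely logarithmic cost arises. Identifying $d\alpha_0/\alpha_j$ as the correct center and $1/\log(1/\varepsilon)$ as the correct width is the entire content of the claim that decoupling the length-scales does not inflate the prior mass; without it the lemma is not proved.

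A secondary but consequential slip: under the normalization of Condition \ref{Cond:GP}, where $\ell_j$ rescales the argument linearly in \eqref{Eq:GenCovKernel}, the good length-scales are $\ell_j\asymp(1/\varepsilon)^{1/\alpha_j}$ and the conditional small-ball exponent is $\prod_j\ell_j$ (this is the form in which the paper applies \cite{BPD14}, under the constraint $\sum_j\ell_j^{-\alpha_j}\lesssim\varepsilon$). Your orders $\ell_j\asymp(1/\varepsilon)^{2/\alpha_j}$ with exponent $\prod_j\sqrt{\ell_j}$ belong to the squared-distance ARD parametrization \eqref{Eq:ARD}; if carried into the hyper-prior relation $\ell_j=\gamma_j^{\theta_j/d}$ they force $\gamma_j\asymp(1/\varepsilon)^{2d/(\theta_j\alpha_j)}$, doubling the gamma-cost exponent and making the calibration above impossible for any $\theta_j\le 1$, already in the isotropic case. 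Since the whole lemma is a matching of exponents, this normalization must be fixed before step (ii) can be carried out.
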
%%%%%%%%%%%%%%%%%%%%%%%%%%%%%%%

\begin{proof}%%%%%%%%%%%%%%%%%%%%%%%%%%%%%
	Write shorthand $\|\cdot\|_\infty = \|\cdot\|_{L^\infty(\Zcal)}$. Let $\varepsilon>0$, and let $\ell = ( \ell_1,\dots, \ell_d)$, $ \ell_1,\dots, \ell_d\iid \Pi_\ell$ and $w|\ell\sim \Pi_{W_\ell}$ be as in Condition \ref{Cond:GP}. Then, we have
	\begin{equation*}
		\begin{split}
			\Pi_W&\left(w:\|w - w_0\|_\infty\le \varepsilon \right)\\
			&=\int_0^\infty\dots\int_0^\infty \Pi_{W}\left(w:\|w - w_0\|_\infty\le \varepsilon |\ell_1,\dots,\ell_d\right)
			d\Pi_\ell( \ell_1)\dots d\Pi_\ell(\ell_d)\\
			&=\int_0^\infty\dots\int_0^\infty \Pi_{W_\ell}\left(w:\|w - w_0\|_\infty\le \varepsilon \right)
			d\Pi_\ell( \ell_1)\dots d\Pi_\ell( \ell_d).
		\end{split}
	\end{equation*}
	Let $\ell^*=\prod_{j=1}^d \ell_j$, $\bar \ell =\max_{j=1,\dots,d} \ell_j$ and $\underline\ell = \min_{j=1,\dots,d} \ell_j$. By a combination of Lemmas 4.2 and 4.3 of \cite{BPD14}, for any fixed $\ell_0>0$ there exist constants $\varepsilon_0\in(0,1/2)$ and $c_1,c_2>0$ only depending on $w_0$, $d$ and $\mu$ such that
	$$
	\Pi_{W_\ell}\left(w:\|w - w_0\|_{\infty}
	\le \varepsilon \right)
	\ge e^{-c_1\ell^*\log^{1+d}(\bar\ell /\varepsilon)},
	$$
	for all $\varepsilon < \varepsilon_0$ and all $\ell$ such that $\underline\ell>\ell_0$ and $\sum_{j=1}^d \ell_j^{-\alpha_j}\le d\varepsilon/c_2$. Thus, provided that $\varepsilon < \varepsilon_0\land c_2\ell_0^{-\overline\alpha}$,
	\begin{align*}
		\Pi_W&\left(w:\|w - w_0\|_\infty\le \varepsilon \right)\\
		&\ge \int_{(c_2/\varepsilon)^{1/\alpha_1}}^{2(c_2/\varepsilon)^{1/\alpha_1}}  \dots\int_{(c_2/\varepsilon)^{1/\alpha_d}}^{2(c_2/\varepsilon)^{1/\alpha_d}}
		e^{-c_1 \ell^*\log^{1+d}(\bar \ell/\varepsilon)}
		d\Pi_\ell( \ell_1)\dots d\Pi_\ell( \ell_d)\\
		&\ge e^{-c_1 2^d c_2^{1/\alpha_0}(1/\varepsilon)^{1/\alpha_0}\log^{1+d}(2c_2^{1/\underline\alpha}\varepsilon^{-1-1/\underline\alpha})}
		\int_{(c_2/\varepsilon)^{1/\alpha_1}}^{2(c_2/\varepsilon)^{1/\alpha_1}}  \dots\int_{(c_2/\varepsilon)^{1/\alpha_d}}^{2(c_2/\varepsilon)^{1/\alpha_d}}
		d\Pi_\ell( \ell_1)\dots d\Pi_\ell( \ell_d)\\
		&\ge e^{-(1/\varepsilon)^{1/\alpha_0}\log ^{c_3}(1/\varepsilon)}
		\prod_{j=1}^d\int_{(c_2/\varepsilon)^{1/\alpha_j}}^{2(c_2/\varepsilon)^{1/\alpha_j}}  d\Pi_\ell( \ell_j)
	\end{align*}
	for any constant $c_3>1+d$. Set $\Theta_j := \{t\in[0,1]: c_4/\log(1/\varepsilon)< t - d\alpha_0/\alpha_j < 2c_4/\log(1/\varepsilon)\}$, $j=1,\dots,d$, for some fixed $c_4>0$. Then, recalling the construction of $\Pi_\ell$ from Condition \ref{Cond:GP}, for all $\varepsilon$ small enough,
	\begin{align*}
		\int_{(c_2/\varepsilon)^{1/\alpha_j}}^{2(c_2/\varepsilon)^{1/\alpha_j}}  d\Pi_\ell( \ell_j)
		&\ge \int_{\Theta_j}\left(
		\int_{(c_2/\varepsilon)^{1/\alpha_j}}^{2(c_2/\varepsilon)^{1/\alpha_j}}\Pi_\gamma(g^{d/t})\frac{d}{t}g^{d/t-1}dg\right)\Pi_\theta(t)dt\\
		&\gtrsim \int_{\Theta_j}\frac{1}{t}\left(
		\int_{(c_2/\varepsilon)^{1/\alpha_j}}^{2(c_2/\varepsilon)^{1/\alpha_j}}
		g^{(1+a_\gamma) d/t-1}e^{- b_\gamma(d/t)^{k_\gamma} g^{d/t}\log^{k_\gamma}g}dg\right) \Pi_\theta(t)dt\\
		&\gtrsim \int_{\Theta_j}\frac{1}{t}
		%    \left(\frac{1}{\varepsilon}\right)^{(1+a_\gamma)d/(t\alpha_j)-1/\alpha_j}
		e^{-(d/t)^{k_\gamma}(1/\varepsilon)^{d/(t\alpha_j)}\log^{c_5}(1/\varepsilon)}
		\Pi_\theta(t)dt,
	\end{align*}
	for some $c_5>0$. For each $t\in\Theta_j$, provided that $\varepsilon$ is small enough, we have that $c_6<t\le1$ for some sufficiently small $c_6>0$ that does not depend on $j$. Further,
	\begin{align*}
		\frac{(1/\varepsilon)^{d/(t\alpha_j)}}
		{(1/\varepsilon)^{1/\alpha_0}}
		&=\left(\frac{1}{\varepsilon}\right)^{-(t\alpha_j - d\alpha_0)/(d\alpha_0^2+\alpha_0(t\alpha_j - d\alpha_0))}
		%    \le \left(\frac{1}{\varepsilon}\right)^{-(t\alpha_j - d\alpha_0)/(2d\alpha_0^2)}
		%    \le
		%\left(\frac{1}{\varepsilon}\right)^{2c_4\alpha_j/(\alpha_0^2\log(1/\varepsilon))}
		%=
		\le e^{-\frac{c_4\alpha_j}{\log(1/\varepsilon)(d\alpha_0^2+2\alpha_0c_4/\log(1/\varepsilon))}\log(1/\varepsilon)}
		\le c_7
	\end{align*}
	for $c_7>0$ independent of $j$. It follows that the second to last display is lower bounded by
	\begin{align*}
		e^{-(1/\varepsilon)^{1/\alpha_0}
			\log^{c_8}(1/\varepsilon)}
		\int_{\Theta_j}\Pi_\theta(t)dt
		\ge  e^{-(1/\varepsilon)^{1/\alpha_0}
			\log^{c_9}(1/\varepsilon)}
	\end{align*}
	for some $c_8,c_9>0$ independent of $j$, having used the fact that $\Theta_j$ contains an interval of width proportional to $1/\log(1/\varepsilon)$, whence its prior probability under $\Pi_\theta$ is at least a (universal) constant times $1/\log(1/\varepsilon)$. Combining the obtained estimates yields the first claim of Lemma \ref{Lem:GPPriorMass}. The second then readily follows for the given choice of $\varepsilon_n$.
\end{proof}%%%%%%%%%%%%%%%%%%%%%%%%%%%%%%%

%
%
%

%%%%%%%%%%%%%%%%%%%%%%%%%%%%%%%%%%%%%%%%%%
\subsection{Sieves for multi-bandwidth Gaussian processes with independent length-scales}
\label{Subsec:GPSieves}

We construct sieves with bounded complexity containing the bulk of the mass of the randomly re-scaled Gaussian prior $\Pi_W$ defined in Condition \ref{Cond:GP}, employed in the proof of Theorem \ref{Theo:GPRates}. Our construction is similar to the one on p.~373 of \cite{BPD14}, which is itself based on ideas from \cite{vdVvZ09}. In fact, in the proof, we exploit the observation that our prior with independent length-scales allows to construct
sieves with overall smaller metric entropy compared to the ones obtained with the Dirichlet-based hyper-prior used in \cite[Section 3.1]{BPD14}. In view of the small ball estimate \eqref{Eq:GPPriorMass} and Lemma \ref{Lem:GPSieves} below, we expect that priors based on $\Pi_W$ achieve adaptive anisotropic posterior contraction rates in other statistical models as well, along the lines discussed for example in \cite[Section 3]{vdVvZ08}.

Let $\Ccal_1$ denote the unit ball in sup-norm of $C(\Zcal)$. For each $\ell\in(0,\infty)^d$, let $\Hcal_\ell$ be the reproducing kernel Hilbert space associated to the Gaussian process $W_\ell$ from Condition \ref{Cond:GP}, and let $\Hcal_{\ell,1}$ denote its unit ball; see \cite[Section 4.1]{BPD14} for definitions and properties. For $\eta, R, T>0$, construct the sets
\begin{equation}
	\label{Eq:GPSieves}
	\Bcal:= \eta \Ccal_1 + \bigcup_{\vartheta\in[0,1]^d}
	\bigcup_{\ell\le R^{\vartheta/d}} T\Hcal_{\ell,1},
\end{equation}
having denoted $R^{\vartheta/d}:=(R^{\theta_1/d},\dots,R^{\theta_d/d})$ for any $\vartheta=(\theta_1,\dots,\theta_d)\in[0,1]^d$.

\begin{lemma}\label{Lem:GPSieves}%%%%%%%%%
	
	Let $\Pi_W$ be a prior for $w$ constructed as in Condition \ref{Cond:GP}. Then, for all sufficiently small $\eta$, all $R$ large enough, and all $T\ge 2\sqrt 2 \sqrt{R} \log^{(1+d)/2}(R /\eta)$,
	\begin{align*}
		\Pi_W(\Bcal^c)
		&
		\le \frac{1}{2}e^{-R \log^{1+d}(R /\eta)}
		+R^{a_\gamma}e^{-B_\gamma R},
		\qquad
		\log\Ncal(\eta;\Bcal,\|\cdot\|_{L^\infty(\Zcal)})\lesssim R\log^{1+d}(2T/\eta),
	\end{align*}
	where $a_\gamma,B_\gamma>0$ are the constants from \eqref{Eq:PiGammaTail}. In particular, if $\varepsilon_n=n^{-\alpha_0/(2\alpha_0+1)}\log^{K_1}n$ for some $\alpha_0, K_1>0$, then for any $K_2>0$, letting $\Bcal_n$ be as in \eqref{Eq:GPSieves} with $\eta = \eta_n = n^{-K_3}\log^{K_4}n$ for any $K_3\ge\alpha_0/(2\alpha_0+1)$ and any $K_4>0$, $R=R_n = K_5n^{1/(2\alpha_0+1)}\log^{2K_1}n$ for any $K_5>K_2/(B_\gamma\land1)$, and $T=T_n = 2\sqrt{2}K_5n^{1/(4\alpha_0+2)}\log^{(2K_1+1+d)/2}n$, we have that
	\begin{align*}
		\Pi_W(\Bcal^c)
		\le e^{-K_2n\varepsilon_n^2},
		\qquad \log\Ncal(\eta_n;\Bcal_n,\|\cdot\|_{L^\infty(\Zcal)}) 
		\lesssim n\bar\varepsilon_n^2,
	\end{align*}
	for all $n$ large enough, where $\bar\varepsilon_n = n^{-\alpha_0/(2\alpha+1)}\log^{(2K_1+1+d)/2}n$.
\end{lemma}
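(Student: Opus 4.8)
The plan is to follow the sieve construction on p.~373 of \cite{BPD14}, but to exploit the product structure induced by the independence of the length-scales. The starting observation is an exact correspondence between the sieve region in \eqref{Eq:GPSieves} and the hyper-prior from Condition \ref{Cond:GP}: since $\ell_j = \gamma_j^{\theta_j/d}$, the event $\{\gamma_j\le R\}$ is precisely $\{\ell_j\le R^{\theta_j/d}\}$ for the realized exponent $\theta_j$. Hence, conditionally on $\{\gamma_1,\dots,\gamma_d\le R\}$, the realized length-scale vector satisfies $\ell\le R^{\vartheta/d}$ with $\vartheta=(\theta_1,\dots,\theta_d)$, so that $T\Hcal_{\ell,1}\subseteq\Bcal$ by construction. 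This is what makes the prior-mass computation clean, and it replaces the more delicate argument required for the Dirichlet-correlated exponents of \cite{BPD14}.

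For the prior-mass bound, I would decompose $\Pi_W(\Bcal^c)$ according to whether all $\gamma_j\le R$ or not. On the event that some $\gamma_j>R$, the tail assumption \eqref{Eq:PiGammaTail} gives $\int_R^\infty\Pi_\gamma(g)\,dg\lesssim R^{a_\gamma}e^{-B_\gamma R}$ (the logarithmic factor in the exponent both guarantees convergence and absorbs the union bound over the $d$ coordinates), producing the second term of the claimed bound. On the complementary event, $w$ is drawn from $W_\ell$ with $\ell\le R^{\vartheta/d}$, so $\ell^*=\prod_j\ell_j\le R$ and $\bar\ell\le R^{1/d}$; by the small-ball estimate underlying Lemmas 4.2--4.3 of \cite{BPD14}, the centered small-ball exponent $\phi_0^\ell(\eta):=-\log\Pi_{W_\ell}(\|w\|_{L^\infty(\Zcal)}\le\eta)$ satisfies $\phi_0^\ell(\eta)\lesssim\ell^*\log^{1+d}(\bar\ell/\eta)\le R\log^{1+d}(R/\eta)$ uniformly over this region. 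Borell's inequality (e.g.~\cite[Chapter 11]{GvdV17}) then yields $\Pi_{W_\ell}(w\notin\eta\Ccal_1+T\Hcal_{\ell,1})\le 1-\Phi\big(\Phi^{-1}(e^{-\phi_0^\ell(\eta)})+T\big)$, and the lower bound $T\ge 2\sqrt2\sqrt R\log^{(1+d)/2}(R/\eta)$ is calibrated precisely so that $T-\sqrt{2\phi_0^\ell(\eta)}\ge\sqrt{2R}\log^{(1+d)/2}(R/\eta)$, whence the right-hand side is at most $\tfrac12 e^{-R\log^{1+d}(R/\eta)}$, giving the first term.

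For the entropy bound, adding $\eta\Ccal_1$ only inflates covering radii by a constant factor, so it suffices to cover $\bigcup_{\vartheta\in[0,1]^d}\bigcup_{\ell\le R^{\vartheta/d}}T\Hcal_{\ell,1}$, which coincides with $\bigcup_{\{\ell:\ell_j\le R^{1/d},\,j=1,\dots,d\}}T\Hcal_{\ell,1}$ (taking $\theta_j=1$ coordinatewise). I would discretize this box into a product grid, use that RKHS balls at nearby length-scales are contained in mild enlargements of one another, as in \cite{vdVvZ09,BPD14}, to reduce to covering the individual balls at the grid points, and invoke the single-ball entropy estimate $\log\Ncal(\eta;T\Hcal_{\ell,1},\|\cdot\|_{L^\infty(\Zcal)})\lesssim\ell^*\log^{1+d}(T/\eta)\le R\log^{1+d}(T/\eta)$. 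Since the number of grid cells is only polylogarithmic in the relevant quantities, this contributes a lower-order additive term and yields $\lesssim R\log^{1+d}(2T/\eta)$; it is exactly the independence of the length-scales that keeps the discretization a simple product grid and hence produces a leaner entropy than the correlated construction of \cite{BPD14}. The final ``in particular'' assertions then follow by substituting $\eta=\eta_n$, $R=R_n$, $T=T_n$: using $n\varepsilon_n^2\simeq n^{1/(2\alpha_0+1)}\log^{2K_1}n$ one checks $R_n\simeq K_5 n\varepsilon_n^2$, so $R_n\log^{1+d}(R_n/\eta_n)\gtrsim n\varepsilon_n^2\log^{1+d}n$ dominates $K_2 n\varepsilon_n^2$, the choice $K_5>K_2/(B_\gamma\wedge1)$ handles $R_n^{a_\gamma}e^{-B_\gamma R_n}$, and the extra $\log^{1+d}n$ in $R_n\log^{1+d}(2T_n/\eta_n)$ is precisely what promotes $\varepsilon_n$ to $\bar\varepsilon_n$, giving $R_n\log^{1+d}(2T_n/\eta_n)\simeq n\bar\varepsilon_n^2$.

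The step I expect to be the main obstacle is the entropy of the union of RKHS balls: controlling the union over length-scales requires the containment/approximation of the balls $\Hcal_{\ell,1}$ across different $\ell$, together with a careful count of the discretization cells, and it is here that the bookkeeping of the logarithmic powers is most sensitive; by contrast, the Borell step is routine once the uniform small-ball exponent is in hand, and the product structure afforded by the independent length-scales is exactly what makes the discretization tractable.
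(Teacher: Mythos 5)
Your prior-mass argument is essentially the paper's: the decomposition according to whether all $\gamma_j\le R$ (equivalently $\ell\le R^{\vartheta/d}$ for the realized exponents), the tail bound from \eqref{Eq:PiGammaTail} absorbing the union over coordinates, Borell's inequality, and the calibration $T\ge 2\sqrt2\sqrt R\log^{(1+d)/2}(R/\eta)$ all match. One step you gloss over, however, is the claim that the small-ball exponent bound $\phi_0^\ell(\eta)\lesssim \ell^*\log^{1+d}(\bar\ell/\eta)\le R\log^{1+d}(R/\eta)$ holds \emph{uniformly} over the whole region $\{\ell\le R^{\vartheta/d}\}$: the estimates of Lemmas 4.2--4.3 in \cite{BPD14} are stated for sufficiently large length-scales, and degrade (indeed $\log(\bar\ell/\eta)$ loses meaning) as $\ell\to0$. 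The paper avoids any uniformity claim by a monotonicity observation: by stationarity of $W_1$ and convexity of $\Zcal$, $\Pi_{W_\ell}(\eta\Ccal_1)\ge\Pi_{W_{R^{\vartheta/d}}}(\eta\Ccal_1)$ for every $\ell\le R^{\vartheta/d}$, so the small-ball lemma is invoked only at the single corner point $R^{\vartheta/d}$, which is large. Your argument should be repaired in exactly this way.

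The entropy bound is where your proposal has a genuine gap, and you flag it yourself as the main obstacle. Your plan --- discretize the box $[0,R^{1/d}]^d$ of length-scales into a product grid, compare $\Hcal_{\ell,1}$ across nearby $\ell$ via containments in enlarged balls, and cover the balls at the grid points --- is never carried out, and the containment of RKHS balls across length-scales together with the cell bookkeeping is precisely the hard content of the covering lemmas in \cite{vdVvZ09,BPD14}; as written, the bound $\lesssim R\log^{1+d}(2T/\eta)$ is asserted rather than proved. (Minor point: the number of cells in such a grid would be polynomial, not polylogarithmic, in $R$ and $1/\eta$; this is harmless since only its logarithm enters, but it should be stated correctly.) The paper instead makes a one-line reduction that your plan misses: since every $\ell$ in the box $[0,R^{1/d}]^d$ satisfies $\ell\le R^{\vartheta}$ for $\vartheta=(1/d,\dots,1/d)$ in the simplex $\Delta_{d-1}$, one has
\begin{equation*}
  \bigcup_{\vartheta\in[0,1]^d}\ \bigcup_{\ell\le R^{\vartheta/d}} T\Hcal_{\ell,1}
  \subseteq
  \bigcup_{\vartheta\in\Delta_{d-1}}\ \bigcup_{\ell\le R^{\vartheta}} T\Hcal_{\ell,1},
\end{equation*}
and the right-hand side is exactly the set whose sup-norm entropy Lemma 4.5 of \cite{BPD14} bounds by a multiple of $R\log^{1+d}(2T/\eta)$. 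In other words, the independent-length-scale sieve is embedded into the Dirichlet-type sieve of \cite{BPD14} and the existing entropy result is quoted, with no new covering argument needed. The concluding ``in particular'' computations in your proposal are routine and correct once the two main bounds are in hand.
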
%%%%%%%%%%%%%%%%%%%%%%%%%%%%%%%

\begin{proof}%%%%%%%%%%%%%%%%%%%%%%%%%%%%%
	We start with the verification of the first inequality in the first claim. For $\vartheta=(\theta_1,\dots,\theta_d)\in[0,1]^d$, and for $\eta,R,T>0$, set
	\begin{equation}
		\label{Eq:BasicSieve}
		\Bcal^\vartheta:= \eta \Ccal_1 +
		\bigcup_{\ell\le R^{\vartheta/d}} T\Hcal_{\ell,1}.
	\end{equation}
	Then, with $\Pi_{W_\ell}$ and $\Pi_\ell$ as in Condition \ref{Cond:GP}, we have
	\begin{align*}
		\Pi_W((\Bcal^\vartheta)^c|\theta_1,\dots,\theta_d)
		&=\int_0^\infty\dots\int_0^\infty
		\Pi_{W_\ell}((\Bcal^\vartheta)^c)
		d\Pi_\ell(\ell_1|\theta_1)\dots d\Pi_\ell(\ell_d|\theta_d)\\
		&\le \int_0^{ R ^{\theta_1/d}}
		\dots \int_0^{ R ^{\theta_d/d}}
		\Pi_{W_\ell}((\Bcal^\vartheta)^c)
		d\Pi_\ell(\ell_1|\theta_1)\dots d\Pi_\ell(\ell_d|\theta_d)\\
		&\quad+\sum_{j=1}^d\Pi_\ell(l> R ^{\theta_j/d}|\theta_j).
	\end{align*}
	Since $\Bcal^\vartheta$ contains the set $\eta\Ccal_1+T\Hcal_{\ell,1}$ for any $\ell\le R^{\vartheta/d}$ by construction, by Borell's isoperimetric inequality \cite[Theorem 2.6.12]{GN16},
	\begin{align*}
		\Pi_{W_\ell}((\Bcal^\vartheta)^c)
		&\le \Pi_{W_\ell}((\eta\Ccal_1+T\Hcal_{\ell,1})^c)\\
		&\le 1 - \Phi\left(T + \Phi^{-1}\left(\Pi_{W_\ell}(\eta\Ccal_1)\right)\right)
		\le 1 - \Phi\left(T + \Phi^{-1}\left(\Pi_{W_{R^{\vartheta/d}}}(\eta\Ccal_1)\right)\right),
	\end{align*}
	where $\Phi$ is the standard normal cumulative distribution function, whose inverse we denote by $\Phi^{-1}$. The last inequality follows from the fact that $\Phi$ and $\Phi^{-1}$ are monotone increasing and that, for all $\ell\le R^{\vartheta/d}$,
	\begin{align*}
		\Pi_{W_{R^{\vartheta/d}}}(\eta\Ccal_1)
		&=\Pi_{W_1}\Bigg( w:\sup_{z\in \Zcal}\Bigg|w\Bigg(\sum_{j=1}^dR^{\theta_j/d}z_j\Bigg)\Bigg|\le\eta\Bigg)\\
		&\le \Pi_{W_1}\Bigg( w:\sup_{z\in \Zcal}\Bigg|w\Bigg(\sum_{j=1}^d \ell_jz_j\Bigg)\Bigg|\le\eta\Bigg)
		=\Pi_{W_\ell}(\eta\Ccal_1),
	\end{align*}
	in view of the stationarity of $W_1$ and the convexity of $\Zcal$. Provided that $\varepsilon$ is small enough and $R$ is sufficiently large, Lemma 4.3 in \cite{BPD14} gives that
	\begin{align*}
		\Pi_{W_{R^{\vartheta/d}}}(\eta\Ccal_1)
		\ge e^{-R ^{\sum_{j=1}^d\theta_j/d}\log^{1+d}(R^{\bar\vartheta/d}/\eta)}
		\ge e^{-R \log^{1+d}(R /\eta)},
	\end{align*}
	having used the fact that $0\le \theta_j\le1$ for all $j$. By the standard inequality $\Phi^{-1}(t)\ge - \sqrt{2\log(1/t)}$ holding for all $0<t<1$, cf.~\cite[Lemma 4.10]{vdVvZ09}, we then obtain 
	\begin{align*}
		\Pi_{W_\ell}((\Bcal^\vartheta)^c)
		&\le 1 - \Phi\left(T - \sqrt{2R} \log^{(1+d)/2}(R /\eta)\right),
	\end{align*}
	whence, taking $T\ge 2\sqrt2 \sqrt{R} \log^{(1+d)/2}(R /\eta)$, the standard Gaussian tail bound yields
	\begin{align*}
		\Pi_{W_\ell}((\Bcal^\vartheta)^c)
		&\le 1 - \Phi\left(\sqrt2\sqrt R \log^{(1+d)/2}(R /\eta)\right)
		\le \frac{1}{2}e^{-R \log^{1+d}(R /\eta)}.
	\end{align*}
	Finally, since, by construction, $\ell_j^{d/\theta_j}|\theta_j\iid \Pi_\gamma$  for each $j=1,\dots,d$, with $\Pi_\gamma$ satisfying \eqref{Eq:PiGammaTail}, we have for all $R$ large enough
	\begin{align*}
		\Pi_\ell(l>R^{\theta_j/d}|\theta_j)
		=\Pi_\ell(l^{d/\theta_j} > R|\theta_j)
		\le \frac{2C_\gamma R^{a_\gamma}}{B_\gamma\log^{k_\gamma}R}
		e^{-B_\gamma R\log^{k_\gamma}R}
		\le R^{a_\gamma}e^{-B_\gamma R},
	\end{align*}
	cf.~\cite[Lemma 4.9]{vdVvZ09}. Combining the obtained estimates implies that for all $\vartheta\in[0,1]^d$, all sufficiently small $\eta$, all $R$ large enough, and all $T\ge (C_1-\sqrt 2) \sqrt{R} \log^{(1+d)/2}(R /\eta)$,
	\begin{align*}
		\Pi_W((\Bcal^\vartheta)^c|\theta_1,\dots,\theta_d)
		&
		\le \frac{1}{2}e^{- R \log^{1+d}(R /\eta)}
		+R^{a_\gamma}e^{-B_\gamma R}.
	\end{align*}
	We then see that the set $\Bcal$ defined in \eqref{Eq:Sieves} with $\eta,R$ and $T$ as above verifies the first inequality in the first claim of Lemma \ref{Lem:GPSieves} since, using the fact that $\Bcal^\vartheta\subseteq\Bcal$ for all $\vartheta\in[0,1]^d$ by construction,
	\begin{align*}
		\Pi_W(\Bcal^c)
		&=\int_0^1\dots\int_0^1 \Pi_W(\Bcal^c|\theta_1,\dots,\theta_d)
		d\Pi_\theta(\theta_1)\dots d\Pi_\theta(\theta_d)\\
		&\le\int_0^1\dots\int_0^1 \Pi_W((\Bcal^\vartheta)^c|\theta_1,\dots,\theta_d)
		d\Pi_\theta(\theta_1)\dots d\Pi_\theta(\theta_d)\\
		&\le \frac{1}{2}e^{- R \log^{1+d}(R/\eta)}
		+R^{a_\gamma}e^{-B_\gamma R}.
	\end{align*}

	Moving onto the second claim, write $\|\cdot\|_{\infty}=\|\cdot\|_{L^\infty(\Zcal)}$, and note that, by construction, 
	$\Bcal$ is a $\eta$-enlargement in sup-norm of the set $\bigcup_{\vartheta\in[0,1]^d}\bigcup_{\ell\le R^{\vartheta/d}} T\Hcal_{\ell,1}$, and therefore,
	$$
	\log \Ncal(\eta;\Bcal,\|\cdot\|_{L^\infty(\Zcal)})
	\le \log \Ncal\Bigg(\eta/2;\bigcup_{\vartheta\in[0,1]^d}
	\bigcup_{\ell\le R^{\vartheta/d}} T\Hcal_{\ell,1},\|\cdot\|_{\infty}\Bigg).
	$$
	Let $\Delta_{d-1}$ be the $d-1$ dimensional simplex. Then, provided that $R>1$, 
	$$
	\bigcup_{\vartheta\in[0,1]^d}
	\bigcup_{\ell\le R^{\vartheta/d}} T\Hcal_{\ell,1}
	\subseteq\bigcup_{\vartheta\in\Delta_{d-1}}
	\bigcup_{\ell\le R^{\vartheta}} T\Hcal_{\ell,1}
	$$
	due to the fact that
	\begin{align*}
		\bigcup_{\vartheta\in[0,1]^d}&\left\{(\ell_1,\dots,\ell_d):0\le \ell_j\le R^{\theta_j/d}, \ j=1,\dots,d\right\}\\
		&=[0,R^{1/d}]^d
		\subseteq
		\bigcup_{\vartheta\in \Delta_{d-1}}\left\{(\ell_1,\dots,\ell_d):0\le \ell_j\le R^{\theta_j}, \ j=1,\dots,d\right\},
	\end{align*}
	cf.~Figure \ref{Fig:Comparison}. Thus, using Lemma 4.5 in \cite{BPD14},
	$$
	\log \Ncal(\eta;\Bcal,\|\cdot\|_{\infty})
	\le \log \Ncal\Bigg(\eta/2;\bigcup_{\vartheta\in \Delta_{d-1}}
	\bigcup_{\ell\le R^{\vartheta}} T\Hcal_{\ell,1},\|\cdot\|_{\infty}\Bigg)
	\le
	c_3 R \log^{1+d}(2T/\eta),
	$$
	for some $c_3>0$ only depending on $d$ and the spectral measure $\mu$ in \eqref{Eq:SpMeas}. This concludes the verification of the first claim of Lemma \ref{Lem:GPSieves}.

	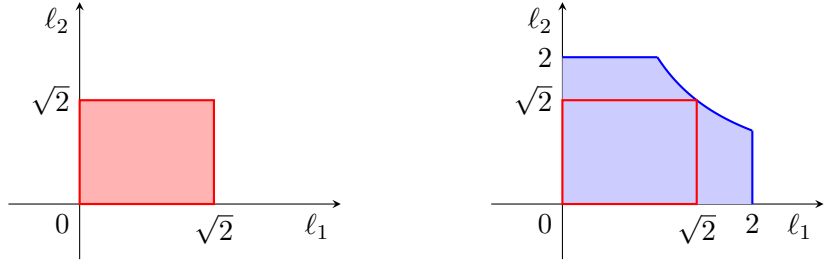
\begin{figure}[t]
		\centering
		\begin{tikzpicture}
			\begin{groupplot}[
				group style={
					group size=2 by 1, % 2 columns, 1 row
					horizontal sep=2cm, % separation between plots
				},
				width=6cm,
				height=5cm,
				]
				
				\nextgroupplot[
				xmin=-0.75, xmax=2.75,
				ymin=-0.75, ymax=2.75,
				axis lines=middle,
				xtick=\empty,
				ytick=\empty,
				]
				% Define sqrt(2)
				\def\sqrtTwo{1.4142}
				
				% Fill the square in red
				\fill[red!30] (0,0) -- (\sqrtTwo,0) -- (\sqrtTwo,\sqrtTwo) -- (0,\sqrtTwo) -- cycle;
				
				% Draw square
				\draw[red,thick] (0,0) -- (\sqrtTwo,0) -- (\sqrtTwo,\sqrtTwo) -- (0,\sqrtTwo) -- cycle;
				
				% Labels
				\node[below left] at (0,0) {0};
				\node[below] at (\sqrtTwo,0) {$\sqrt{2}$};
				\node[below] at (2.5,0) {$\ell_1$};
				\node[left] at (0,\sqrtTwo) {$\sqrt{2}$};
				\node[left] at (0,2.5) {$\ell_2$};
				
				\nextgroupplot[
				xmin=-0.75, xmax=2.75,
				ymin=-0.75, ymax=2.75,
				axis lines=middle,
				xtick=\empty,
				ytick=\empty,
				]
				
				% Define sqrt(2)
				\def\sqrtTwo{1.4142}
				
				% Labels
				\node[below left] at (0,0) {0};
				\node[below] at (\sqrtTwo,0) {$\sqrt{2}$};
				\node[below] at (2,0) {$2$};
				\node[below] at (2.5,0) {$\ell_1$};
				\node[left] at (0,\sqrtTwo) {$\sqrt{2}$};
				\node[left] at (0,2) {$2$};
				\node[left] at (0,2.5) {$\ell_2$};
				
				% Fill the region under and to the left of the curve
				\begin{scope}
					\clip (0,0) rectangle (2.2,2.2);
					\fill[blue!20, domain=0:1, variable=\a, samples=100]
					plot ({2^(\a)}, {2^(1 - \a)}) --
					(2,0) -- (0,0) -- cycle;
					\fill[blue!20, domain=0:1, variable=\a, samples=100]
					plot ({2^(\a)}, {2^(1 - \a)}) --
					(0,0) -- (0,2) -- cycle;
				\end{scope}
				
				% Draw the upper boundary curve (the envelope)
				\draw[thick, blue, domain=0:1, variable=\a, samples=100]
				plot ({2^(\a)}, {2^(1 - \a)}) ;
				\draw[blue,thick] (0,2) -- (1,2)-- cycle;
				\draw[blue,thick] (2,0) -- (2,1)-- cycle;

				% Draw square [0, sqrt(2)] x [0, sqrt(2)]
				\draw[thick, red] (0,0) rectangle (\sqrtTwo,\sqrtTwo);
				
			\end{groupplot}
		\end{tikzpicture}
		\caption{Left, shaded red: the set of length-scales $\bigcup_{(\theta_1,\theta_2)\in[0,1]^2}\{(\ell_1,\ell_2):0\le \ell_1\le R^{\theta_1/2}, 0\le \ell_2\le R^{\theta_2/2}\}$, for $R=2$. Right, shaded blue: the set of length-scales $\bigcup_{(\theta_1,\theta_2)\in\Delta_2}\{(\ell_1,\ell_2):0\le \ell_1\le R^{\theta_1}, 0\le \ell_2\le R^{\theta_2}\}$.}
		\label{Fig:Comparison}.
	\end{figure}
	
	For the second claim, with the given definition of $\Bcal_n$, we have
	$$
	\Pi(\Bcal_n^c)
	\le \frac{1}{2} e^{-K_2 n^{1/(2\alpha_0+1)} \log^{2K_1}n}
	+\frac{1}{2}e^{-K_2 n^{1/(2\alpha_0+1)} \log^{2K_1}n}
	= e^{-K_2n\varepsilon_n^2},
	$$
	as well as
	$$
	\log \Ncal(\eta_n;\Bcal_n,\|\cdot\|_\infty)\lesssim n^{1/(2\alpha_0+1)} \log^{2K_1}n
	\log^{1+d}n
	=n\bar\varepsilon_n^2.
	$$
\end{proof}%%%%%%%%%%%%%%%%%%%%%%%%%%%%%%%

%
%
%
%
%

%%%%%%%%%%%%%%%%%%%%%%%%%%%%%%%%%%%%%%%%%%
\section{A general posterior contraction rate theorem}
\label{Sec:GenTheory}

In this section, we present a general concentration theorem holding under abstract prior conditions resembling the standard assumptions from the asymptotic theory of Bayesian nonparametrics (e.g.~\cite{GvdV17}). This constitutes the primary tool to prove our main result on multi-bandwidth Gaussian process methods for covariate-based intensities, Theorem \ref{Theo:GPRates}. The general result is based on the Hellinger testing approach to posterior contraction rates in i.i.d.~statistical models of \cite{GGvdV00}, which we pursue in the present setting by extending ideas developed by \cite{BSvZ15} and \cite{KvZ15} for non-covariate-dependent inhomogeneous Poisson processes. Recall the metric $d_Z$ defined in \eqref{Eq:dZ}, and the notation $\Rcal$ for the set of measurable, bounded and nonnegative-valued functions defined on the covariate space $\Zcal$.

\begin{theorem}\label{Theo:GenContrRates}%
	Let $\rho_0\in\Rcal$ satisfy $\inf_{z\in\Zcal}\rho_0(z)>0$,  and consider data $D^{(n)}\sim P_{\rho_0}^{(n)}$ arising as described at the beginning of Section \ref{Sec:GPMethods}. Let $\Pi$ be a prior for $\rho$ supported on $\Rcal$, and assume that for a sequence $\varepsilon_n\to0$ such that $n\varepsilon_n^2\to\infty$ as $n\to\infty$ and some constant $C_1>0$ we have
	\begin{equation}\label{Eq:SmallBall}
		\Pi(\rho :
		\|\rho - \rho_0\|_{L^\infty(\Zcal)}\le\varepsilon_n)\ge e^{-C_1 n\varepsilon_n^2}.
	\end{equation}
	Further, assume that for a sequence $\bar\varepsilon_n\to0$ as $n\to\infty$ such that $\bar\varepsilon_n\ge\varepsilon_n$, and for all $C_2>1$, there exist measurable sets $\Rcal_n\subseteq\Rcal$ and some constant $C_3>0$ such that,
	\begin{align}\label{Eq:Sieves}
		\Pi(\Rcal_n^c)\le e^{-C_2 n\varepsilon_n^2};
		\qquad \log \Ncal\left(\bar\varepsilon_n; \Rcal_n, d_Z\right)\le C_3 n\bar\varepsilon_n^2.
	\end{align}
	Then, for all $M>1$ and all sufficiently large $L>0$,
	$$
	\Pi\left(\rho  : d_Z (\rho \land M ,\rho_0 \land M ) > LM\bar\varepsilon_n
	\Big|D^{(n)}\right) \to 0
	$$
	in $P^{(\infty)}_{\rho_0}$-probability as $n\to\infty$.
\end{theorem}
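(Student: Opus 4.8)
My plan is to follow the classical Hellinger-testing scheme for posterior contraction in i.i.d.\ models of \cite{GGvdV00} (see also \cite[Chapter 8]{GvdV17}), whose three ingredients are a lower bound on the evidence, the construction of exponentially powerful tests, and a decomposition of the posterior mass. Throughout I would write $h(\rho_1,\rho_2):=h(p_{\rho_1},p_{\rho_2})$ for the Hellinger distance between the observational densities in \eqref{Eq:Likelihood}, abbreviate $\|\cdot\|_\infty:=\|\cdot\|_{L^\infty(\Zcal)}$, and set $A_n:=\{\rho:d_Z(\rho\land M,\rho_0\land M)>L\bar\varepsilon_n\}$. For the evidence, note that since $\inf_z\rho_0(z)>0$, any $\rho$ with $\|\rho-\rho_0\|_\infty\le\varepsilon_n$ is, for $n$ large, also bounded away from zero; an elementary computation with the Poisson log-likelihood ratio then bounds both the Kullback--Leibler divergence and its associated variance between $p_{\rho_0}$ and $p_\rho$ by a constant multiple of $\|\rho-\rho_0\|_\infty^2\le\varepsilon_n^2$. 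Hence \eqref{Eq:SmallBall} lower bounds the prior mass of the corresponding KL-neighbourhood, and the standard evidence lemma (e.g.\ \cite[Lemma 8.10]{GvdV17}) yields that, with $P^{(n)}_{\rho_0}$-probability tending to one, $\int L^{(n)}(\rho)/L^{(n)}(\rho_0)\,d\Pi(\rho)\ge e^{-(C_1+2)n\varepsilon_n^2}$.

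The crux, and the place where the cut-off at level $M$ enters, is the comparison between $d_Z$ and $h$. Conditioning on $Z$ and using the closed-form Hellinger affinity of Poisson processes, one obtains $h^2(\rho_1,\rho_2)=2\,\Enorm[1-\exp(-\tfrac12\|\sqrt{\lambda_{\rho_1}}-\sqrt{\lambda_{\rho_2}}\|_{L^2(\Wcal)}^2)]$, with $\Enorm$ taken over the law of $Z$. The elementary inequality $1-e^{-x/2}\le x/2$ gives the unrestricted upper bound $h(\rho_1,\rho_2)\le d_Z(\rho_1,\rho_2)$, which lets the $d_Z$-entropy of \eqref{Eq:Sieves} control the $h$-entropy. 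For the matching lower bound the exponential is problematic when $\rho$ is large, but truncating to $\rho\land M$ bounds the exponent uniformly, so that $1-e^{-x/2}$ is bounded below by a multiple of $x$ on the relevant range, giving $h(\rho_1,\rho_2)\ge c_M\,d_Z(\rho_1\land M,\rho_2\land M)$ for some $c_M>0$. Together with the contraction property $d_Z(\rho\land M,\rho'\land M)\le d_Z(\rho,\rho')$ (from $|\sqrt{a\land M}-\sqrt{b\land M}|\le|\sqrt a-\sqrt b|$), this is exactly what converts $d_Z$-separation of the truncations into $h$-separation of the densities.

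With this comparison the testing step is routine. I would partition $A_n\cap\Rcal_n$ into shells $S_j:=\{\rho\in\Rcal_n:jL\bar\varepsilon_n<d_Z(\rho\land M,\rho_0\land M)\le(j+1)L\bar\varepsilon_n\}$, $j\ge1$, cover each $S_j$ by $d_Z$-balls of radius a small fraction of $jL\bar\varepsilon_n$ (their number bounded by $\Ncal(\cdot;\Rcal_n,d_Z)$), and attach to each ball centre a Le Cam--Birg\'e test against $\rho_0$ whose two error probabilities are at most $e^{-Kn(jL\bar\varepsilon_n)^2}$; here the lower bound of the previous paragraph supplies the Hellinger separation, while the upper bound $h\le d_Z$ ensures the balls lie inside the alternatives. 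Letting $\phi_n$ be the maximum of all these tests, the entropy bound $\log\Ncal\le C_3n\bar\varepsilon_n^2$ gives $E^{(n)}_{\rho_0}\phi_n\le\sum_{j\ge1}e^{C_3 n\bar\varepsilon_n^2}e^{-Kn(jL\bar\varepsilon_n)^2}\to0$ once $L$ is chosen large enough.

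Finally I would assemble the pieces as usual: on the probable event where the evidence bound holds, I split $\Pi(A_n\mid D^{(n)})\le\phi_n+\Pi(A_n\cap\Rcal_n^c\mid D^{(n)})+(1-\phi_n)\,\Pi(A_n\cap\Rcal_n\mid D^{(n)})$. The first term vanishes in expectation by the type-I bound; the second is controlled via $E^{(n)}_{\rho_0}\int_{\Rcal_n^c}L^{(n)}(\rho)/L^{(n)}(\rho_0)\,d\Pi=\Pi(\Rcal_n^c)\le e^{-C_2n\varepsilon_n^2}$ divided by the evidence bound, which tends to zero once $C_2$ is taken large (recall $\bar\varepsilon_n\ge\varepsilon_n$); the third is bounded by the type-II error of $\phi_n$ on $A_n\cap\Rcal_n$, again divided by the evidence bound. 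I expect the Hellinger--$d_Z$ comparison of the second paragraph to be the main obstacle: securing its lower bound uniformly in $\rho$ is precisely what necessitates the truncation at $M$, whereas the remaining steps follow the established i.i.d.\ template.
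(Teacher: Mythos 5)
Your proposal is correct and takes essentially the same route as the paper: the paper likewise controls the Kullback--Leibler divergence and variation by the sup-norm ball (its Lemma \ref{Lem:KLBounds}), derives the same exact Hellinger-affinity identity $h^2 = 2\bigl(1-\Enorm\bigl[e^{-\frac{1}{2}\|\sqrt{\lambda_{\rho_1}}-\sqrt{\lambda_{\rho_2}}\|^2_{L^2(\Wcal)}}\bigr]\bigr)$ with the two-sided comparison to $d_Z$ in which the cut-off at $M$ enters only the lower bound (its Lemma \ref{Lem:HellDistBounds}), and then invokes the i.i.d.\ Hellinger testing theorem (Theorem 8.9 of \cite{GvdV17}), whose proof is precisely the shell/test/evidence decomposition you spell out. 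The only difference is presentational: the paper uses that theorem as a black box to first obtain Hellinger contraction and converts it to $d_Z$-contraction of the truncated intensities at the very end, whereas you inline the testing argument with alternatives defined directly through $d_Z(\cdot\land M,\cdot\land M)$.
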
%%%%%%%%%%%%%%%%%%%%%%%%%%%%%

The proof of Theorem \ref{Theo:GenContrRates} is in Section \ref{Subsec:ProofGenTheo} below. The `prior mass condition' \eqref{Eq:SmallBall} entails the customary requirement that $\Pi$ put sufficient probability mass on neighborhoods of $\rho_0$ with small radius in sup-norm. In the present setting, the latter can be shown to control the Kullback-Leibler divergence and variation (cf.~Lemma \ref{Lem:KLBounds}).

The `sieve condition' \eqref{Eq:Sieves} requires that the bulk of the prior mass be concentrated on sets of suitably bounded metric entropy. The complexity bound in the second inequality is with respect to the metric $d_Z$ from \eqref{Eq:dZ}, which is natural in view of its close relationship to the Hellinger distance (cf.~Lemma \ref{Lem:HellDistBounds}). As argued in Remark \ref{Rem:SupNormEntropy} below, $d_Z$ is upper bounded by the sup-norm of the difference between square-rooted intensities. This furnishes a standard approach to verify assumption \eqref{Eq:Sieves} for a potentially large variety of nonparametric priors via analytic results on their information geometry, including the novel ones for multi-bandwidth Gaussian processes with independent length-scales derived in Section \ref{Subsec:GPSieves}. On the other hand, for certain priors, sup-norm complexity bounds are known to be possibly too restrictive. This is the case, for example, for Besov-Laplace priors, which are popular in inverse problems and imaging, where they furnish a `spatially inhomogeneous' alternative to Gaussian priors; see \cite{ADH21}, as well as the discussion after Theorem 1 in \cite{G23}. Extensions of the results presented in this article to such priors may still be pursued in the stationary setting considered in Section \ref{Subsec:StatCov}, under which $d_Z$ is equivalent to an $L^2$-distance between square-rooted intensities, giving rise to weaker complexity bounds compared to the sup-norm.

The claim of Theorem \ref{Theo:GenContrRates} involves the cut-off of $\rho$ and $\rho_0$ at any arbitrary level $M$. This stems from the lower bound for the Hellinger distance from Lemma \ref{Lem:HellDistBounds}, and is likely an artifact of the proof arising from the presence of random covariates; see Section 7.3.2 of \cite{GvdV17} for a similar situation in nonparametric regression. We note that the cut-off imposes little restriction, since $M$ can be taken arbitrarily large, only (linearly) impacting the constant pre-multiplying the rate. In particular, for $M>\|\rho_0\|_{L^\infty(\Zcal)}$, we may replace $\rho_0\land M$ with $\rho_0$ in the claim of Theorem \ref{Theo:GenContrRates}. Further, the cut-off may be sidestepped altogether if $\Pi$ is supported on a subset of functions with values on an interval of the form $[0,\rho^*_n]$ for some slowly increasing $\rho^*_n\to\infty$, in which case the obtained rate is $\rho^*_n \varepsilon_n$. This approach was taken for the construction of the prior in Condition \ref{Cond:GP} with the choice $\rho^*_n = c_{\rho^*}+\log n$, only slightly impacting the logarithmic factor appearing in the claim of in Theorem \ref{Theo:GPRates}.

\begin{remark}[Sup-norm complexity bounds]\label{Rem:SupNormEntropy}%%%%%%%%%
	Since $\Wcal$ is compact, we have that
	\begin{align*}
		d_Z^2(\rho_1,\rho_2)
		&\le \Enorm\left[\int_{\Wcal}\sup_{x\in\Wcal}
		\left|\sqrt{\rho_1(Z(x))} 
		-\sqrt{\rho_2(Z(x))}
		\right|^2dx\right]
		=\vol(\Wcal)
		\|\sqrt{\rho_1} - \sqrt{\rho_2}\|_{L^\infty(\Zcal)}^2.
	\end{align*}
	A sufficient condition for \eqref{Eq:Sieves} to hold is then that
	\begin{equation}
		\label{Eq:SupNormEntropy}
		\log \Ncal\left(\varepsilon_n; \sqrt{\Rcal_n}, \|\cdot\|_{L^\infty(\Zcal)}\right)
		\le C_3 n\varepsilon_n^2,
	\end{equation}
	for some $C_3>0$, where $\sqrt{\Rcal_n}:=\{\sqrt{\rho}, \ \rho\in\Rcal_n\}$. This is similar to the metric entropy condition employed by \cite{BSvZ15} in the context of non-covariate-dependent inhomogeneous Point processes, except with the sup-norm replacing the $L^2$-distance. 
\end{remark}%%%%%%%%%%%%%%%%%%%%%%%%%%%%%%

%
%
%

%%%%%%%%%%%%%%%%%%%%%%%%%%%%%%%%%%%%%%%%%%
\subsection{Proof of Theorem \ref{Theo:GenContrRates}}
\label{Subsec:ProofGenTheo}

We verify the conditions for concentration in Hellinger distance in i.i.d.~statistical models from Theorem 8.9 in \cite{GvdV17}. Define the neighborhoods
$$
B_n := \left\{ \rho : KL(p_{\rho_0},p_\rho )\le c_1\varepsilon_n^2,
\ V(p_{\rho_0},p_{\rho} )\le c_1 \varepsilon_n^2\right\},
\qquad c_1>0,
$$
where $KL$ and $V$ denote the Kullback-Leibler divergence and variation, respectively, defined as in \eqref{Eq:KLDivVar} below. Since $\varepsilon_n\to0$ as $n\to\infty$, Lemma \ref{Lem:KLBounds} implies that for all sufficiently large $n$, provided that $c_1$ is large enough, $\{\rho : \|\rho - \rho_0\|_{L^\infty(\Zcal)}\le \varepsilon_n\} \subseteq B_n$. In view of assumption \eqref{Eq:SmallBall}, we then have $\Pi(B_n)\ge e^{- C_1n\varepsilon_n^2}$, yielding condition (8.4) of Theorem 8.9 in \cite{GvdV17}.

Next, fix $C_2>C_1+4$, let $\Rcal_n$ be the corresponding set from assumption \eqref{Eq:Sieves}, and define the collection of observational densities $\Pcal_n := \{p_\rho, \ \rho \in\Rcal_n\}$,
with $p_\rho$ as in \eqref{Eq:Likelihood}. Then $\Pi(\rho : p_\rho \notin \Pcal_n)\le \Pi(\Rcal_n^c)\le e^{-(C_1+4) n\varepsilon_n^2}$. Further, for $h$ the Hellinger distance defined in \eqref{Eq:HellDist}, the upper bound from Lemma \ref{Lem:HellDistBounds} implies that
$$
\log \Ncal( \sqrt 2\bar\varepsilon_n; \Pcal_n, h) 
\le \log \Ncal( \bar\varepsilon_n; \Rcal_n, d_Z)
\lesssim n\bar\varepsilon_n^2.
$$
This shows that conditions (8.5) and (8.6) of Theorem 8.9 in \cite{GvdV17} are also verified. Conclude that $\Pi(\cdot|D^{(n)})$ contracts towards $\rho_0$ in Hellinger distance at rate $\varepsilon_n$, namely that for all sufficiently large $c_2>0$,
$$
\Pi\left(\rho : h(p_\rho ,p_{\rho_0}) > c_2\bar\varepsilon_n
\Big|D^{(n)}\right) \to 0
$$
in $P^{(\infty)}_{\rho_0}$-probability as $n\to\infty$. For all $M>1$, the claim of Theorem \ref{Theo:GenContrRates} then follows in view of the lower bound from Lemma \ref{Lem:HellDistBounds}, upon taking $L>0$ a sufficiently large multiple of $c_2$.\qed

%
%
%

%%%%%%%%%%%%%%%%%%%%%%%%%%%%%%%%%%%%%%%%%%
\subsection{Bounds for the Kullback-Leibler divergence and variation}

For pairs $(N,Z)$ arising as described at the beginning of Section \ref{Sec:GPMethods}, recall the expression of the observational densities $p_\rho$, $\rho\in\Rcal$, given by \eqref{Eq:Likelihood}, with dominating measure $P_1$ corresponding to the standard Poisson case. The associated Kullback-Leibler divergence and variation are defined, respectively, as
\begin{equation}
	\label{Eq:KLDivVar}
	KL(p_{\rho_0},p_{\rho} ) 
	:= E_{\rho_0}\left[ \log \frac{p_{\rho_0}(N,Z)}
	{p_{\rho}(N,Z) } \right];
	\qquad
	V(p_{\rho_0},p_{\rho} ) 
	:= Var_{\rho_0}\left[ \log \frac{p_{\rho_0}(N,Z)}
	{p_{\rho}(N,Z) } \right].
\end{equation}
The following lemma provides upper bounds for these two quantities in terms of the sup-norm distance $\|\rho - \rho_0\|_{L^\infty(\Zcal)}$. It is based on ideas from the proof of Lemma 1 and Theorem 1 of \cite{BSvZ15}, with suitable adaptations to accommodate the presence of random covariates.

\begin{lemma}\label{Lem:KLBounds}%%%%%%%%%
	Let $\rho_0\in\Rcal$ satisfy $\inf_{z\in\Zcal}\rho_0(z)>0$. Then, there exist constants $C_1,C_2>0$ only depending on $\rho_0$ such that, for all sufficiently small $\varepsilon\in(0,1)$ and all $\rho\in\Rcal$ satisfying $\|\rho - \rho_0\|_{L^\infty(\Zcal)}\le \varepsilon$, we have
	$$
	KL(p_{\rho_0},p_{\rho} ) \le C_1\varepsilon^2;
	\qquad 
	V(p_{\rho_0},p_{\rho} ) \le C_2\varepsilon^2.
	$$
\end{lemma}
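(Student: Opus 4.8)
The plan is to exploit the explicit form of the log-likelihood ratio together with Campbell's moment formulas for Poisson processes, handling the extra randomness of the covariate field $Z$ by conditioning. First I would write out, from \eqref{Eq:Likelihood},
$$
\log\frac{p_{\rho_0}(N,Z)}{p_\rho(N,Z)}
= \sum_{k=1}^K\log\frac{\rho_0(Z(X_k))}{\rho(Z(X_k))}
- \int_\Wcal\big(\rho_0(Z(x)) - \rho(Z(x))\big)\,dx.
$$
Conditionally on $Z$, under $P_{\rho_0}$ the pattern $N$ is an inhomogeneous Poisson process with intensity $\rho_0(Z(\cdot))$, so Campbell's first-moment formula gives
$$
E_{\rho_0}\Big[\log\tfrac{p_{\rho_0}}{p_\rho}\,\big|\,Z\Big]
= \int_\Wcal \phi\big(\rho_0(Z(x)),\rho(Z(x))\big)\,dx,
\qquad \phi(u,v):=u\log\tfrac{u}{v}-(u-v).
$$

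The key analytic step is the elementary bound $\phi(u,v)\le C(u-v)^2$, valid for $u,v$ bounded away from zero, which follows from a second-order Taylor expansion of $v\mapsto\phi(u,v)$ about its minimum at $v=u$ (note $\partial_v^2\phi=u/v^2$). Since $\rho_0$ is bounded away from zero by assumption and $\|\rho-\rho_0\|_{L^\infty(\Zcal)}\le\varepsilon$, for $\varepsilon$ small enough $\rho$ is also bounded below, so that $\phi(\rho_0(Z(x)),\rho(Z(x)))\le C\varepsilon^2$ pointwise. Integrating over $\Wcal$ and taking the expectation over $Z$ then yields $KL(p_{\rho_0},p_\rho)\le C_1\varepsilon^2$, with $C_1$ depending only on $\inf_z\rho_0$ and $\vol(\Wcal)$.

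For the variation I would apply the law of total variance,
$$
V(p_{\rho_0},p_\rho)
= E_Z\Big[\mathrm{Var}_{\rho_0}\big(\log\tfrac{p_{\rho_0}}{p_\rho}\,\big|\,Z\big)\Big]
+ \mathrm{Var}_Z\Big(E_{\rho_0}\big[\log\tfrac{p_{\rho_0}}{p_\rho}\,\big|\,Z\big]\Big).
$$
For the first term, the deterministic integral drops out and Campbell's second-moment formula gives $\mathrm{Var}_{\rho_0}(\,\cdot\mid Z)=\int_\Wcal(\log(\rho_0(Z(x))/\rho(Z(x))))^2\rho_0(Z(x))\,dx$; bounding $|\log(\rho_0/\rho)|\le C|\rho_0-\rho|\le C\varepsilon$ (again using the lower bound on the intensities) and $\rho_0\le\|\rho_0\|_\infty$ shows this is $\le C\varepsilon^2$. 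For the second term I would note that the conditional mean equals the nonnegative quantity computed for the KL bound, which lies in $[0,C\varepsilon^2\vol(\Wcal)]$; a random variable supported on an interval has variance bounded by the square of the length of that interval, giving a contribution of order $\varepsilon^4\le\varepsilon^2$. Summing the two terms yields $V(p_{\rho_0},p_\rho)\le C_2\varepsilon^2$.

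The routine parts are the two pointwise inequalities, for $\phi$ and for $\log(\rho_0/\rho)$, both of which hinge only on the intensities being bounded above and bounded away from zero. The only point requiring genuine care — and the main departure from the non-covariate argument of \cite{BSvZ15} — is the bookkeeping of the two sources of randomness: conditioning on $Z$ to reduce each quantity to a Poisson moment formula, and then controlling the outer randomness in $Z$, for which the \emph{uniform-in-$Z$} pointwise bounds on $\phi$ and on $(\log(\rho_0/\rho))^2$ are precisely what is needed.
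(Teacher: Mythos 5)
Your proposal is correct and follows essentially the same route as the paper's proof: conditioning on $Z$ via the tower property and the law of total variance, reducing each term to Poisson (Campbell) moment formulas, and then invoking pointwise second-order bounds that hinge on the intensities being bounded above and away from zero, with the conditional-mean term handled by the same $\varepsilon^4\le\varepsilon^2$ observation. The only difference is cosmetic: your direct Taylor expansion of $\phi(u,v)=u\log(u/v)-(u-v)$ in $v$ around $v=u$ streamlines the paper's two-regime bound on $G(t)=t-1-\log t$ followed by a Taylor expansion of $\log(\rho_0/\rho)$, and note that your constant also depends on $\|\rho_0\|_{L^\infty(\Zcal)}$ (through the factor $u$ in the Taylor remainder), which is permitted by the statement.
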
%%%%%%%%%%%%%%%%%%%%%%%%%%%%%%%

\begin{proof}%%%%%%%%%%%%%%%%%%%%%%%%%%%%%
	By the tower property of conditional expectations,
	\begin{equation}
		\label{Eq:FirstKL}
		KL(p_{\rho_0},p_{\rho} ) 
		= \Enorm \left[ E_{\rho_0}\left[ \log \frac{p_{\rho_0}(N,Z)}{p_{\rho} (N,Z)}\Bigg| Z  \right] \right],
	\end{equation} 
	where the outer expectation is intended with respect to the law of $Z$. We have
	\begin{align*}
		\frac{p_{\rho_0}(N,Z)}{p_{\rho}(N,Z)}
		&=e^{\sum_{k=1}^K\log 
			\frac{\rho_0(Z(X_k))}{\rho(Z(X_k))} 
			-\int_{\Wcal} (\rho_0(Z(x)) - \rho(Z((x)))dx},
	\end{align*}
	and using the fact that, under $P_{\rho_0}$, $N|Z$ is distributed as an inhomogeneous Poisson process with intensity $\lambda_{\rho_0} = \rho_0\circ Z$, the inner expectation in \eqref{Eq:FirstKL} equals
	\begin{align*}
		-&\int_{\Wcal}(\rho_0(Z(x)) - \rho(Z(x)))dx
		+E_{\rho_0}\left[ \sum_{k=1}^K\log 
		\frac{\rho_0(Z(X_k))}{\rho(Z(X_k))} \Bigg| Z  \right]\\
		&=-\int_{\Wcal}(\rho_0(Z(x)) - \rho(Z(x)))dx
		+\int_{\Wcal} \log \frac{\rho_0(Z(x))}{\rho(Z(x))}\rho_0(Z(x))dx\\
		&=
		\int_{\Wcal}\rho_0(Z(x))
		\left( \frac{\rho(Z(x))}{\rho_0(Z(x))} 
		- 1 - \log\frac{\rho(Z(x))}{\rho_0(Z(x))}  \right)  dx
		= \int_{\Wcal}\rho_0(Z(x)) G\left(\frac{\rho(Z(x))}{\rho_0(Z(x))}\right)dx,
	\end{align*}
	having set $G(t) := t - 1 - \log t$, $t>0$, and having used the standard formula for the expectation of functionals of inhomogeneous Poisson processes. The function $G$ satisfies $|G(t)|\le 3(\sqrt t -1)^2$ for all $t\in (1/e,\infty)$ and $|G(t)|\le |\log t|$ for all $t\in(0,1/e]$. It follows that
	\begin{align*}
		E_{\rho_0}\left[ \log \frac{p_{\rho_0}(N,Z)}
		{p_{\rho}(N,Z) }\Bigg | Z \right ]
		&\le
		3\int_{\{x: \rho(Z(x))/\rho_0(Z(x)) > 1/e\}}
		\left(\sqrt{\rho(Z(x))} - \sqrt{\rho_0(Z(x))}\right)^2  dx\\
		&\quad+\int_{\{x : \rho(Z(x))/\rho_0(Z(x)) \le 1/e\}}\rho_0(Z(x))\left|\log 
		\frac{\rho(Z(x))}{\rho_0(Z(x))}\right| dx\\
		&\le 3\left\|\sqrt{\lambda_{\rho_0}} - \sqrt{\lambda_{\rho}}\right\|^2_{L^2(\Wcal)}
		+\int_{\Wcal}\rho_0(Z(x))\log ^2
		\frac{\rho(Z(x))}{\rho_0(Z(x))} dx.
	\end{align*}
	Using the fact that $1 - t\le |\log t|$ for all $t\in(0,1)$, we obtain the further upper bound
	\begin{align*}
		\left\|\sqrt{\lambda_{\rho_0}} - \sqrt{\lambda_{\rho}}\right\|^2_{L^2(\Wcal)}
		%	&=\int_{\{x : \rho(Z(x))\ge\rho_0(Z(x))\}}\rho(Z(x))
		%	\left( 1 -\sqrt{\frac{\rho_0(Z(x))}{\rho(Z(x))}} \right)^2dx\\
		%	&\quad+\int_{\{x : \rho_0(Z(x))>\rho(Z(x))\}}\rho_0(Z(x))\left( 1 -\sqrt{\frac{\rho(Z(x))}
			%	{\rho_0(Z(x))}} \right)^2dx\\
		&\le \int_{\{x : \rho(Z(x))\ge\rho_0(Z(x))\}}\rho(Z(x))
		\left(\log\sqrt{\frac{\rho_0(Z(x))}
			{\rho(Z(x))}} \right)^2dx\\
		&\quad+\int_{\{x : \rho_0(Z(x))>\rho(Z(x))\}}\rho_0(Z(x))\left( \log\sqrt{\frac{\rho(Z(x))}
			{\rho_0(Z(x))}} \right)^2dx\\
		&\le \frac{1}{4}\int_{\Wcal} (\rho_0(Z(x))\vee\rho(Z(x))) \log ^2
		\frac{\rho_0(Z(x))}{\rho(Z(x))} dx.
	\end{align*}
	Combined with the second to last display, this implies
	\begin{equation}
		\label{Eq:IntermediateKL}
		E_{\rho_0}\left[ \log \frac{p_{\rho_0}(N,Z)}
		{p_{\rho}(N,Z) }\Bigg | Z \right ]
		\le 
		\frac{7}{4}\int_{\Wcal} (\rho(Z(x))\vee\rho_0(Z(x))) \log ^2
		\frac{\rho_0(Z(x))}{\rho(Z(x))} dx.
	\end{equation}
	Now for each $x\in \Wcal$, %write
	%$$
	%	\log \frac{\rho_0(Z(x))}{\rho(Z(x))}
	%	= \log\left[ 1 +  \left(\frac{\rho_0(Z(x))}{\rho(Z(x))} - 1 \right)\right]
	%$$ 
	%and 
	by a Taylor expansion with exact remainder,
	\begin{align*}
		\log \frac{\rho_0(Z(x))}{\rho(Z(x))}
		&=  \left(\frac{\rho_0(Z(x))}{\rho(Z(x))} - 1 \right)
		-\frac{1}{2\xi_x^2}\left(\frac{\rho_0(Z(x))}{\rho(Z(x))} - 1 \right)^2,
	\end{align*}
	where $\xi_x$ lies between $\rho_0(Z(x))/\rho(Z(x))$ and $1$. Since $\rho_0$ is bounded away from zero by assumption, for all sufficiently small $\varepsilon\in(0,1)$, we have that if $\|\rho - \rho_0\|_{L^\infty(\Zcal)}\le \varepsilon$ then necessarily $\inf_{z\in\Zcal}\rho(z)
	\ge\frac{1}{2}\inf_{z\in\Zcal}\rho_0(z)>0$. It follows that
	$$
	\left| \frac{\rho_0(Z(x))}{\rho(Z(x))} - 1\right| 
	%	= \left| \frac{1}{\rho(Z(x))}(\rho_0(Z(x)) - \rho(Z(x)))\right|
	\le \frac{1}{\inf_{z\in\Zcal}\rho(z)}\|\rho_0 - \rho\|_\infty\le c_1 \varepsilon,
	$$
	for some $c_1>0$ independent of $\rho$, $x$ and $\varepsilon$. This also implies that $\xi_x$ is itself bounded away from zero, so that
	$$
	\frac{1}{2\xi_x^2}\left(\frac{\rho_0(Z(x))}{\rho(Z(x))} - 1 \right)^2
	\le c_2\varepsilon^2 \le c_2\varepsilon,
	$$
	and $\log ^2
	(\rho_0(Z(x))/\rho(Z(x)))\le c_3\varepsilon^2$, with  $c_2,c_3>0$ independent of $\rho$, $x$ and $\varepsilon$. Finally, observing that, if $\varepsilon$ is sufficiently small, we must have $\|\rho\|_{L^\infty(\Zcal)}\le 2\|\rho_0\|_{L^\infty(\Zcal)}$, we obtain from \eqref{Eq:IntermediateKL} that
	\begin{align}
		\label{Eq:InnerExp}
		E_{\rho_0}\left[ \log \frac{p_{\rho_0}(N,Z)}
		{p_\rho(N,Z) }\Bigg | Z \right ]
		\le \frac{7}{2} \|\rho_0\|_{L^\infty(\Zcal)}
		c_3\vol(\Wcal) \varepsilon^2.
	\end{align}
	Combined with \eqref{Eq:FirstKL}, this concludes the proof of the first claim of Lemma \ref{Lem:KLBounds} upon taking $C_1:= \frac{7}{2} c_3\|\rho_0\|_{L^\infty(\Zcal)}\vol(\Wcal)$.

	The second claim is proved with a similar argument, applying the law of total variance to obtain the identity
	\begin{align*}
		V(p_{\rho_1},p_{\rho_2} ) 
		&= \Enorm\left[ 
		Var_{\rho_0} \left[\log \frac{p_{\rho_0}(N,Z)}
		{p_\rho(N,Z) }\Bigg | Z \right ] \right]
		+\Varnorm\left[ 
		E_{\rho_0}\left[ \log \frac{p_{\rho_0}(N,Z)}
		{p_\rho(N,Z) }\Bigg | Z \right ] \right],
	\end{align*}
	where the outer expectation and variance are intended with respect to the law of $Z$. Using again the fact that, under $P_{\rho_0}$, $N|Z$ is distributed as a inhomogeneous Poisson process with intensity $\lambda_{\rho_0}$,
	\begin{align*}
		Var_{\rho_0} \left[\log \frac{p_{\rho_0}(N,Z)}
		{p_\rho(N,Z) }\Bigg | Z \right ]
		&=
		Var_{\rho_0} \left[
		\sum_{i=1}^K\log \frac{\rho_0(Z(X_i))}{\rho(Z(X_i))}
		-\int_{\Wcal}(\rho_0(Z(x)) - \rho(Z(x))dx 
		\Bigg | Z \right ]\\
		&=
		Var_{\rho_0} \left[
		\sum_{i=1}^K\log \frac{\rho_0(Z(X_i))}{\rho(Z(X_i))}
		\Bigg | Z \right ]
		=
		\int_{\Wcal} \log^2 \frac{\rho_0(Z(x))}{\rho(Z(x))}\rho_0(Z(x))dx.
	\end{align*}
	The bounds obtained in the first part of the proof now yield that, for all sufficiently small $\varepsilon\in(0,1)$, if $\|\rho - \rho_0\|_{L^\infty(\Zcal)}\le \varepsilon$ we must have 
	$$
	Var_{\rho_0} \left[\log \frac{p_{\rho_0}(N,Z)}
	{p_\rho(N,Z) }\Bigg | Z \right ]
	\le c_3\|\rho_0\|_{L^\infty(\Zcal)}\vol(\Wcal)\varepsilon^2,
	$$
	where we recall that the constant $c_3$ is independent of $\rho$, $x$ and $\varepsilon$. Further, in view of \eqref{Eq:InnerExp}, we also have
	$$
	\Varnorm\left[ 
	E_{\rho_0}\left[ \log \frac{p_{\rho_0}(N,Z)}
	{p_\rho(N,Z) }\Bigg | Z \right ] \right]
	\le c_4 \varepsilon^4 \le c_4\varepsilon^2,
	$$
	for some $c_4>0$ independent of $\rho$, $x$ and $\varepsilon$. Setting $C_2 : = c_3\|\rho_0\|_{L^\infty(\Zcal)}\vol(\Wcal) + c_4$ yields the second claim of Lemma \ref{Lem:KLBounds}.
\end{proof}%%%%%%%%%%%%%%%%%%%%%%%%%%%%%%%%

%
%
%

%%%%%%%%%%%%%%%%%%%%%%%%%%%%%%%%%%%%%%%%%%
\subsection{Bounds for the Hellinger distance}
\label{Subsec:HellDist}

The Hellinger distance between two observational densities
$p_{\rho},p_{\rho_0}$, with $\rho,\rho_0\in\Rcal$, defined as in \eqref{Eq:Likelihood}, is given by
\begin{equation}
	\label{Eq:HellDist}
	h\left(p_{\rho_1},p_{\rho_2}\right)
	:= \sqrt{ E_1\left[\left(\sqrt{p_{\rho_1}(N,Z)} - \sqrt{p_{\rho_2}(N,Z)} \right)^2 
		\right]},
	\qquad \rho_1,\rho_2\in\Rcal,
\end{equation}
where $E_1$ is the expectation with respect to the dominating measure $P_1$. The following lemma provides upper and lower bounds for this quantity in terms of the metric $d_Z$ from \eqref{Eq:dZ}. Its proof adapts, in the present setting, the argument to derive the first statement of Lemma 1 in \cite{BSvZ15}.

\begin{lemma}\label{Lem:HellDistBounds}%%%
	For all $\rho_1,\rho_2\in\Rcal$ and all $M>0$ it holds that
	$$
	\frac{2(1 - e^{-\frac{1}{2}M\vol(\Wcal)})}{M\vol(\Wcal)}d_Z^2(\rho_1\land M,\rho_2\land M)
	\le
	h^2\left(p_{\rho_1},p_{\rho_2}\right)
	\le 2d_Z^2(\rho_1,\rho_2).
	$$
\end{lemma}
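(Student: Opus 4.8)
The plan is to reduce everything to a conditional computation given the covariate field $Z$. Since the marginal law of $Z$ is common to every $\rho$, the density $p_\rho(\cdot,Z)$ from \eqref{Eq:Likelihood} is precisely the Radon--Nikodym derivative, with respect to the unit-rate Poisson law, of the conditional law of $N$ given $Z$; in particular $E_1[p_\rho(N,Z)\mid Z]=1$. Consequently the Hellinger integrand depends only on this conditional Poisson structure, and writing $\mathrm{Aff}(Z):=E_1[\sqrt{p_{\rho_1}(N,Z)\,p_{\rho_2}(N,Z)}\mid Z]$ for the conditional affinity, the tower property gives
\[
    h^2(p_{\rho_1},p_{\rho_2}) = 2\bigl(1 - E_1[\sqrt{p_{\rho_1}p_{\rho_2}}]\bigr)=2\,\Enorm\bigl[1 - \mathrm{Aff}(Z)\bigr],
\]
where the outer expectation is over the law of $Z$, matching the conventions in \eqref{Eq:dZ}.

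Next I would compute $\mathrm{Aff}(Z)$ explicitly. Taking the square root of the product of two densities of the form \eqref{Eq:Likelihood}, the point-sum term collapses to $\sum_k\log\sqrt{\rho_1\rho_2}(Z(X_k))$, and applying the exponential (Campbell) formula for the unit-rate Poisson process, $E_1[\exp\sum_k f(X_k)\mid Z]=\exp\int_\Wcal(e^{f(x)}-1)\,dx$ with $e^{f(x)}=\sqrt{\rho_1(Z(x))\rho_2(Z(x))}$, the remaining deterministic integrals combine so that $\mathrm{Aff}(Z)=e^{-\frac12\Delta(Z)}$, where $\Delta(Z):=\int_\Wcal(\sqrt{\rho_1(Z(x))}-\sqrt{\rho_2(Z(x))})^2\,dx$ and hence $\Enorm[\Delta(Z)]=d_Z^2(\rho_1,\rho_2)$. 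This yields $h^2(p_{\rho_1},p_{\rho_2})=2\,\Enorm[1-e^{-\frac12\Delta(Z)}]$. The upper bound then follows immediately from the elementary inequality $1-e^{-u}\le u$ applied pointwise in $Z$, giving $h^2\le\Enorm[\Delta(Z)]=d_Z^2(\rho_1,\rho_2)\le 2d_Z^2(\rho_1,\rho_2)$.

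The lower bound is where the cut-off at $M$ enters and is the main obstacle, since without truncation $\Delta(Z)$ can be arbitrarily large and $1-e^{-\frac12\Delta(Z)}$ admits no linear lower bound in $\Delta(Z)$. Setting $\Delta_M(Z):=\int_\Wcal(\sqrt{(\rho_1\land M)(Z(x))}-\sqrt{(\rho_2\land M)(Z(x))})^2\,dx$, so that $\Enorm[\Delta_M(Z)]=d_Z^2(\rho_1\land M,\rho_2\land M)$, I would argue pointwise in $Z$ in three steps. First, truncation only shrinks the integrand: since $u\mapsto u\land\sqrt M$ is $1$-Lipschitz and $\sqrt{a\land M}=\sqrt a\land\sqrt M$, one has $|\sqrt{a\land M}-\sqrt{b\land M}|\le|\sqrt a-\sqrt b|$, whence $\Delta(Z)\ge\Delta_M(Z)$ and, by monotonicity of $u\mapsto 1-e^{-u/2}$, also $1-e^{-\frac12\Delta(Z)}\ge 1-e^{-\frac12\Delta_M(Z)}$. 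Second, the truncated discrepancy is bounded: using $(\sqrt a-\sqrt b)^2\le|a-b|\le a\vee b\le M$ for $a,b\in[0,M]$, I get $\Delta_M(Z)\le M\vol(\Wcal)$. Third, since $g(u):=1-e^{-u/2}$ is concave with $g(0)=0$, its chord on $[0,M\vol(\Wcal)]$ lies below the curve, so $g(u)\ge\frac{g(M\vol(\Wcal))}{M\vol(\Wcal)}\,u$ for $u\in[0,M\vol(\Wcal)]$; applied at $u=\Delta_M(Z)$ this gives
\[
    1-e^{-\frac12\Delta(Z)}\ \ge\ \frac{1-e^{-\frac12 M\vol(\Wcal)}}{M\vol(\Wcal)}\,\Delta_M(Z).
\]
Taking $\Enorm[\cdot]$ and multiplying by $2$ delivers exactly the stated lower bound. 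The delicate point throughout is that the concavity estimate is only valid on a bounded range, which is precisely what the truncation at $M$ supplies via the bound $\Delta_M(Z)\le M\vol(\Wcal)$.
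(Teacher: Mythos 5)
Your proposal is correct and follows essentially the same route as the paper's proof: both reduce to the identity $h^2(p_{\rho_1},p_{\rho_2})=2\left(1-\Enorm\left[e^{-\frac{1}{2}\Delta(Z)}\right]\right)$ via the Poisson exponential formula conditionally on $Z$, then obtain the upper bound from $1-e^{-u}\le u$ and the lower bound from truncation monotonicity, the bound $\Delta_M(Z)\le M\vol(\Wcal)$, and the chord inequality (which the paper invokes as Birg\'e's inequality $e^{-t_1}\le \frac{e^{-t_2}-1}{t_2}t_1+1$, identical to your concavity argument). The only cosmetic differences are that you compute the affinity directly under the dominating measure $P_1$ via Campbell's formula rather than changing measure to $P_{\rho_2}$ as the paper does, and your pointwise application of $1-e^{-u}\le u$ avoids the paper's intermediate Jensen step.
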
%%%%%%%%%%%%%%%%%%%%%%%%%%%%%%%

\begin{proof}%%%%%%%%%%%%%%%%%%%%%%%%%%%%%
	We start with the well-known identity for the square Hellinger distance,
	\begin{equation}
		\label{Eq:HellDistAff}
		h^2\left(p_{\rho_1},p_{\rho_2}\right) = 2\left[1 - a\left(p_{\rho_1},p_{\rho_2}\right) \right],
	\end{equation}
	where $a\left(p_{\rho_1},p_{\rho_2}\right):= E_1\left[ \sqrt{p_{\rho_1}(N,Z)} \sqrt{p_{\rho_2}(N,Z)}\right]$ is the Hellinger affinity. By a change of measure and the tower property of conditional expectations, the latter can be written as
	\begin{align*}
		a\left(p_{\rho_1},p_{\rho_2}\right)
		&= E_1\left[ \sqrt{ \frac{p_{\rho_1}(N,Z)}{p_{\rho_2}(N,Z)}}
		p_{\rho_2}(N,Z)\right]\\
		&=E_{\rho_2}\left[\sqrt{ \frac{p_{\rho_1}(N,Z)}{p_{\rho_2}(N,Z)}} \right]
		=\Enorm\left[
		E_{\rho_2}
		\left[\sqrt{ \frac{p_{\rho_1}(N,Z)}{p_{\rho_2}(N,Z)}} 
		\Bigg | Z \right] 
		\right],
	\end{align*}
	where the outer expectation is intended with respect to the law of $Z$. Recalling \eqref{Eq:Likelihood} and using the fact that, under $P_{\rho_2}$, $N|Z$ is distributed as an inhomogeneous Poisson point process with intensity $\lambda_{\rho_2}$, the inner expectation in the last display equals
	\begin{align*}
		&e^{-\frac{1}{2}\int_{\Wcal}(\lambda_{\rho_1}(x)-\lambda_{\rho_2}(x))dx} 
		E_{\rho_2}\left[e^{\frac{1}{2}\sum_{k=1}^K\log 
			\frac{\lambda_{\rho_1}(X_k)}{\lambda_{\rho_2}(X_k)}}\Bigg | Z \right] \\
		&\quad=
		e^{-\frac{1}{2}\int_{\Wcal}(\lambda_{\rho_1}(x)-\lambda_{\rho_2}(x))dx}
		e^{\int_{\Wcal}\left(1 - \sqrt{\lambda_{\rho_1}(x)/\lambda_{\rho_2}(x)}\right)
			\lambda_{\rho_2}(x)dx}
		%	&\quad=e^{-\frac{1}{2}\int_{\Wcal}\lambda_{\rho_1}(x) 
			%	- 2\sqrt{\lambda_{\rho_1}(x)\lambda_{\rho_2}(x)}
			%	+ \lambda_{\rho_2}(x)dx}
		%	&=e^{-\frac{1}{2}\int_{\Wcal}\left(\sqrt{\lambda_{\rho_1}(x)} 
			%	- \sqrt{\lambda_{\rho_2}(x)}\right)^2dx}
		=e^{-\frac{1}{2}\|\sqrt{\lambda_{\rho_1}} - \sqrt{\lambda_{\rho_2}} \|^2_{L^2(\Wcal)}}.
	\end{align*}
	This implies that
	$ a\left(p_{\rho_1},p_{\rho_2}\right)=\Enorm\left[ e^{-\frac{1}{2}\|\sqrt{\lambda_{\rho_1}} - \sqrt{\lambda_{\rho_2}} \|^2_{L^2(\Wcal)}} \right]$,
	which combined with \eqref{Eq:HellDistAff} yields
	\begin{equation}\label{Eq:HellDistIdentity}
		h^2\left(p_{\rho_1},p_{\rho_2}\right)
		= 2\left(1 - 
		\Enorm
		\left[ e^{-\frac{1}{2}\|\sqrt{\lambda_{\rho_1}} - \sqrt{\lambda_{\rho_2}} \|^2_{L^2(\Wcal)}} \right]\right).
	\end{equation}
	The upper bound in the statement of Lemma \ref{Lem:HellDistBounds} then follows from Jensen's inequality and an application of the fact that $1- e^{-t}\le t$ for all $t\in\R$, whence
	\[
	h^2\left(p_{\rho_1},p_{\rho_2}\right)
	\le
	2\left(1 - e^{-\frac{1}{2}\Enorm\|\sqrt{\lambda_{\rho_1}} 
		- \sqrt{\lambda_{\rho_2}} \|^2_{L^2(\Wcal)}}\right)
	\le 2\Enorm\left\|\sqrt{\lambda_{\rho_1}} 
	- \sqrt{\lambda_{\rho_2}} \right\|^2_{L^2(\Wcal)}
	= 2d_Z^2(\rho_1,\rho_2).
	\]

	For the lower bound, we apply ideas from the proof of Proposition 1 in \cite{B04}. We observe that for all $M>0$, 
	\begin{align*}
		\|\sqrt{\lambda_{\rho_1\land M}} - \sqrt{\lambda_{\rho_2\land M}} \|^2_{L^2(\Wcal)}
		&=\int_{\Wcal} \left( \sqrt{\rho_1(Z(x))\land M} 
		- \sqrt{\rho_2(Z(x))\land M} \right)^2dx\\
		&\le
		\int_{\Wcal}\left( \sqrt{\rho_1(Z(x))} 
		- \sqrt{\rho_2(Z(x))} \right)^2dx
		=\left\|\sqrt{\lambda_{\rho_1}} 
		- \sqrt{\lambda_{\rho_2}} \right\|^2_{L^2(\Wcal)},
	\end{align*}
	whence, in view of \eqref{Eq:HellDistIdentity},
	$$
	h^2\left(p_{\rho_1},p_{\rho_2}\right)
	\ge 2\left(1 - 
	\Enorm
	\left[ e^{-\frac{1}{2}\|\sqrt{\lambda_{\rho_1\land M}} - \sqrt{\lambda_{\rho_2\land M}} \|^2_{L^2(\Wcal)}} \right]\right).
	$$
	At the same time, $\left\|\sqrt{\lambda_{\rho_1\land M}} - \sqrt{\lambda_{\rho_2\land M}} \right\|^2_{L^2(\Wcal)}\le M\vol(\Wcal)$, and by using the inequality, holding for all $0\le t_1 \le t_2$,
	$$
	e^{-t_1}\le \frac{e^{-t_2} -1}{t_2}t_1 + 1,
	$$
	cf.~\cite[p.~1043]{B04}, with the choices $t_1 = \frac{1}{2}\left\|\sqrt{\lambda_{\rho_1\land M}} - \sqrt{\lambda_{\rho_2\land M}} \right\|^2_{L^2(\Wcal)}$ and $t_2 = \frac{1}{2}M\vol(\Wcal)$, we obtain
	\begin{align*}
		h^2\left(p_{\rho_1},p_{\rho_2}\right)
		&\ge \frac{2(1 - e^{-\frac{1}{2}M\vol(\Wcal)})}{M\vol(\Wcal)}
		\Enorm\left\|\sqrt{\lambda_{\rho_1\land M}} - \sqrt{\lambda_{\rho_2\land M}} \right\|^2_{L^2(\Wcal)}  \\
		&=\frac{2(1 - e^{-\frac{1}{2}M\vol(\Wcal)})}{M\vol(\Wcal)}
		d_Z^2(\rho_1\land M,\rho_2\land M).
	\end{align*}
\end{proof}

%
%
%
%
%

%%%%%%%%%%%%%%%%%%%%%%%%%%%%%%%%%%%%%%%%%%
\newpage

\section{Further simulation results}
\label{Sec:MoreSym}

We expand the numerical simulation studies from Section \ref{Sec:Simulations}, providing additional experiments with different ground truths, as well as various diagnostic plots for the MCMC algorithm described in Section \ref{Subsec:Algorithm}, which we have employed throughout to approximately sample from the posterior distributions. We further empirically investigate the performance of our approach in the presence of purely spatial effects and over-parametrization.

%
%
%

%%%%%%%%%%%%%%%%%%%%%%%%%%%%%%%%%%%%%%%%%%
\subsection{Additional experiments with univariate covariates}
\label{S:Sec:More1DExp}

On the observation window $\Wcal = [-1/2,1/2]^2$, we take the univariate covariate random field $Z$ from Section \ref{Subsec:1DExp}, and consider the recovery of two additional ground truths, respectively defined as:
\begin{enumerate}
	\item A simple exponentially-decaying intensity function, cf.~Figure \ref{Fig:1DResults2} (top row),
	\begin{equation}
		\label{Eq:1DTruth2}
		\rho_0(z) = 2 e^{3 (1-z) -1}, \qquad z\in[0,1];
	\end{equation}
	
	\item A more volatile intensity function with both positive and negative deviations from a flat baseline, cf.~Figure \ref{Fig:1DResults2} (bottom row),
	\begin{equation}
		\label{Eq:1DTruth3}
		\rho_0(z) = 2 + 2 P \Big(z; \frac{3}{4}, a \Big) - 2P\Big( z; \frac{1}{4},a \Big), 
		\qquad a=\frac{3}{8},
		\qquad z\in[0,1],
	\end{equation}
	where $P(z;c,a) = 1 - p ( 2\delta(z, c)/a )$ is the `plateau function' centered at $c\in\R$ and with width $a/2>0$. Above, we have denoted by $\delta(z,c)$ the absolute distance of $z$ from $c$ up to period $1/2$, and by $p$  the smooth polynomial $p(t) = 0.6t^5 - 15 t^4+10t^3$ clamped to $[0,1]$.
\end{enumerate}

For both of these, we simulated independent realizations of $N$ and $Z$ as outlined in Section \ref{Subsec:1DExp}, and, for each set of observations, performed posterior inference via the Metropolis-within-Gibbs sampling scheme from Section \ref{Subsec:Algorithm}. All the prior (hyper-)hyper-parameters, as well as all the tuning parameters for the implementation of the MCMC algorithm, were specified exactly as in the experiments described in Section \ref{Subsec:1DExp}. Figure \ref{Fig:1DResults2} displays the obtained posterior means and point-wise $95\%$-credible intervals under increasing sample sizes $n=250,500,1000$, as well as averaged kernel estimates (defined as in \eqref{Eq:AverageKernel}). The (relative) $L^2$-estimation errors, averaged across 100 replications of each experiment, and their standard deviations, are reported in Table \ref{Tab:1DTab2}. The results are broadly in line with the ones from the experiments presented in Section \ref{Subsec:1DExp}, corroborating the conclusions drawn therein, and empirically supporting the theoretical findings from Section \ref{Subsec:Theory}.

\begin{figure}[H]
\centering
	\begin{subfigure}[b]{\textwidth}
		\centering
		\includegraphics[width=\textwidth]{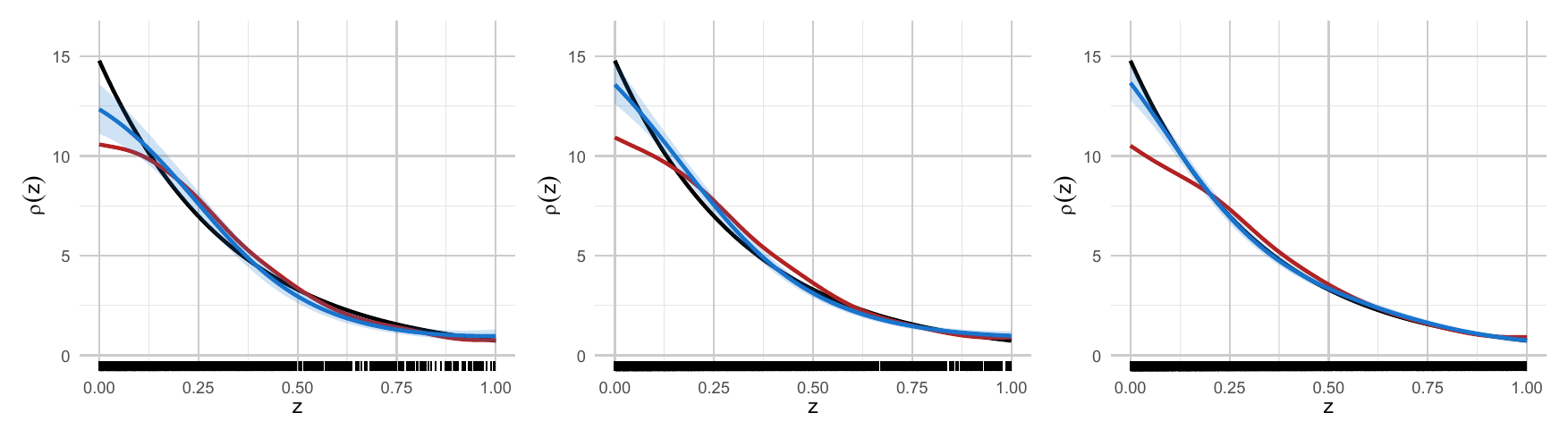}
	\end{subfigure}
	\begin{subfigure}[b]{\textwidth}
		\centering
		\includegraphics[width=\textwidth]{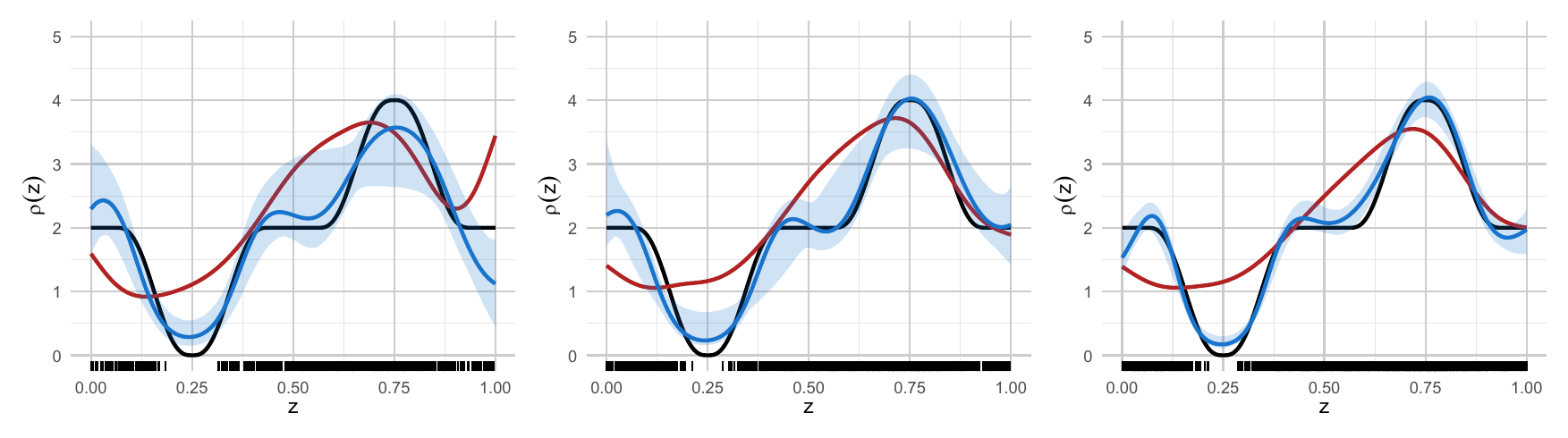}
	\end{subfigure}
	\caption{Top row, left to right: Posterior means (solid blue), pointwise $95\%$-credible intervals (shaded blue), averaged kernel estimates (solid red) for $n = 250, 500, 1000$. The ground truth \eqref{Eq:1DTruth2} is shown in solid black. Bottom row: Estimates for the true intensity \eqref{Eq:1DTruth3}.}
	\label{Fig:1DResults2}
\end{figure}

\begin{table}[H]
	\centering
	\begin{tabular}{r|r|rrrrrr}
		$\rho_0$ &  & $n = 50$ & $n = 250$ & $n = 500$ & $n = 1000$  \\  
		\hline
		\multirow{2}{*}{\eqref{Eq:1DTruth2}} 
		%& $\| \hat{\rho}^{(n)}_\Pi - \rho_0 \|_{L^2}$ 
		%& 1.34 (0.26) & 0.76 (0.14) & 0.44 (0.077) & 0.26 (0.063) \\
		& $\| \hat{\rho}^{(n)}_\Pi - \rho_0 \|_{L^2}/\|\rho_0 \|_{L^2}$ 
		& 0.22 (0.043) & 0.13 (0.023) & 0.07 (0.013) & 0.04 (0.01) \\ 
		%\hline
		%& $\| \hat{\rho}_\kappa - \rho_0 \|_{L^2}$ 
		%& 1.27 (0.22) & 1.27 (0.21) & 1.11(0.13) & 1.13 (0.05) \\ 
		& $\| \hat{\rho}_\kappa - \rho_0 \|_{L^2}/\|\rho_0 \|_{L^2}$ 
		& 0.21 (0.04) & 0.21 (0.03) & 0.18 (0.02) & 0.19 (0.008) \\
		\hline
		\multirow{2}{*}{\eqref{Eq:1DTruth3}} 
		%& $\| \hat{\rho}^{(n)}_\Pi - \rho_0 \|_{L^2}$ &  0.85 (0.085) & 0.51 (0.061) & 0.41 (0.08) & 0.36 (0.04) \\
		&$\| \hat{\rho}^{(n)}_\Pi - \rho_0 \|_{L^2}/\|\rho_0 \|_{L^2}$ 
		& 0.37 (0.04) & 0.22 (0.03) & 0.17 ( 0.04) & 0.15 (0.03) \\
		%& $\| \hat{\rho}_\kappa - \rho_0 \|_{L^2}$ & 0.74  (0.24) & 0.82 (0.55) & 0.77 (0.36) & 0.63 (0.13)\\
		& $\| \hat{\rho}_\kappa - \rho_0 \|_{L^2}/\|\rho_0 \|_{L^2}$ & 0.33 (0.11) & 0.36 (0.24) & 0.34 (0.16)& 0.27 (0.06) \\
		\hline
	\end{tabular}
	\caption{Average relative $L^2$-estimation errors (and their standard deviations) over 100 repeated experiments for the posterior mean $\hat{\rho}^{(n)}_\Pi$ and the averaged kernel estimate $\hat{\rho}_\kappa$. For $\rho_0$ as in \eqref{Eq:1DTruth2}, $\|\rho_0\|_{L^2} = 6.09$; for $\rho_0$ as in \eqref{Eq:1DTruth3}, $\|\rho_0\|_{L^2} = 2.29$. }
	\label{Tab:1DTab2}
\end{table}

%
%
%

%%%%%%%%%%%%%%%%%%%%%%%%%%%%%%%%%%%%%%%%%%
\subsection{Additional experiments with bivariate covariates}
\label{S:Sec:More2DExp}

For the bi-dimensional scenario, we consider covariates $Z(x) = (Z_1(x),Z_2(x))$ constructed as in Section \ref{Subsec:2DExp}. Figure \ref{Fig:2DPattern} shows three realizations of the bivariate covariate process, where the difference in the length-scales between $Z_1$ and $Z_2$ (0.005 and 0.05, respectively) can be visualized.

\begin{figure}[H]
	\centering
	\begin{subfigure}[b]{\textwidth}
		\centering
		\includegraphics[width=\textwidth]{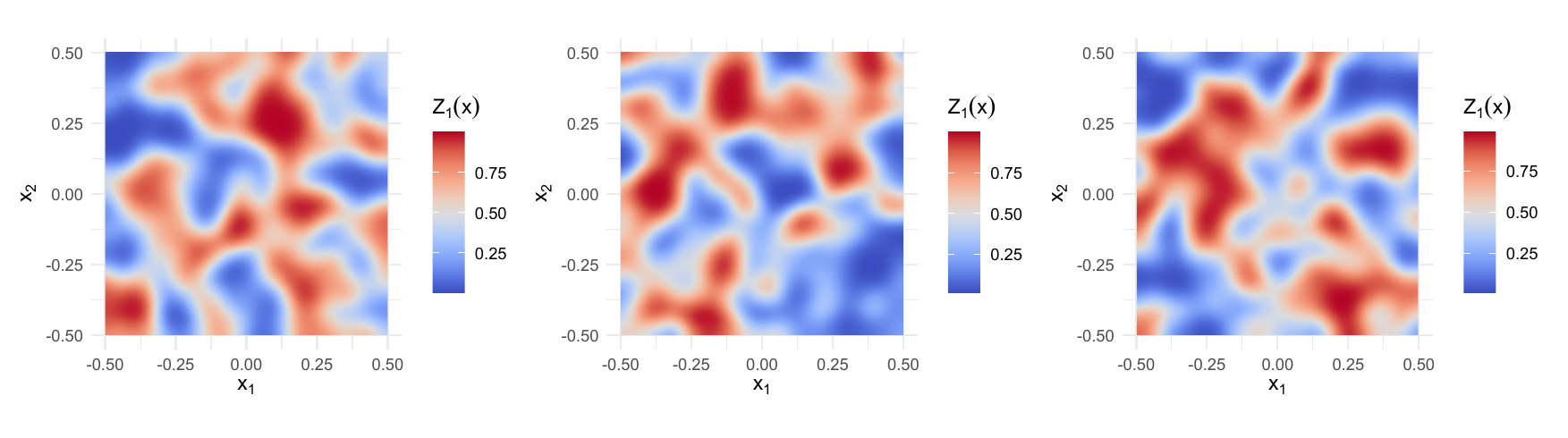}
	\end{subfigure}
	\begin{subfigure}[b]{\textwidth}
		\centering
		\includegraphics[width=\textwidth]{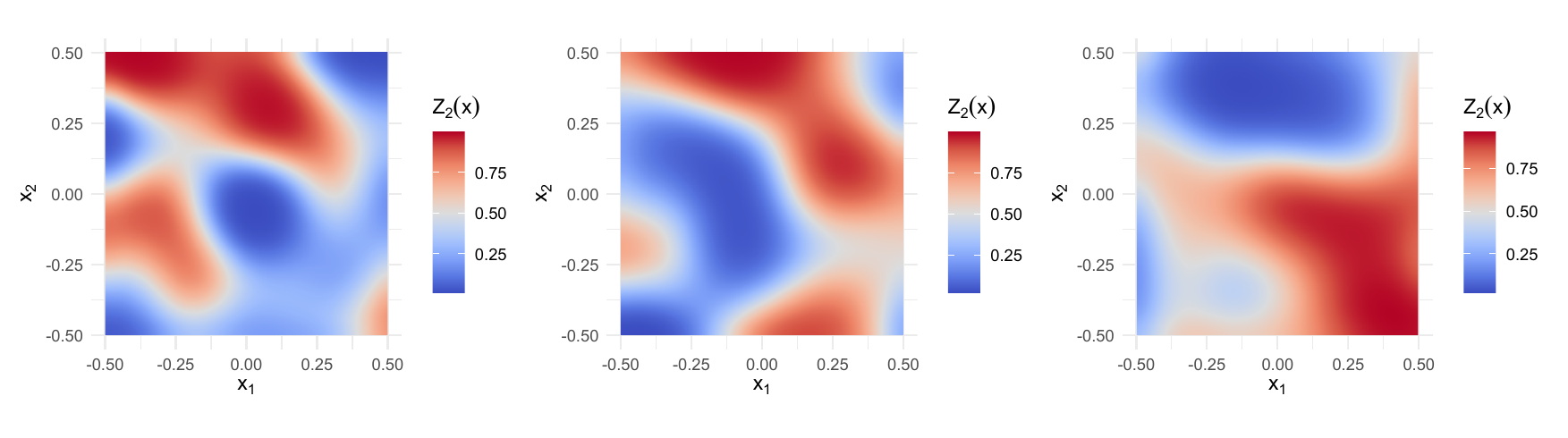}
	\end{subfigure}
	\caption{Independent realizations of the bivariate covariate process. The top row is relative to $Z_1$, while the bottom row is relative to $Z_2$.}     
	\label{Fig:2DPattern}
\end{figure}

Next, we construct two additional true covariate-based intensities, respectively defined as:
\begin{enumerate}
	\item 
	\begin{equation} 
		\label{Eq:2DTruth3}
		\rho_0(z_1, z_2) = \max \bigg\{ 0, \ 30 - 90 \ f_{SN}\left(z_1,z_2, \ (0.3, 0.3), \ 0.5 I_2, \ (-1,-1) \right) \bigg\} ,
	\end{equation}
	for $(z_1,z_2)\in[0,1]^2$, where $I_2$ is the identity matrix in $\R^{2,2}$ and $f_{SN}$ denotes the (bi-dimensional) skew-normal p.d.f., cf.~Figure \ref{Fig:2DResults3} (last panel);

	\item
	\begin{equation}
		\begin{split}
			\label{Eq:2DTruth2}
			&\rho_0(z_1,z_2) \\
			&\ = 6 f_{SN}(z_1,z_2; (0.3, 0.8), 0.03 I_2, (-1,-1) ) 
			+ 14 f_{SN}(z_1,z_2; (0.7, 0.2), 0.05 I_2, (3,-2) ),
		\end{split}
	\end{equation} 
	for $(z_1,z_2)\in[0,1]^2$, cf.~Figure \ref{Fig:2DResults2} (last panel).
\end{enumerate}

The obtained estimates for the ground truth from \eqref{Eq:2DTruth3} are shown in Figures \ref{Fig:2DResults3} and \ref{Fig:2Dmarginals}. For an enhanced visualization, the latter  displays the one-dimensional projections of the posterior means along the two diagonals of the covariate space $\Zcal = [0,1]^2$. This allows to more clearly asses the quality of the reconstruction of important features of the true intensity, such as the peak located in the top-right corner, and the depression concentrated in the bottom part. Results for the ground truth \eqref{Eq:2DTruth2} are shown in Figure \eqref{Eq:2DTruth3}. Table \ref{Tab:2DResults3} reports the estimation errors. The performance of the averaged kernel estimates is also included.

\begin{figure}[H]
	\centering
	\includegraphics[width=\textwidth]{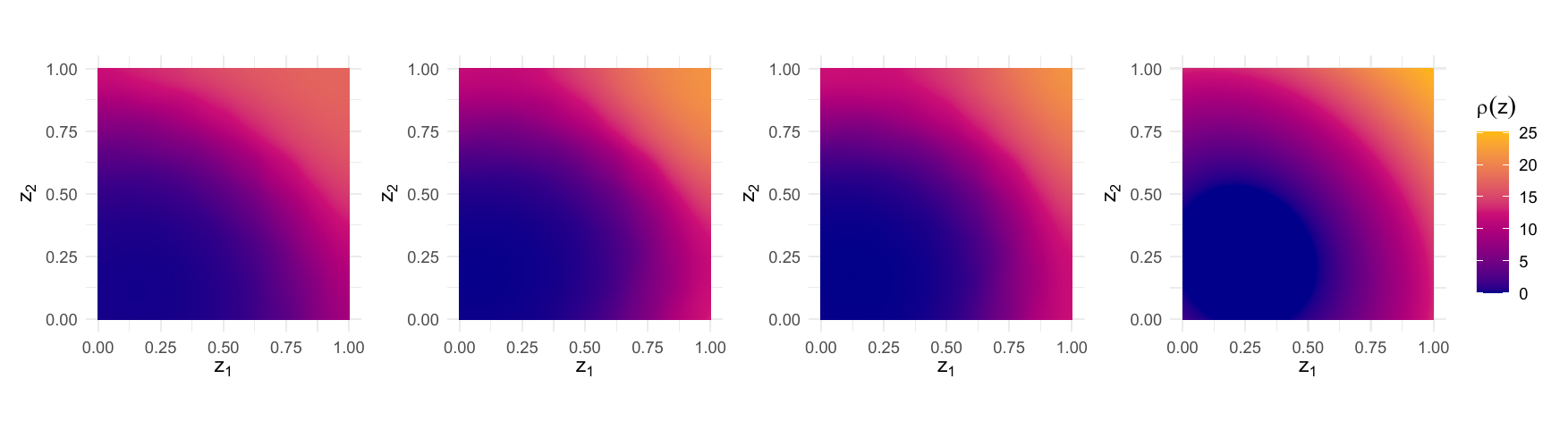}
	\caption{Left to right: Posterior means for $n = 50, 250, 1000$, and the ground truth \eqref{Eq:2DTruth3}.}
	\label{Fig:2DResults3}
\end{figure}
\begin{figure}[H]
	\centering
	\begin{subfigure}[b]{\textwidth}
		\centering
		\includegraphics[width=\textwidth]{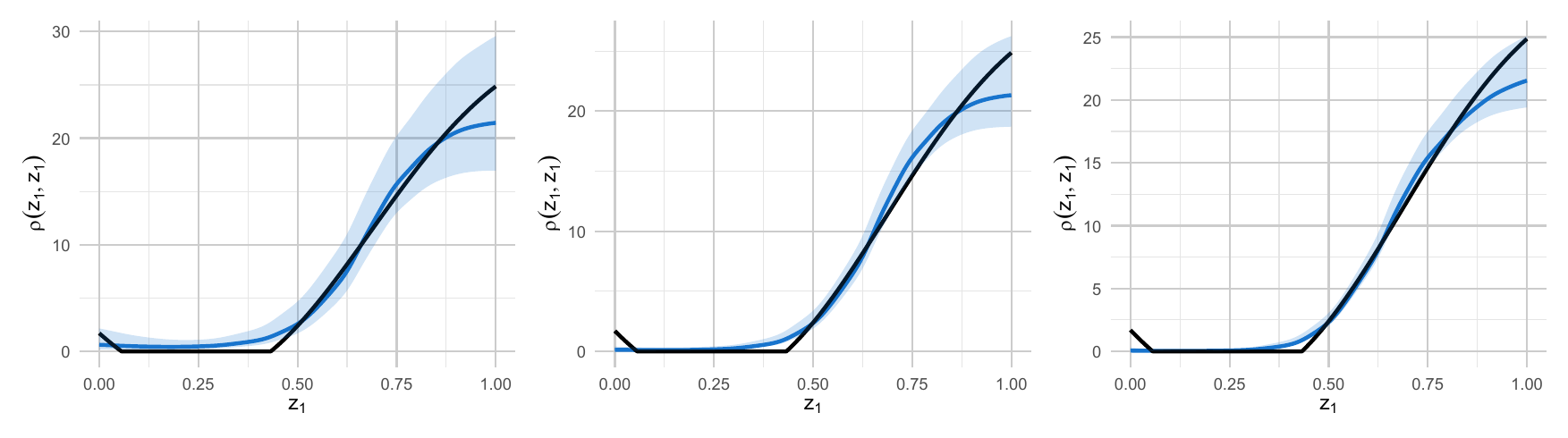}
	\end{subfigure}
	\begin{subfigure}[b]{\textwidth}
		\centering
		\includegraphics[width=\textwidth]{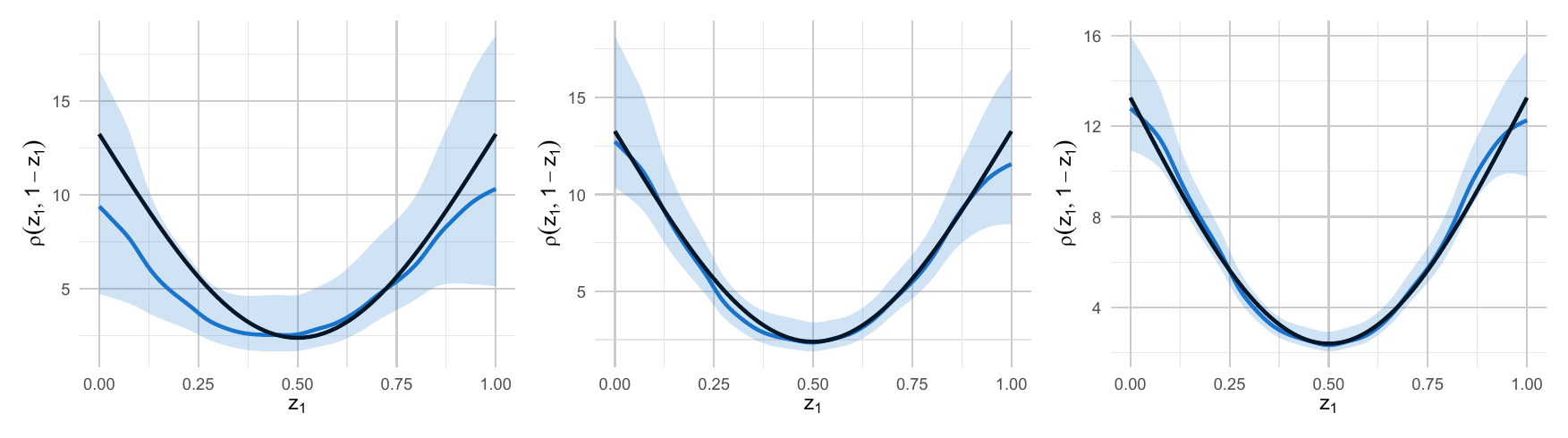}
	\end{subfigure}
	\caption{Top row, left to right: Projections of the posterior means (solid blue) along the principal diagonal, and associated pointwise $95\%$-credible intervals (shaded blue). The solid black line represents the projection of the ground truth, $\rho_0(z_1,z_1), \ z_1\in[0,1]$, for $\rho_0$ as in \eqref{Eq:2DTruth3}. Bottom row: projections on the anti-diagonal. The solid black line shows the projection $\rho_0(z_1,1-z_1), \ z_1\in[0,1]$.}     
	\label{Fig:2Dmarginals}
\end{figure}

\begin{figure}[H]
	\centering
	\includegraphics[width=\textwidth]{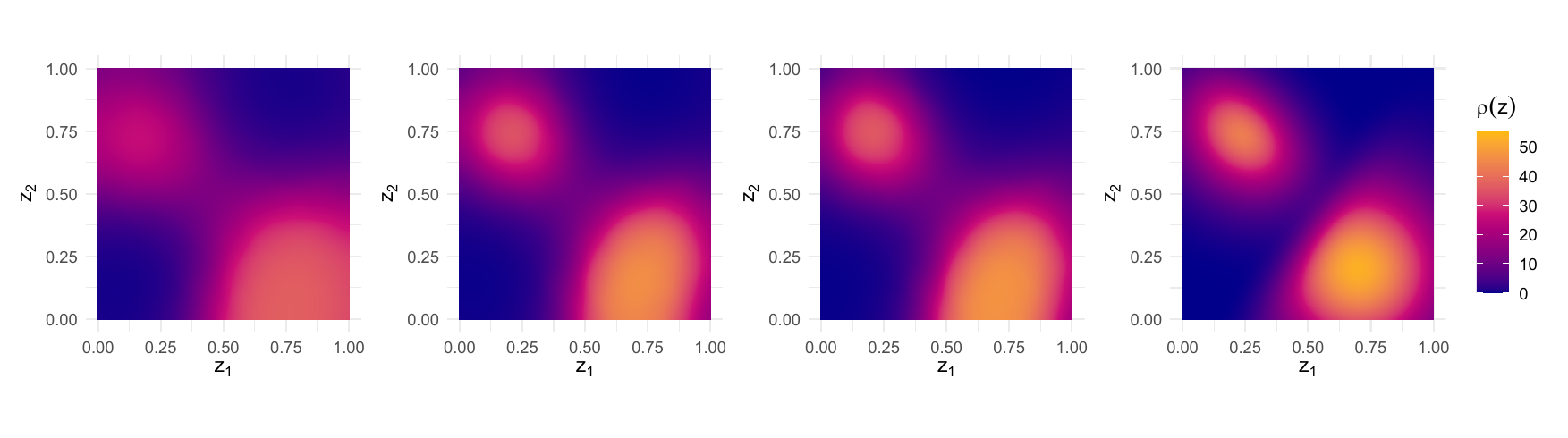}
	\caption{Left to right: Posterior means for $n = 50, 250, 1000$, and the ground truth \eqref{Eq:2DTruth2}.}
	\label{Fig:2DResults2}
\end{figure}

\begin{table}[H]
	\centering
	\begin{tabular}{r|r|rrrrrr}
		$\rho_0$ &  & $n = 10$  & $n = 50$ & $n = 250$ & $n = 1000$  \\  
		\hline
		\multirow{2}{*}{\eqref{Eq:2DTruth3}}
		%$\| \hat{\rho}^{(n)}_\Pi - \rho_0 \|_{L^2}$  & 3.08 (0.62) & 1.29 (0.27) & 0.99 (0.34) & 0.60 (0.04) \\
		& $\| \hat{\rho}^{(n)}_\Pi - \rho_0 \|_{L^2}/
		\|\rho_0 \|_{L^2}$ & 0.32 (0.07) & 0.14 (0.03) & 0.10 (0.04) & 0.06 (0.004) \\ 
		%\hline
		%$\| \hat{\rho}_\kappa - \rho_0 \|_{L^2}$ & 3.05 (0.37) & 1.59 (0.23) & 1.3 (0.11) & 1.26 (0.09) \\
		& $\| \hat{\rho}_\kappa - \rho_0 \|_{L^2}/\|\rho_0 \|_{L^2}$ & 0.32 (0.04) & 0.17 (0.02) & 0.14 (0.01) & 0.13 (0.01) \\
		\hline
		\multirow{2}{*}{\eqref{Eq:2DTruth2}}
		%&$\| \hat{\rho}^{(n)}_\Pi - \rho_0 \|_{L^2}$ 
		%& 10.04 (1.09) & 5.17 (0.42) & 2.93 (0.35) & 2.69 (0.25) \\
		& $\| \hat{\rho}^{(n)}_\Pi - \rho_0 \|_{L^2}/
		\|\rho_0 \|_{L^2}$ 
		& 0.47 (0.05) & 0.24 (0.02) & 0.14 (0.02) & 0.13 (0.01) \\ 
		%\hline
		% $\| \hat{\rho}_\kappa - \rho_0 \|_{L^2}$ 
		%& 6.75 (0.71) & 5.91 (0.31)& 5.78 (0.15) & 5.76 (0.06)  \\
		& $\| \hat{\rho}_\kappa - \rho_0 \|_{L^2}/\|\rho_0 \|_{L^2}$ 
		& 0.32 (0.03) & 0.28 (0.02) & 0.27 (0.006) & 0.26 (0.003) \\
		\hline
	\end{tabular}
	\caption{Average relative $L^2$-estimation errors (and their standard deviations) over 100 repeated experiments for the posterior mean $\hat{\rho}^{(n)}_\Pi$ and the averaged kernel estimate $\hat{\rho}_\kappa$. For $\rho_0$ as in \eqref{Eq:2DTruth3}, $\|\rho_0\|_{L^2} = 9.62$; for $\rho_0$ as in \eqref{Eq:2DTruth2}, $\|\rho_0\|_{L^2} = 21.36$.}
	\label{Tab:2DResults3}
\end{table}

%
%
%

%%%%%%%%%%%%%%%%%%%%%%%%%%%%%%%%%%%%%%%%%%
\subsection{Experiments with deterministic covariates}
\label{Subsec:ExpDetCovariates}

In view of the discussion in Remark \ref{Rem:DetCov}, we document the performance of our approach in an example with both random and deterministic covariates. On the spatial domain $\Wcal = [-1/2,1/2]^2$, we consider a univariate covariate random field $Z_1 = Z_{\text{rand}}$, constructed as in Section \ref{Subsec:1DExp}, and the deterministic covariate $Z_2(x) = Z_{\text{det}}(x) = 1/2+x_1$ accounting for residual spatial effects in the first coordinate. On the covariate space $\Zcal = [0,1]^2$, we take the ground truth

\begin{equation} 
		\label{Eq:2DTruthCoord}
		\rho_0(z_1, z_2) = \max \Big\{ 0, 15 \ z_2  \ f_{SN}(z_1, 0.8, 0.3, -5) \Big\},
	\end{equation}
with $f_{SN}$ the (one-dimensional) skew-normal p.d.f., cf.~Figure \ref{Fig:2DResultsCoord}, last panel.

\begin{figure}[H]
	\centering
	\includegraphics[width=\textwidth]{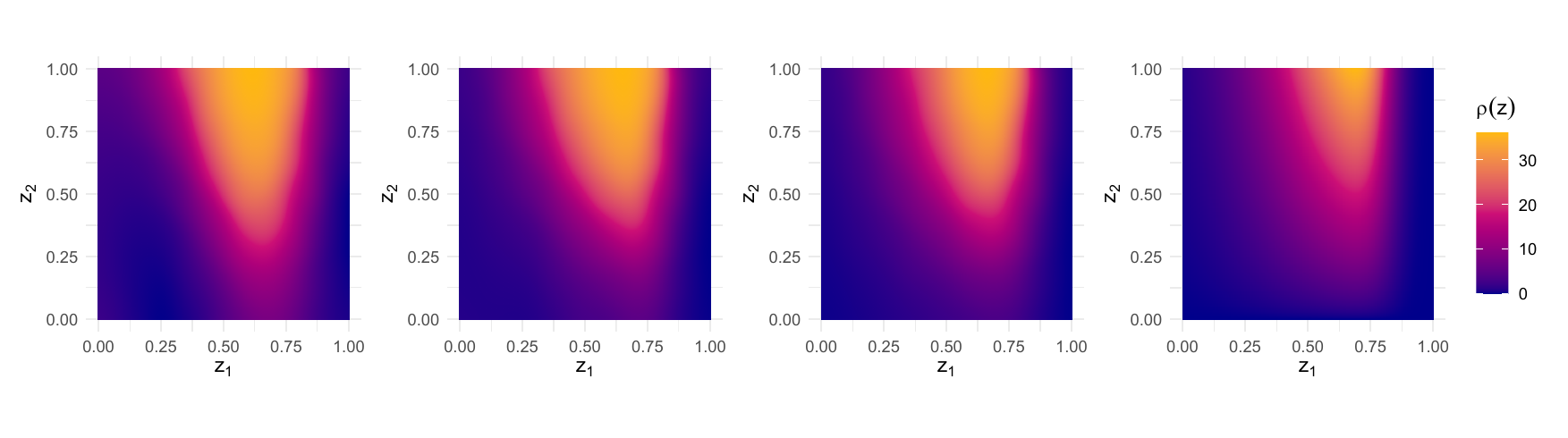}
	\caption{Left to right: Posterior means for $n = 50, 250, 1000$, and the ground truth \eqref{Eq:2DTruthCoord}.}
	\label{Fig:2DResultsCoord}
\end{figure}

The results are visualized in Figures \ref{Fig:2DResultsCoord} and \ref{Fig:2DResultsCoordMarg}, and summarized in Table \ref{Tab:2DResultsCoord}. They showcase the flexibility of the proposed methods in handling both types of covariates. Particularly, the linear dependence on the first spatial coordinate is effectively detected, as shown by the projected estimates in the bottom row of Figure \ref{Fig:2DResultsCoordMarg}.

\begin{figure}[H]
	\centering
	\begin{subfigure}[b]{\textwidth}
		\centering
		\includegraphics[width=\textwidth]{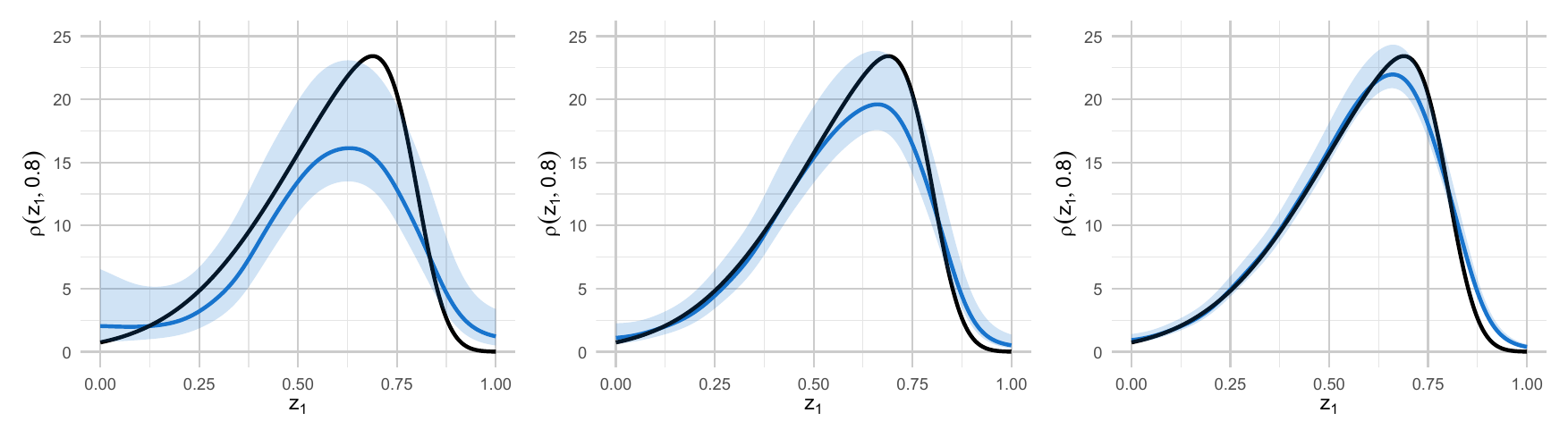}
	\end{subfigure}
	\begin{subfigure}[b]{\textwidth}
		\centering
		\includegraphics[width=\textwidth]{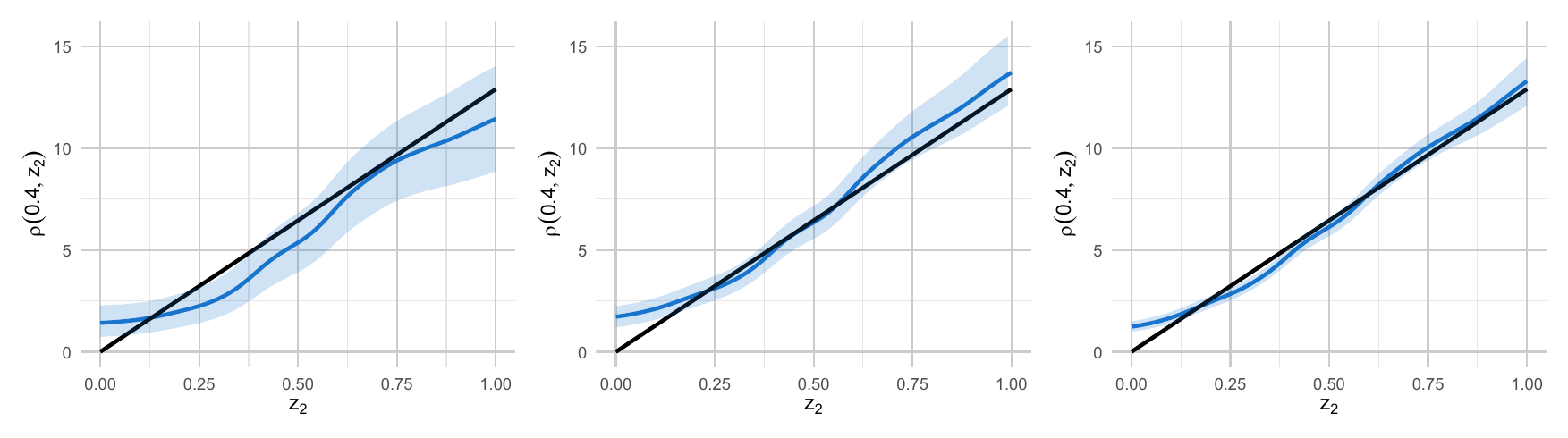}
	\end{subfigure}
	\caption{Top row, left to right: Projections of the posterior means (solid blue) along the subspace $z_2 = 0.8$ (i.e.~$x_1 = 0.3$) and associated pointwise $95\%$-credible intervals (shaded blue). The solid black line represents the projection of the ground truth, $\rho_0(z_1,0.8), \ z_1\in[0,1]$, for $\rho_0$ as in \eqref{Eq:2DTruthCoord}. Bottom row: projections on the subspace $z_1 = 0.4$. The solid black line shows the projection $\rho_0(0.4,z_2), \ z_2\in[0,1]$.}     
	\label{Fig:2DResultsCoordMarg}
\end{figure}

Lastly, we asses the robustness of our approach to over-parametrization by studying the effect of including an additional covariate that in reality has no effect in the true data generating mechanism. Specifically, in the experimental setup of Section \ref{Subsec:1DExp}, with univariate random covariate field $Z_1$ and ground truth \eqref{Eq:1DTruth}, we fit the model
$$
    \lambda(x) = \rho(Z_1(x), Z_2(x)), \qquad x\in[-1/2,1/2]^2,
$$
with $Z_2(x) = 1/2+x_1$. Figure \ref{Fig:2DResultsMiss} (first three panels) shows the obtained posterior means, to be compared to the `over-parametrized' ground truth 
$$
    \rho_0(z_1, z_2) = 5f_{SN}(z; 0.8, 0.3 ,-5),
    \qquad z_1\in[0,1],
    \qquad z_2 = 1/2+x_1\in[0,1],
$$
cf.~Figure \ref{Fig:2DResultsMiss} (last panel). The estimates capture the constant effect in the second argument (i.e., the first spatial coordinate), as can also be seen from the bottom row of Figure \ref{Fig:2DResultsMissMarg}. Relative estimation errors are reported in Table \ref{Tab:2DResultsCoord}. They are generally slightly higher than those obtained in Section \ref{Subsec:1DExp}, pointing to a negative impact of over-parametrization on performance.

\begin{figure}[H]
	\centering
	\includegraphics[width=\textwidth]{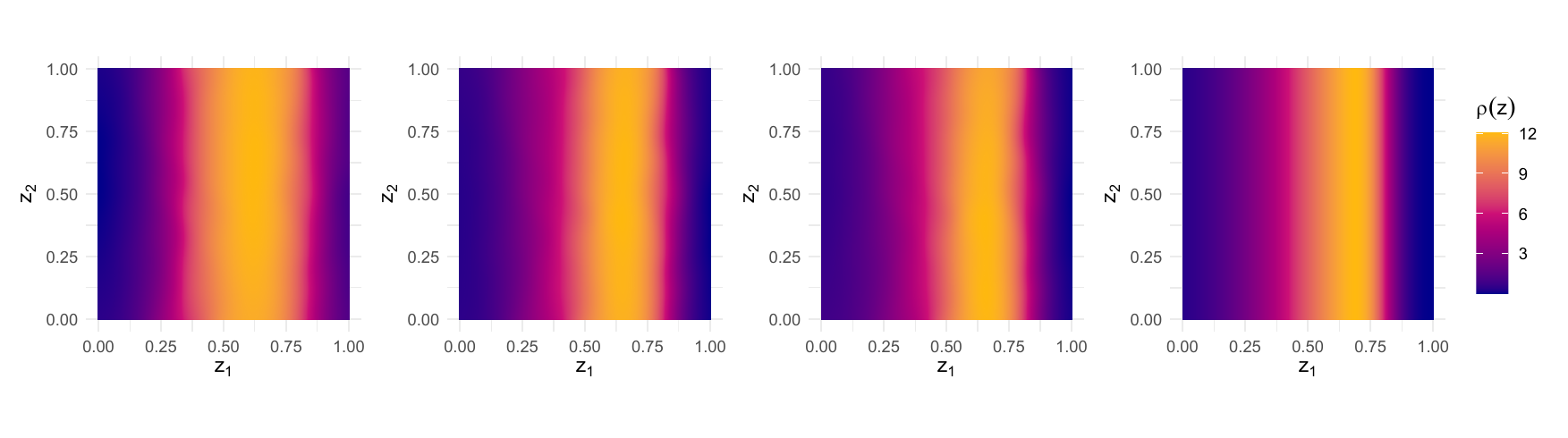}
	\caption{Left to right: Posterior means for $n = 50, 250, 1000$, and the lifted version of ground truth \eqref{Eq:1DTruth}.}
	\label{Fig:2DResultsMiss}
\end{figure}

\begin{figure}[H]
		\begin{subfigure}[b]{\textwidth}
		\centering
		\includegraphics[width=\textwidth]{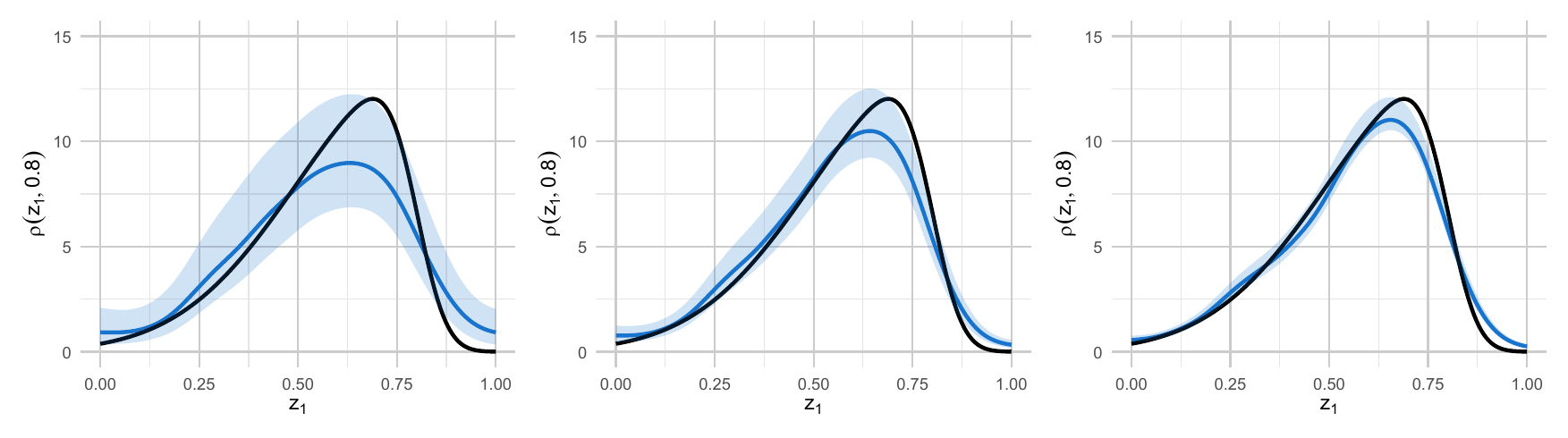}
	\end{subfigure}
	\begin{subfigure}[b]{\textwidth}
		\centering
		\includegraphics[width=\textwidth]{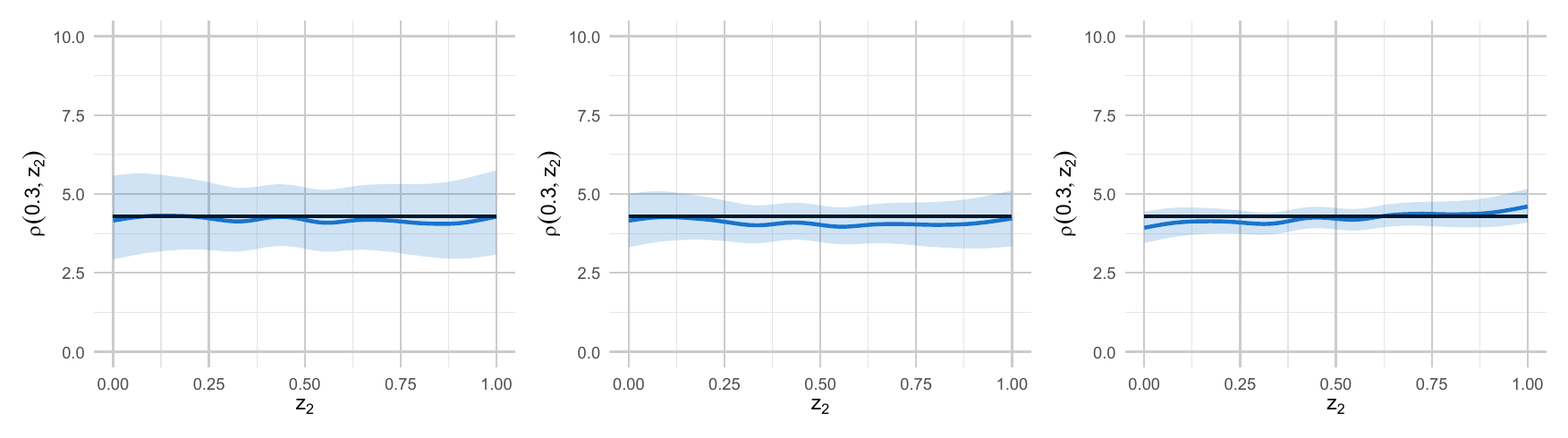}
	\end{subfigure}
	\caption{Top row, left to right: Projections of the posterior means (solid blue) along the subspace $z_2 = 0.8$ (i.e.~$x_1 = 0.3$) and associated pointwise $95\%$-credible intervals (shaded blue). The solid black line represents the data generating ground truth, $\rho_0(z_1), \ z_1\in[0,1]$, for $\rho_0$ as in \eqref{Eq:1DTruth}. Bottom row: projections on the subspace $z_1 = 0.3$. The solid black line shows the value $\rho_0(0.3)$.}     
	\label{Fig:2DResultsMissMarg}
\end{figure}

\begin{table}[H]
	\centering
	\begin{tabular}{r|r|rrrrrr}
		$\rho_0$ &  & $n = 10$  & $n = 50$ & $n = 250$ & $n = 1000$  \\  
		\hline
		\multirow{2}{*}{\eqref{Eq:2DTruthCoord}}
		& $\| \hat{\rho}^{(n)}_\Pi - \rho_0 \|_{L^2}/
		\|\rho_0 \|_{L^2}$ & 0.69 (0.03) & 0.33 (0.02) & 0.20 (0.02) & 0.13 (0.02) \\ 
		%\hline
		& $\| \hat{\rho}_\kappa - \rho_0 \|_{L^2}/\|\rho_0 \|_{L^2}$ & 0.35 (0.07) & 0.29 (0.02) & 0.29 (0.01) & 0.28 (0.01) \\
        \hline
		\eqref{Eq:1DTruth}, over-
		& $\| \hat{\rho}^{(n)}_\Pi - \rho_0 \|_{L^2}/
		\|\rho_0 \|_{L^2}$ & 0.62 (0.06) & 0.23 (0.07) & 0.14 (0.01) & 0.10 (0.01) \\ 
		%\hline
		parametrized & $\| \hat{\rho}_\kappa - \rho_0 \|_{L^2}/\|\rho_0 \|_{L^2}$ & 0.41 (0.07) & 0.29 (0.02) & 0.27 (0.01) & 0.27 (0.01 ) \\
        \hline
	\end{tabular}
	\caption{Average relative $L^2$-estimation errors (and their standard deviations) over 100 repeated experiments for the posterior mean $\hat{\rho}^{(n)}_\Pi$ and the averaged kernel estimate $\hat{\rho}_\kappa$. For $\rho_0$ as in \eqref{Eq:2DTruthCoord}, $\|\rho_0\|_{L^2} = 7.44$.}
	\label{Tab:2DResultsCoord}
\end{table}

%%%%%%%%%%%%%%%%%%%%%%%%%%%%%%%%%%%%%%%%%%
\subsection{MCMC diagnostics}
\label{S:Sec:Diagnostics}

Here, we document the empirical performance of the employed Metropolis-with-Gibbs MCMC algorithm in the experiments with synthetic data presented in Section \ref{Sec:Simulations} and above.

In the left and central panel of Figure \ref{Fig:Trace1D}, we report the trace-plots over $20000$ MCMC iterations for the upper-bound $\rho^*$ and the length-scale parameter $\ell$, in the context of the one-dimensional numerical simulation study from Section \ref{Subsec:1DExp}. Chains in different colors refer to different experiments, each based on $n = 1000$ i.i.d.~observations, and each initialized at a ‘cold start’ randomly drawn from the prior. The plot show consistent convergence of the chains towards equilibrium, after a burn-in period of about $5000$ steps. In particular, the approximate posterior samples of $\rho^*$ concentrate around slightly larger values than the actual maximum of the true intensity from \eqref{Eq:1DTruth} (which is equal to $12$), see Fig. \ref{Fig:1DResults}. The last panel of Figure \ref{Fig:Trace1D} displays the trace-plots of the log-likelihood of the MCMC samples for the intensity function $\rho$ (after the completion of each Gibbs step), seen to effectively move from the initialization point and then to stabilize around the log-likelihood of the ground truth (indicated by the dashed lines). This furnish another visualization of the convergence of the posterior distribution towards the true intensity captured by Figure \ref{Fig:1DResults}, and also hints at the overall positive mixing behavior of the employed MCMC algorithm.

\begin{figure}[H]
	\centering
	\includegraphics[width=\linewidth]{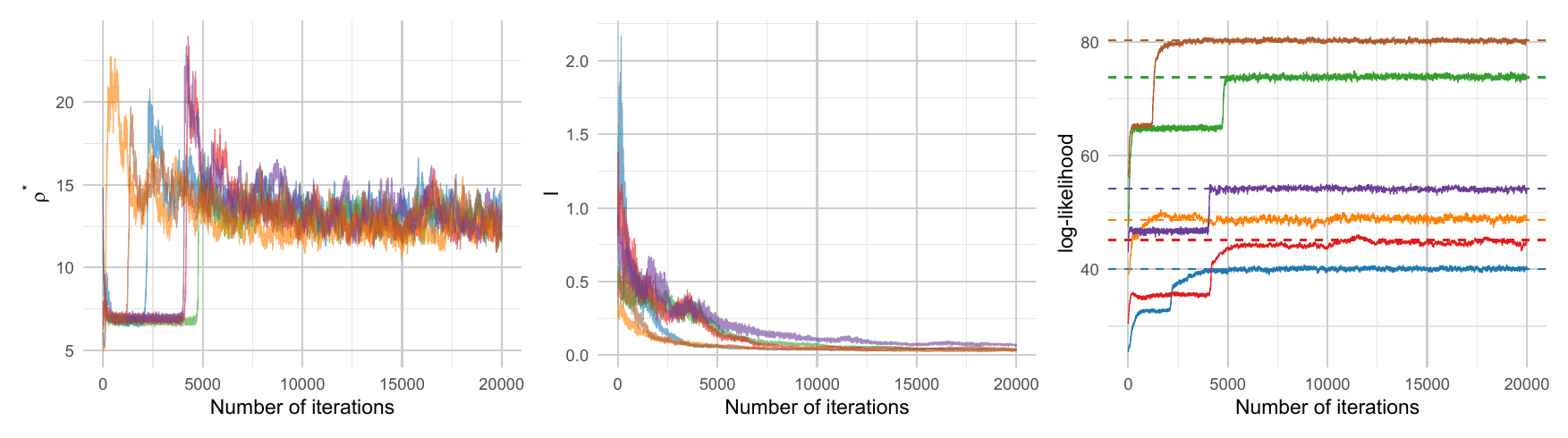}
	\caption{Left to right: Trace-plots over $20000$ steps of the Metropolis-with-Gibbs MCMC algorithm for the upper bound $\rho^*$, the length scale parameter $\ell$, and the log-likelihood of the intensity function $\rho$, respectively, in the one-dimensional scenario described in Section \ref{Subsec:1DExp}. Different colors refer to different experiments.}
	\label{Fig:Trace1D}
\end{figure}

Figure \ref{Fig:Trace1D2} shows the trace-plots of the point-wise evaluations of the intensity function at some representative covariate levels, specifically, at the location of the maximum of the ground truth ($z = 0.65$), in the left tail ($z = 0.15$), and at the minimizer ($z = 0.95$). These are seen to stabilize around the true values $\rho_0(z)$, $z = 0.65, 0.15, 0.95$, indicated by black dashed lines.

\begin{figure}[H]
	\centering
	\includegraphics[width=\linewidth]{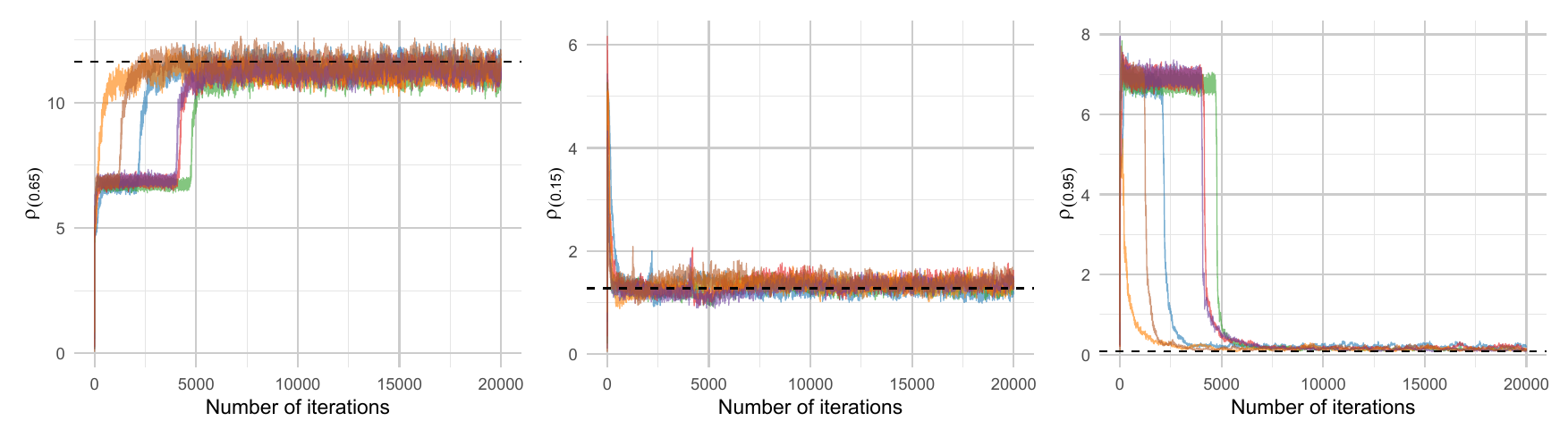}
	\caption{Left to right: Trace-plots over $20000$ steps of the Metropolis-with-Gibbs MCMC algorithm for $\rho(z)$, $z = 0.65, 0.15, 0.95$, respectively, in the one-dimensional scenario described in Section \ref{Subsec:1DExp}. Different colors refer to different experiments.}
	\label{Fig:Trace1D2}
\end{figure}

Moving to the two-dimensional simulation study presented in Section \ref{Subsec:2DExp}, recall the anisotropic ground truth from \eqref{Eq:2DTruth}, whose characteristic length-scale in the first argument is around one order of magnitude smaller than in the second. Figure \ref{Fig:Trace2D1} displays the trace-plots for the upper bound parameter $\rho^*$, the two length-scales $\ell_1,\ell_2$, their exponents $\theta_1$,$\theta_2$, as well as for the log-likelihood after each complete Gibbs step. In our nonparametric Bayesian procedure, the length scales relative to distinct directions are allowed to vary independently, and we observe that the corresponding chains stabilize (despite some variability across the experiments) around values that differ by a factor close to $10$, reflecting the anisotropy of the true intensity function.

\begin{figure}[H]
	\centering
	\includegraphics[height=6.5cm]{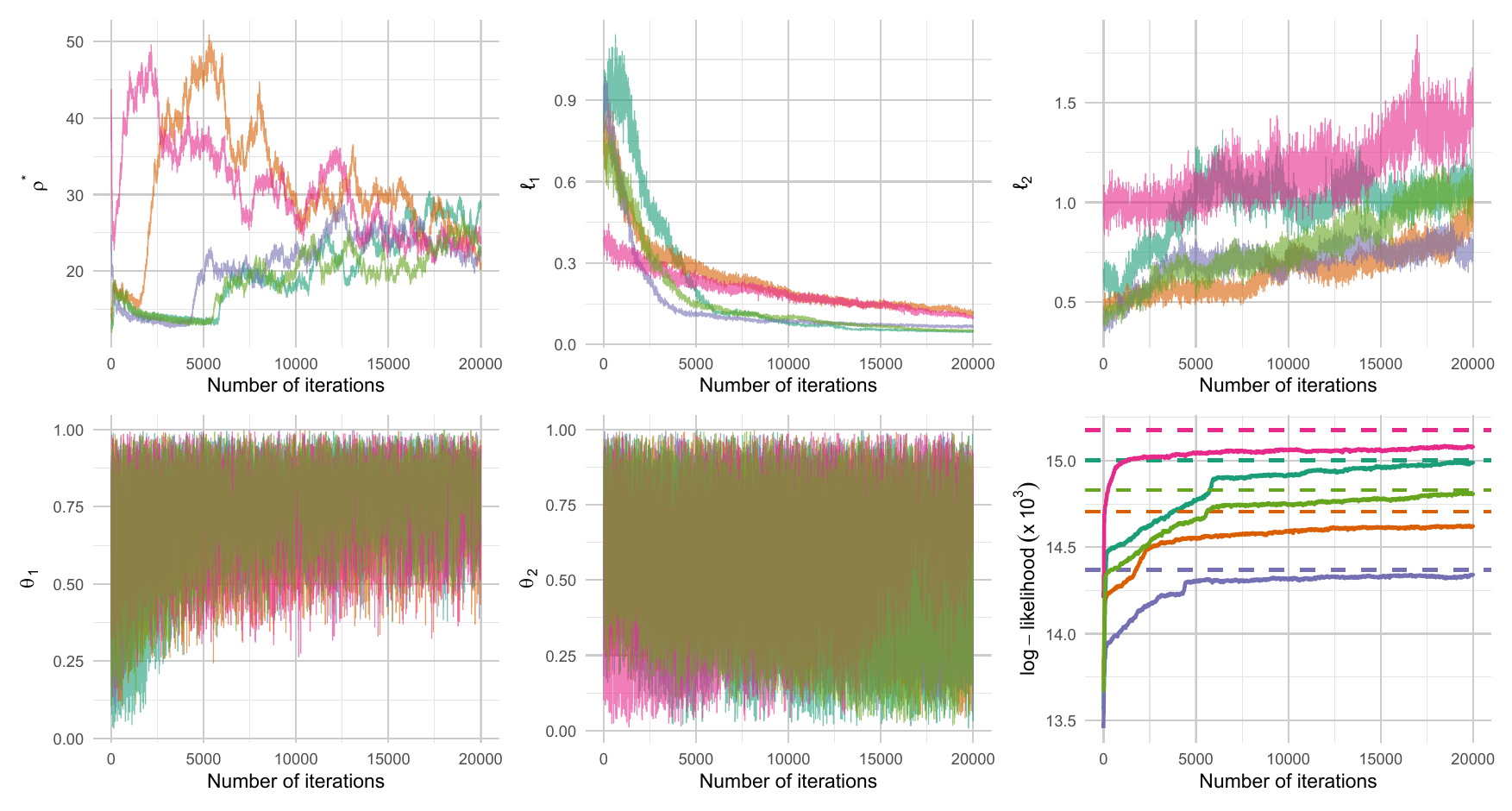}
	\caption{Left to right, top to bottom: Trace-plots over $20000$ steps of the Metropolis-with-Gibbs MCMC algorithm for various parameters and for the log-likelihood (last panel; dashed lines indicate the log-likelihood of the ground truth), in the two-dimensional scenario described in Section \ref{Subsec:2DExp}. Different colors refer to different experiments. The sample size is $n=1000$ across all experiments.}
	\label{Fig:Trace2D1}
\end{figure}

We conclude with a brief comparison of the last set of runs to those relative to the bi-variate numerical simulation studies with isotropic ground truth $\rho_0$ from \eqref{Eq:2DTruth3}; see Section \ref{S:Sec:More2DExp}. In this case, the posterior distributions of the length-scale parameters $\ell_1,\ell_2$ appear to concentrate over the same region, as shown by the trace-plots reported in Figure \ref{Fig:Trace2D_1iso}. In line with the theoretical findings from Section \ref{Subsec:Theory}, which provide optimal posterior contraction rates also in the case of isotropic true intensities, this illustrates the ability of the proposed methods to flexibly adapt to the the intrinsic structural features of the ground truth.

\begin{figure}[H]
	\centering
	\includegraphics[width=\textwidth]{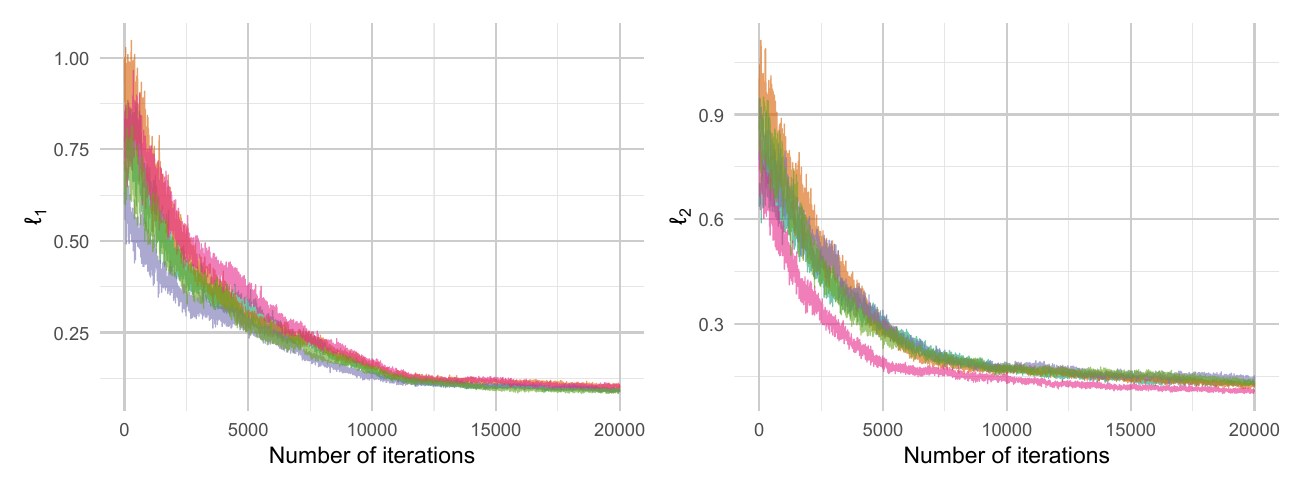}
	\caption{Left to right, top to bottom: Trace-plots over $20000$ steps of the Metropolis-with-Gibbs MCMC algorithm for the length-scale parameters $\ell_1,\ell_2$, in the two-dimensional scenario with ground truth $\rho_0$ from \eqref{Eq:2DTruth3} described in Section \ref{Subsec:2DExp}. Different colors refer to different experiments. The sample size is $n=1000$ across all experiments.}
	\label{Fig:Trace2D_1iso}
\end{figure}

%
%
%
%
%

%%%%%%%%%%%%%%%%%%%%%%%%%%%%%%%%%%%%%%%%%%
\section{Expanded applications to the Canadian wildfire dataset}
\label{Sec:MoreRealData}

In this appendix, we expand the application to the Canadian wildfire dataset developed in Section \ref{Sec:RealData}. We present additional analyses for the province of Ontario, cf.~Sections \ref{Subsec:1DOntario} and \ref{Subsec:3DOntario}, and report the obtained plug-in posterior means of the yearly spatial intensity for a broader selection of years. Moreover, we repeat the workflow for the provinces of Saskatchewan, in the central region of Canada, and British Columbia, on the Western coast.

%
%
%

%%%%%%%%%%%%%%%%%%%%%%%%%%%%%%%%%%%%%%%%%%
\subsection{Further results for the Ontario dataset}

In addition to the exploratory univariate analysis from Section \ref{Subsec:1DOntario} and the full one from Section \ref{Subsec:3DOntario}, we also fit a model jointly based on the temperature and the precipitation level, which our investigations, in accordance with the literature, e.g.~\cite{BGMMM20}, indicate as the two meteorological factors with the greatest influence on the risk of wildfires. In Figure \ref{Fig:2DOntario}, we plot the obtained posterior mean (in the central panel) and averaged kernel estimate (on the right). These broadly agree in shape and magnitude, placing greater intensities in correspondence of higher temperatures and drier conditions. These findings are similar to the ones from the full analysis from Section \ref{Subsec:3DOntario}, cf.~Figure \ref{Fig:3DOntarioMarginals}, where the inclusion of the average wind speed as an additional covariate was observed to impact the overall intensity level, but to generally preserve the distribution of the risk across the covariate space.

\begin{figure}[H]
	\centering
	\includegraphics[width=\textwidth]{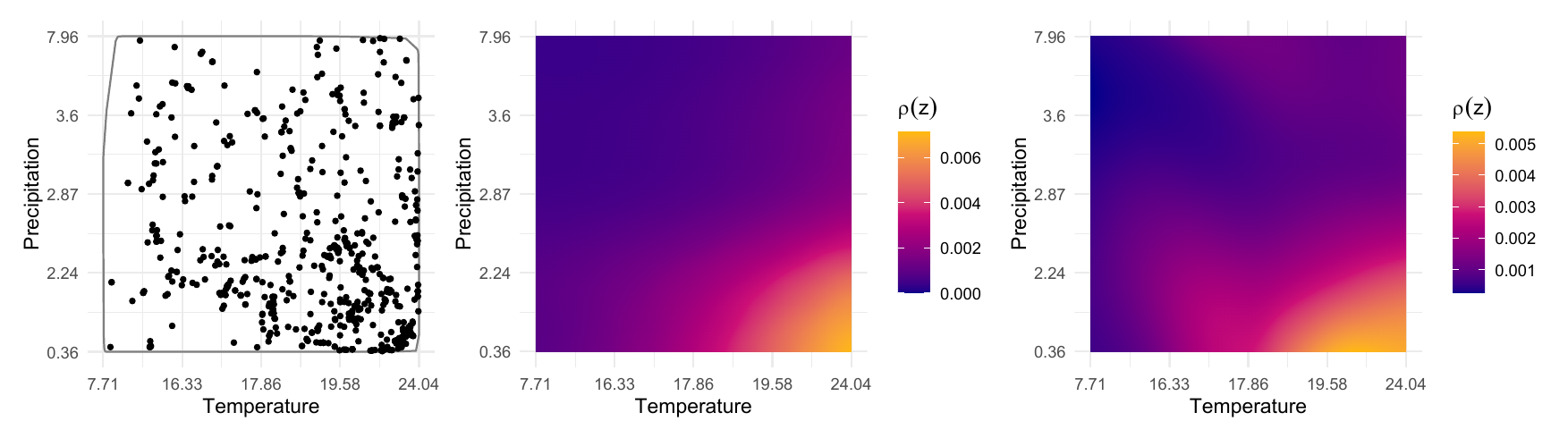}
	\caption{Left panel: Average temperatures and precipitation levels at each location in Ontario where a wildfire has been detected in the considered time period. Central panel: Posterior mean of the wildfire intensity as a function of the two covariates. Right panel: Averaged kernel estimate.}
	\label{Fig:2DOntario}
\end{figure}

%

\begin{comment}
	
	Figure \ref{Fig:2DOntarioSpatial} shows the corresponding spatial plug-in posterior means for the years 2013, 2015 and 2021. Compared to Figure \ref{Fig:1DOntarioSpatial}, the inclusion of the average precipitation level as an additional covariate is seen to lead to a more accurate reconstruction of the yearly point patterns.
	
	\begin{figure}[H]
		\centering
		\includegraphics[width=\linewidth]{}
		\caption{Left to right: Plug-in posterior means of the spatial intensity as a function of the location-specific average temperature and precipitation level, for the years $2013, 2015$ and $2021$, respectively.}
		\label{Fig:2DOntarioSpatial}
	\end{figure}
	
\end{comment}

%

Returning to the full analysis based on temperature, precipitation level and wind speed, cf.~Section \ref{Subsec:3DOntario}, in Figure \ref{Fig:3DOntationSpatial1} we display six additional plug-in posterior means of the yearly spatial intensity for a broader selection of years across the time period, specifically for 2006, 2008, 2010, 2016, 2018, 2022. Note that, for visual clarity of the individual plots, the color scales differ across the panels. Years with a small number of wildfires, like 2008 and 2010 (second and third panel, respectively), are generally assigned low intensities, with local peaks possibly associated with events. 
%The South-Estern peninsula stretching between Lake Ontario, Lake Erie and Lake Huron, despite its hot and humid summers, is consistently estimated to have a low intensity. This is probably due to the dense urbanization and extensive cultivation (major Canadian cities like Toronto and Hamilton are located in this peninsula, which is also home to the Niagara Falls) which limit fire outbreaks and their diffusion. 
\begin{figure}[H]
	\centering
	\includegraphics[width=\textwidth]{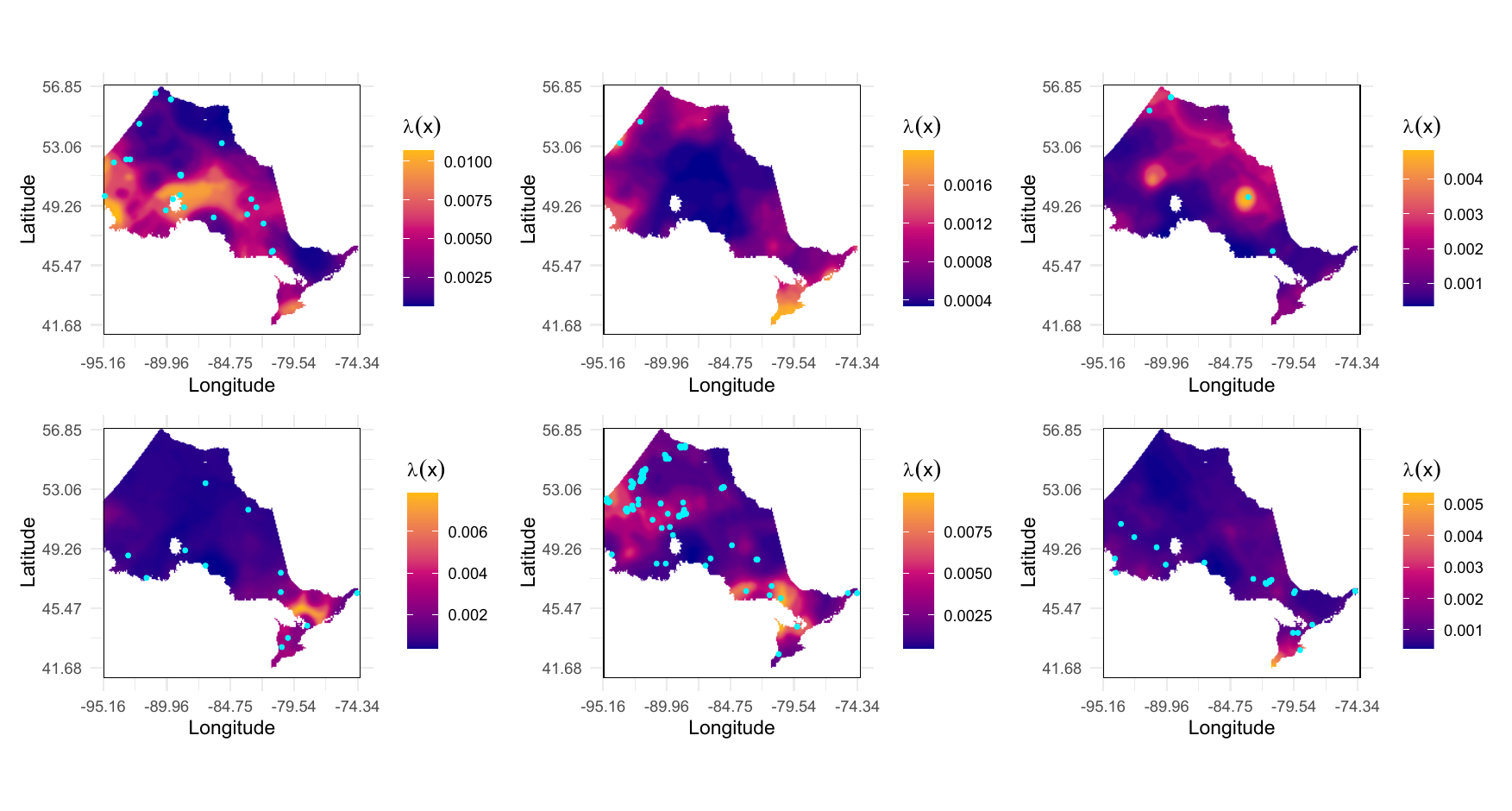}
	\caption{Left to right, top to bottom: Plug-in posterior means of the spatial intensity as a function of the location-specific average temperature, precipitation level and wind speed in Ontario, for the years 2006, 2008, 2010, 2016, 2018, 2022.}
	\label{Fig:3DOntationSpatial1}
\end{figure}

\begin{comment}
	\begin{figure}[H]
		\centering
		\includegraphics[width=\linewidth]{}
		\caption{Left to right: Plug-in posterior means of the spatial intensity as a function of the location-specific average temperature, precipitation level and wind speed, for years from $2004$ to $2009$.}
		\label{Fig:3DOntationSpatial1}
	\end{figure}
	
	\begin{figure}[H]
		\centering
		\includegraphics[width=\linewidth]{}
		\caption{Left to right: Plug-in posterior means of the spatial intensity as a function of the location-specific average temperature, precipitation level and wind speed, for the years $2010, 2011, 2012, 2014, 2016$ and $2017$, respectively.}
		\label{Fig:3DOntationSpatial2}
	\end{figure}
	\begin{figure}[H]
		\centering
		\includegraphics[width=\linewidth]{}
		\caption{Left to right: Plug-in posterior means of the spatial intensity as a function of the location-specific average temperature, precipitation level and wind speed, for years between $2018$ and $2020$ and $2022$ respectively.}
		\label{Fig:3DOntationSpatial3}
	\end{figure}
\end{comment}

%
%
%

%%%%%%%%%%%%%%%%%%%%%%%%%%%%%%%%%%%%%%%%%%
\subsection{Results for the Saskatchewan datasets}

The dataset for the provinces of Saskatchewan and British Columbia are structured similarly to the one for Ontario described in Section \ref{Sec:RealData}, each comprising $n=19$ spatial point patterns with the aggregate locations of wildfires detected in June over the time period from 2004 to 2022, and as many tri-dimensional spatial covariate fields with the coordinate-specific average temperatures, precipitation levels and wind speeds.

An illustration of the data for Saskatchewan is presented in Figure \ref{Fig:CovSaksa}. Similar to Section \ref{Sec:RealData}, we observe some strong variability in the yearly number of events, as well as in the range and fluctuations of the covariates.
\begin{figure}[H]
	\centering
	\begin{subfigure}[b]{\textwidth}
		\centering
	\includegraphics[width=\textwidth]{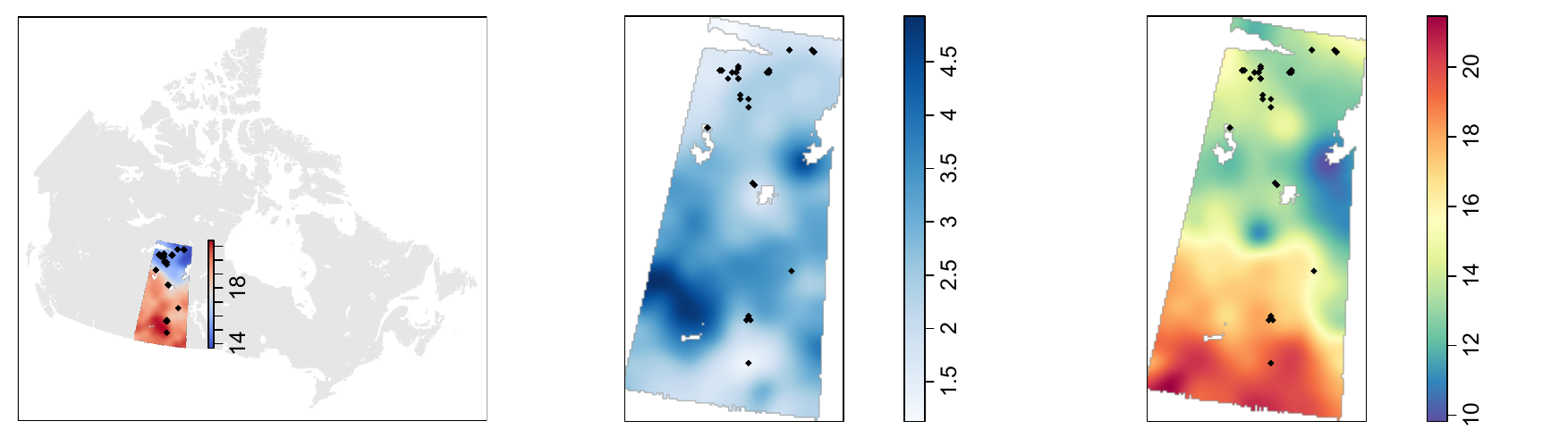}
	\end{subfigure}
	\begin{subfigure}[b]{\textwidth}
		\centering
	\includegraphics[width=\textwidth]{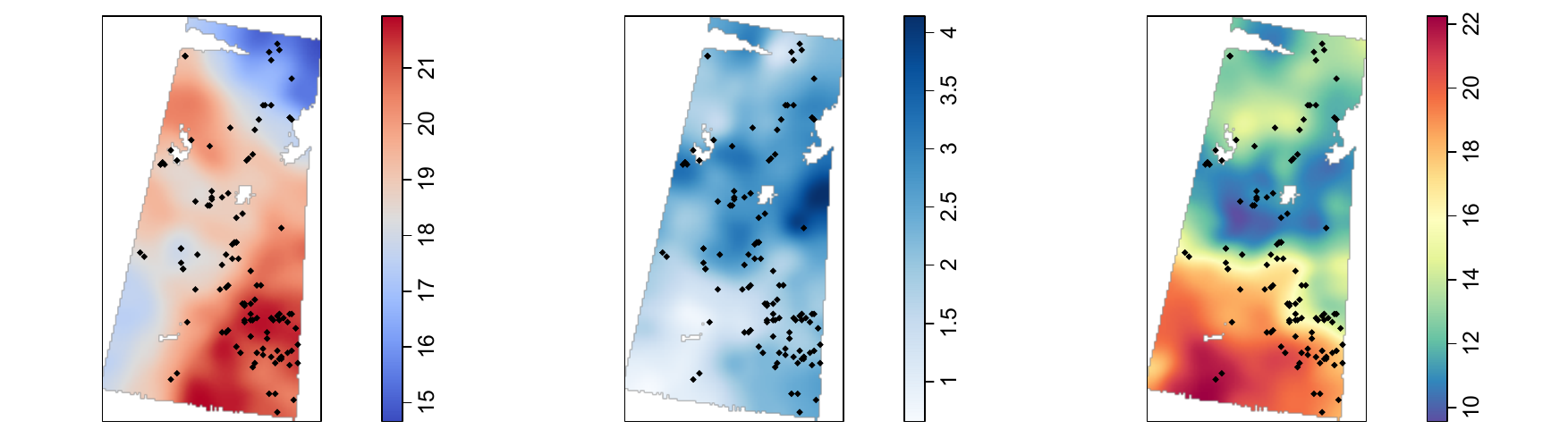}
	\end{subfigure}
	%    \begin{subfigure}[b]{\textwidth}
		%    \centering
		%    \includegraphics[width=\linewidth]{figures/wind_ontario_orig.pdf}
		%    \end{subfigure}
	\caption{Top row, left to right: Average temperatures (in Celsius), precipitations (in $\text{mm/m}^2$) and wind speeds (in km/h) in Saskatchewan during June 2013. Bottom row: Observations for 2021. The wildfires are represented by black dots (respectively, $40$, and $114$ in total).}   
	\label{Fig:CovSaksa}
\end{figure}

We again first perform a preliminary analysis, separately studying the influence of each individual covariate on the wildfire intensity. The results are shown in Figure \ref{Fig:1DSaska}. Consistently with the behavior observed in Section \ref{Sec:RealData}, the temperature-based posterior mean displays a strong positive association, with a sharp raise between $16^\circ$C and $25^\circ$C. Also, a heavy negative impact of rains, particularly above 1 $\text{mm/m}^2$, is again captured. Minor differences emerge for the wind speeds, where a peak is located around $13$ km/h, similarly to Figure \ref{Fig:1DOntario} (right panel), but overall higher intensities are detected in the left tail than in the right one. This suggests the presence of some potential heterogeneity in the way in which wind speeds effect the risk of wildfires across different regions in Canada. The results for the full analysis, based on the joint information on all three covariates, are also in line with those presented in Section \ref{Subsec:3DOntario}. For brevity, we only display the obtained spatial plug-in posterior means, for the same selection of years 2006, 2008, 2010, 2013, 2015, 2016, 2018, 2021 and 2022. See Figure \ref{Fig:3DsasSpatial}. %These appear to effectively capture the yearly structure of the wildfire distribution in Saskatchewan.

\begin{figure}[H]
	\centering
	\includegraphics[width=\textwidth]{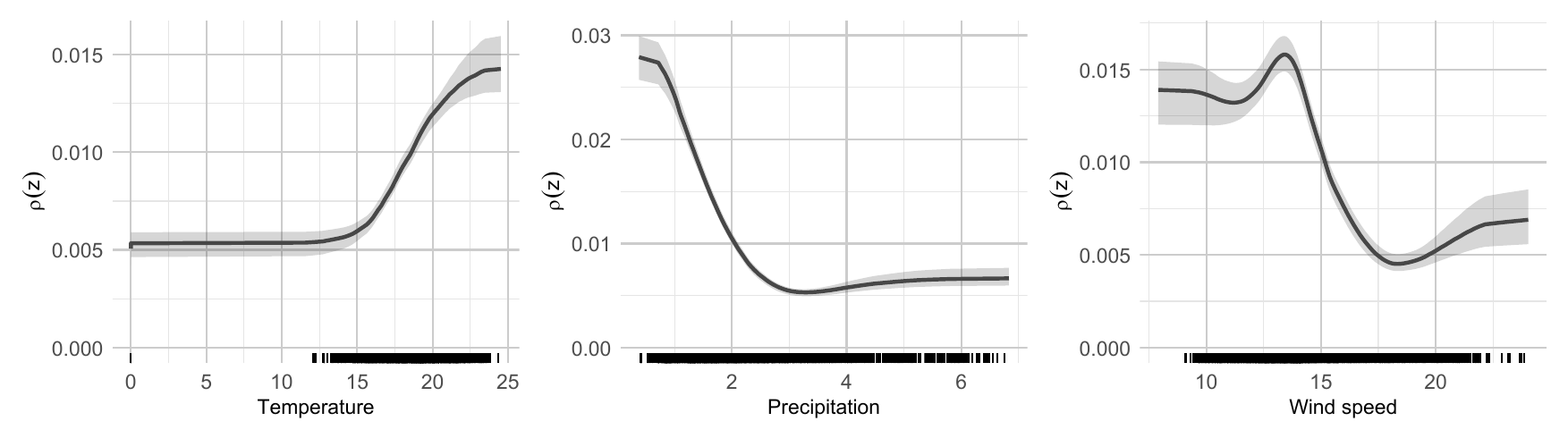}
	\caption{
		Left to right: Posterior means (solid black) of the wildfire intensity as a function of the average temperature, precipitation level and wind speed, respectively, in Saskatchewan. The shaded regions indicate point-wise $95\%$-credible intervals.}
	\label{Fig:1DSaska}
\end{figure}

\begin{comment}
	Consistency of results across regions (see also the next Section) for temperatures and precipitation levels suggests spatial homogeneity of the underlying covariate processes. This ergodic property, in turn, would justify extending the analysis to a larger observation window, e.g. the whole country of Canada, thereby falling within the large-domain asymptotic regime discussed by \cite{GKR25}. It is worth noting that the posterior sampling algorithm developed in this article can also accommodate the latter case, by just considering $n = 1$ replicate.    
\end{comment}

\begin{figure}[H]
	\centering
	\includegraphics[height=13cm]{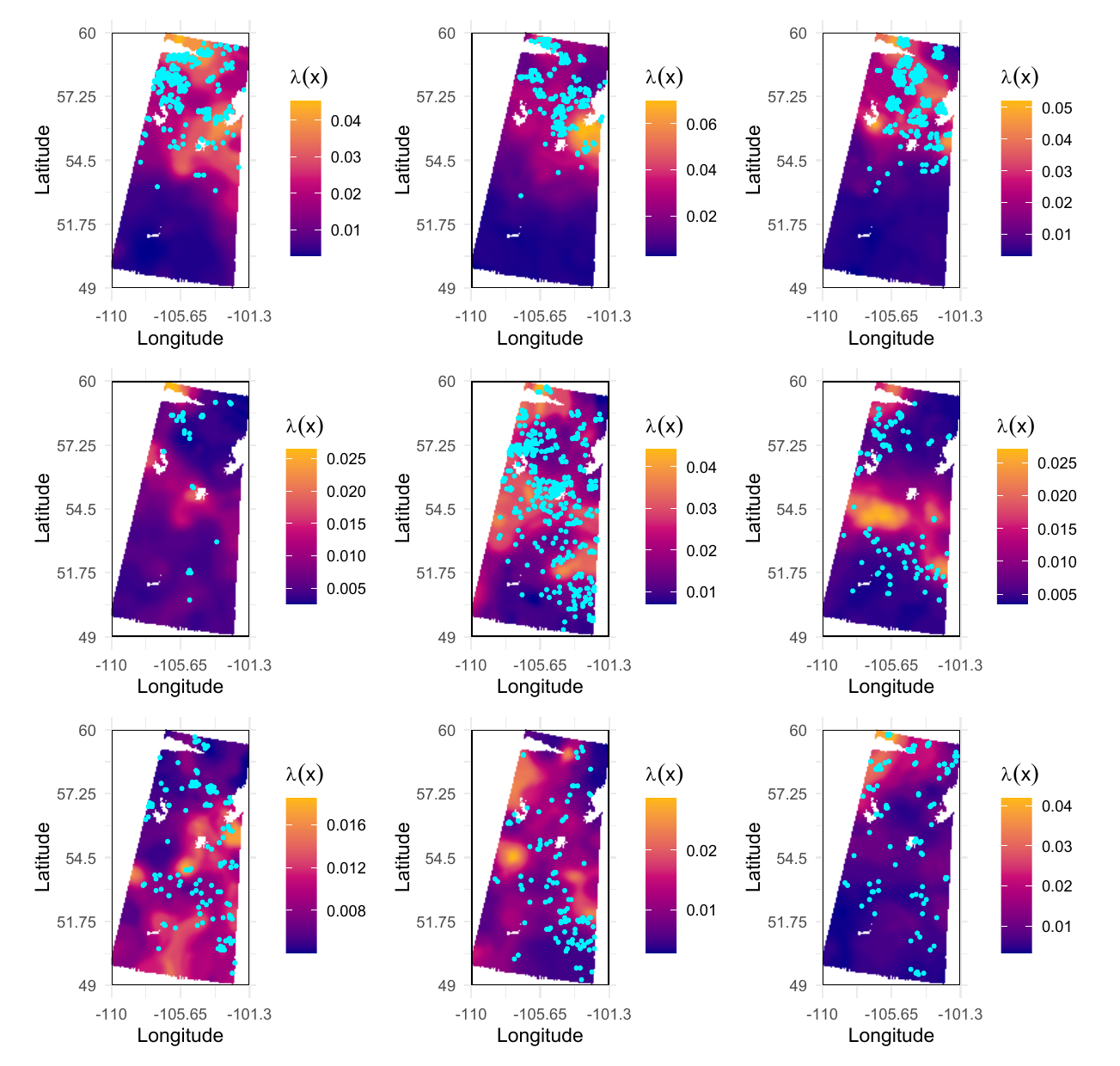}
	\caption{Left to right: Plug-in posterior means of the spatial intensity in Saskatchewan based on average temperature, precipitation level and wind speed, for the years 2006, 2008, 2010, 2013, 2015, 2016, 2018, 2021 and 2022.}
	\label{Fig:3DsasSpatial}
\end{figure}

%
%
%

%%%%%%%%%%%%%%%%%%%%%%%%%%%%%%%%%%%%%%%%%%
\subsection{Results for the British Columbia dataset}

We conclude with a summary of the obtained results for the British Columbia dataset. Figure \ref{Fig:CovBT} showcases two individual observations of the events and covariates. The exploratory posterior means individually based on each covariate  are displayed in Figure \ref{Fig:1DBTColumbia}, closely aligned to ones relative to the other two provinces, cf.~Figures \ref{Fig:1DOntario} and \ref{Fig:1DSaska}. The plug-in posterior means for the yearly spatial intensity are shown in Figure \ref{Fig:3DBTSpatial3}, resulting from the full analysis based on the joint meteorological information.

%

\begin{comment}The data displays some strong variability, with the number of wildfires ranging from 4 (in June 2012) to 205 (in 2004) and covariate values covering a wide spectrum. Average temperatures range from $4.68^\circ$C to $22.75^\circ$C, the average precipitation levels from $0.09$ $\text{mm/m}^2$ to $8.06$ $\text{mm/m}^2$, and the average wind speeds from $4.24$ km/h to $21.87$ km/h. See Figure \ref{Fig:CovBT} for distinctive fluctuations across different length-scales for the three covariate fields.  
\end{comment}

%

%

\begin{figure}[H]
	\centering
	\begin{subfigure}[b]{\textwidth}
		\centering
		\includegraphics[width=\textwidth]{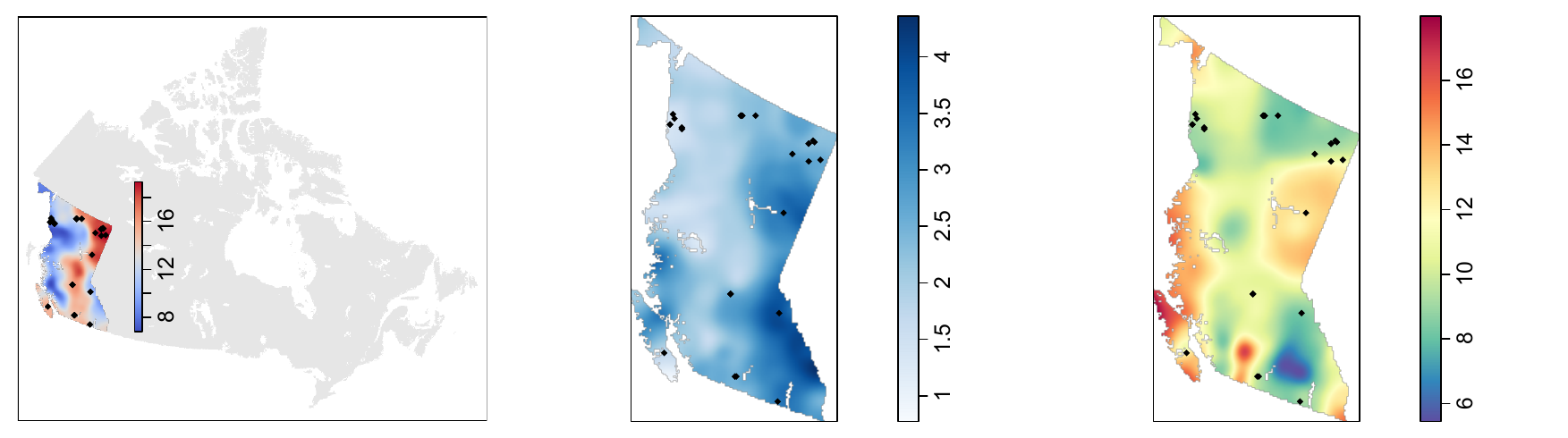}
	\end{subfigure}
	\begin{subfigure}[b]{\textwidth}
		\centering
		\includegraphics[width=\textwidth]{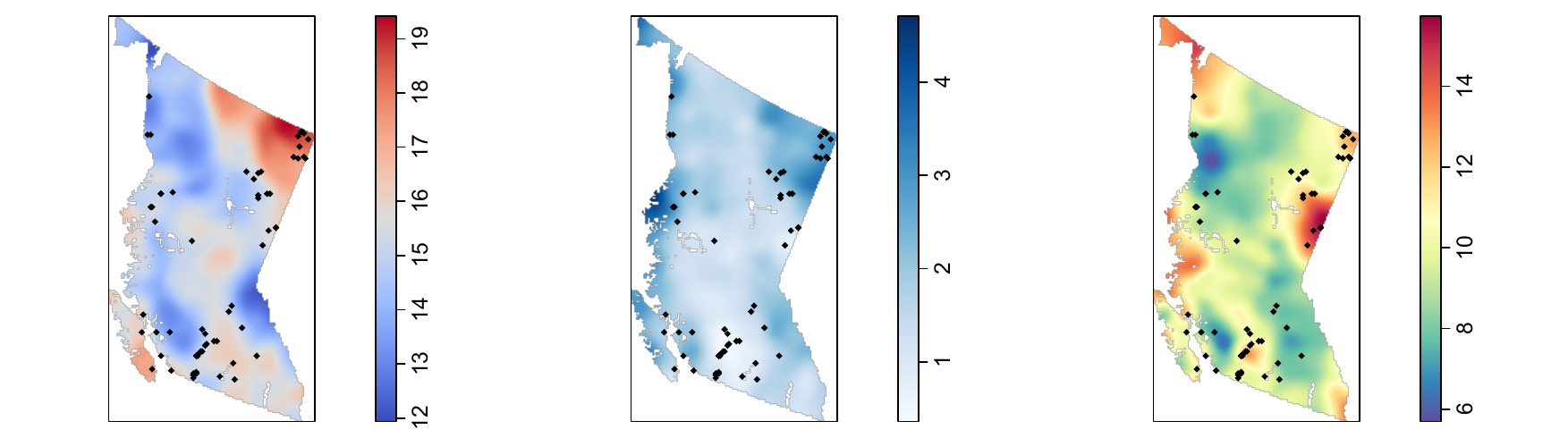}
	\end{subfigure}
	%    \begin{subfigure}[b]{\textwidth}
		%    \centering
		%    \includegraphics[width=\linewidth]{figures/wind_ontario_orig.pdf}
		%    \end{subfigure}
	\caption{Top row, left to right: Average temperatures (in Celsius), precipitations (in $\text{mm/m}^2$) and wind speeds (in km/h) in British Columbia during June 2013. Bottom row: Observations for 2021. The wildfires are represented by black dots (respectively, $28$, and $64$ in total). }     
	\label{Fig:CovBT}
\end{figure}

\begin{figure}[H]
	\centering
	\includegraphics[width=\textwidth]{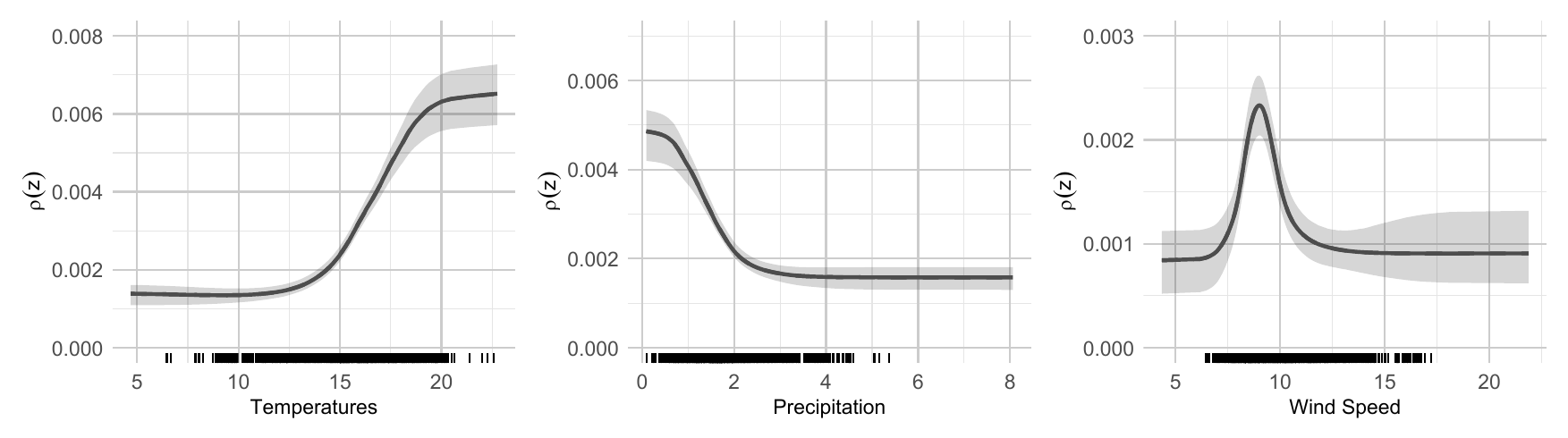}
	\caption{
		Left to right: Posterior means (solid black) of the wildfire intensity as a function of the average temperature, precipitation level and wind speed, respectively, in British Columbia. The shaded regions indicate point-wise $95\%$-credible intervals.}
	\label{Fig:1DBTColumbia}
\end{figure}

\begin{figure}[t]
	\centering
	\includegraphics[width=\textwidth]{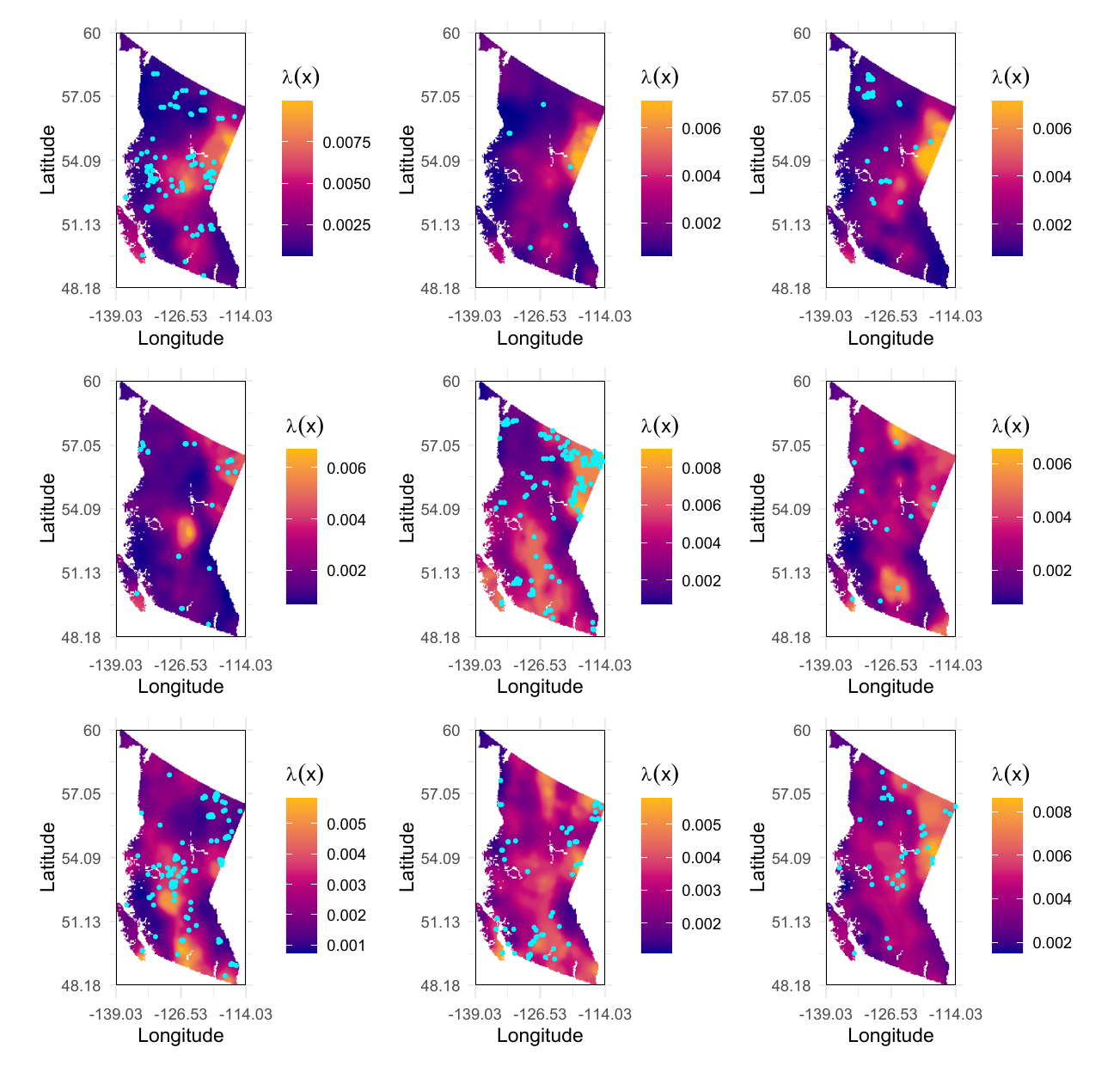}
	\caption{Left to right: Plug-in posterior means of the spatial intensity in British Columbia as a function of the location-specific average temperature, precipitation level and wind speed, for the years 2006, 2008, 2010, 2013, 2015, 2016, 2018, 2021 and 2022.}
	\label{Fig:3DBTSpatial3}
\end{figure}

\end{document}